\newcommand{\titlebox}[6]{
    \begin{center}
        \framebox{
            \vbox{
            \hbox to 5.78in { {\bf #1} \hfill #2}
            \vspace{4mm}
            \hbox to 5.78in { {\Large \hfill #3 \hfill} }
            \vspace{2mm}
            \hbox to 5.78in { {\it #4 \hfill Professor: #5} }
        }
    }
    \end{center}
}
\newcounter{theoremC} % Modified Nov 11 2008 so that everything --
\newtheorem{theorem}[theoremC]{Theorem}
\newtheorem{definition}[theoremC]{Definition}
\newtheorem{lemma}[theoremC]{Lemma}
\newtheorem{corollary}[theoremC]{Corollary}
\newtheorem{claim}[theoremC]{Claim}
\newtheorem{remk}[theoremC]{Remark}
\newtheorem{exmp}[theoremC]{Example}
\newtheorem{Rabin-algorithm}[theoremC]{Algorithm}
\newenvironment{remark}{\begin{remk}
\begin{normalfont}}{\end{normalfont}
\end{remk}}
\def\FullBox{\hbox{\vrule width 8pt height 8pt depth 0pt}}
\def\qed{\ifmmode\qquad\FullBox\else{\unskip\nobreak\hfil
\penalty50\hskip1em\null\nobreak\hfil\FullBox
\parfillskip=0pt\finalhyphendemerits=0\endgraf}\fi}
\def\qedsketch{\ifmmode\Box\else{\unskip\nobreak\hfil
\penalty50\hskip1em\null\nobreak\hfil$\Box$
\parfillskip=0pt\finalhyphendemerits=0\endgraf}\fi}
\newenvironment{proof}{\begin{trivlist} \item {\bf Proof:~~}}
  {\qed\end{trivlist}}
\newenvironment{proofof}[1]{\begin{trivlist} \item {\bf Proof of 
#1:~~}}
  {\qed\end{trivlist}}
\newcommand{\N}{{\mathbb{N}}}
\newcommand{\Z}{{\mathbb Z}}
\newcommand{\zo}{\{0,1\}}
\DeclareMathOperator{\comp}{-COMP}
\DeclareMathOperator{\ind}{-IND}
\DeclareMathOperator{\minaxiom}{-MIN}
\newenvironment{preliminary}%
  {\pagestyle{plain}\pagenumbering{roman}}
  {\cleardoublepage\pagenumbering{arabic}}
\newcommand{\parityL}{{\oplus L}}
\newcommand{\seq}{\operatorname{\it seq}}
\newcommand{\Pair}{\operatorname{\it Pair}}
\newcommand{\Row}{\operatorname{\it Row}}
\newcommand{\Rowtwo}{\operatorname{\it Row_2}}
\newcommand{\entry}{\operatorname{\it entry}}
\newcommand{\Powtwo}{\operatorname{\it Pow}_2}
\newcommand{\PowSeqtwo}{\operatorname{\it PowSeq}_2}
\newcommand{\PowSeqtwostar}{\operatorname{\it PowSeq}_2^\star}
\newcommand{\Prodtwo}{\operatorname{\it Prod}_2}
\newcommand{\numones}{\operatorname{\it numones}}
\newcommand{\bin}{\operatorname{\it bin}}
\newcommand{\Carry}{\operatorname{\it Carry}} % for section #L
\newcommand{\Borrow}{\operatorname{\it Borrow}}
\newcommand{\Convert}{\operatorname{\it Convert}}
\newcommand{\intsize}{\operatorname{\it intsize}}
\newcommand{\PowZ}{\operatorname{\it Pow}_\Z}
\newcommand{\PowSeqZ}{\operatorname{\it PowSeq}_\Z}
\newcommand{\PowSeqZstar}{\operatorname{\it PowSeq}_\Z^\star}
\newcommand{\ProdZ}{\operatorname{\it Prod}_\Z}
\newcommand{\Sum}{\operatorname{\it Sum}}
\newcommand{\Sumz}{\operatorname{\it Sum_\Z}}
\renewcommand{\labelenumi}{(\roman{enumi})}
\begin{document}

\begin{titlepage}
  \vspace*{\fill}
  \begin{center}
    \begin{huge}      { Formal Theories for Logspace Counting }    \end{huge}  \\
    \vspace{1in}
    \begin{Large}      Lila A. Fontes    \end{Large}  \\
    \vspace{1in}
    Advisor: Stephen A. Cook \\
    \vspace{.75in}
    {\sc Research paper submitted in partial fulfillment \\
      of the requirements for the degree of \\
      Master of Science \\ 
    \vspace{.75in}
    Department of Computer Science \\
    University of Toronto \\
    \vspace{.75in}
    Toronto, Ontario \\
    March 19, 2009 }
%    April 21, 2009 } 
%    \today  }
  \end{center}
  \vspace{1in}
\end{titlepage}

\begin{preliminary}

\tableofcontents

\thispagestyle{empty}

\section*{Acknowledgements}
\addcontentsline{toc}{section}{Acknowledgements}

\bigskip

I would like to thank my advisor, Stephen A. Cook, for 
suggesting this project, and for his colossal help, enthusiasm, 
and advising over the course of my research.  This project would not
have been possible without the foundations he and Phuong Nguyen layed
in their book {\it Logical Foundations of Proof Complexity}
\cite{CookNguyen}, and without months of conversations as I 
digested that material and began to see how I could build this small
extension of it.

\bigskip

Thanks also to Eric Allender and Michael Soltys, who provided 
insights into the problem of $AC^0$-closure and reductions for the
class $DET$, and pointed towards many useful references on the
same. 

\bigskip

Many thanks for a lifetime of support and encouragement 
are due to my parents and grandparents.

\bigskip

\noindent March 19, 2009 \hfill Lila A. Fontes

\end{preliminary}

\section{Introduction}
\label{introduction} 

This paper follows the framework of Chapter 9 of Cook and Nguyen's
basic monograph on proof complexity \cite{CookNguyen}.  
Therein, the authors establish a general method for constructing a
two-sorted logical theory that formalizes reasoning using concepts
from a given complexity class.
This logical theory extends their base theory $V^0$ for $AC^0$ by the
addition of a single axiom.
The axiom states the existence of a solution for a complete
problem for a closed complexity class.
Here, completeness and closure are with respect to $AC^0$-reductions,
and in order to use the methods presented in Chapter 9 and earlier
chapters, we will have to establish that our classes are closed under
such reductions. 

We focus on the classes $\# L$ and $\parityL$, logspace
counting and parity (counting mod $2$), respectively.  
Obviously, $L \subseteq \parityL \subseteq \#L$.  In fact, $\#L$
contains the class $\operatorname{MOD}_kL$ for every $k$, and $\#L
\subseteq NC^2 \subseteq P$.  
The logspace counting class $\#L$ is defined from $L$ by 
analogy to $\#P$ and $P$.  
Both counting classes have nice 
complete problems: the problem of computing the permanent of a 
matrix is complete for $\#P$, and the problem of computing the 
determinant of a matrix is complete for $\#L$.
However,
$\#L$ is not known to be closed under several standard notions of 
reduction.  The logspace counting hierarchy $\#LH$ is defined as 
$\#LH_1 = \#L$, $\#LH_{i+1} = \#L^{\#LH_i}$, and $\#LH =
\bigcup_i \#LH_i$.  There are several names for the closure
under consideration: 
$$AC^0(\#L) = \#LH = DET$$
It is unknown whether this hierarchy collapses, though Allender 
shows that $AC^0(\#L) = NC^1(\#L)$ is a sufficient condition 
for this collapse \cite{All04}. 

Since the class $\# L$ is not known to be closed under
$AC^0$-reductions, we consider 
its $AC^0$ closure $AC^0(\#L)$, denoted $DET$.
In renaming this class, we follow the notation of \cite{All04},
\cite{MV}, and others.  
This class is suggestively named: one of its $AC^0$-complete
problems is that of finding the determinant of an integer-valued
matrix. 
(This should not be confused with the notation from Cook's survey
\cite{Cook}, which considers $NC^1$-reductions.)
$\parityL$ is appropriately closed, and the fact that function values
are mod $2$ allows for a convenient notational shortcut not
available in $\# L$: each number can be stored in just 
one bit of a bit string.  
Section \ref{section:DET} develops the theory for $\#L$.
Section \ref{section:2L} develops the theory for $\parityL$.

Following the format of Chapter 9 of \cite{CookNguyen}, a class is
defined in each section 
and shown to be closed under $AC^0$-reductions.  
Next, a complete problem is demonstrated and formalized as an axiom.
(Because the classes are closely related, we use the same problem ---
matrix powering over the appropriate ring --- for both.)  
$VTC^0$ extended by this axiom forms the theory $V\#L$ (respectively,
$V^0(2) \subset V{{\parityL}}$) with vocabulary $\mathcal{L}_A^2$, the
basic language of 
2-sorted arithmetic.  
Following this, we develop a universal conservative extension
$\overline{V\#L}$ (resp. $\overline{V{\parityL}}$), in which every
string function has a symbol, with extended language
$\mathcal{L}_{F\#L}$ (resp. $\mathcal{L}_{F{\parityL}}$).  
By general results from Chapter 9 of \cite{CookNguyen}, the provably total functions of
$V\#L$ are exactly the functions of class $\#L$ (and similarly for
${\parityL}$).  
These functions
include many standard problems of linear algebra over the rings $\Z$
and $\Z_k$ (\cite{BDHM}, \cite{Cook}, \cite{BvGH}, \cite{Berkowitz}),
as discussed in Section \ref{section:future-work}.

% Introduction to the paper.

\section{Notation} \label{notation}

In this section, we restate useful number and string functions from
\cite{CookNguyen}.  The conventional functions with which
we manipulate bit-strings and numbers are so pervasive in the
discussion to follow that they merit their own notation.  We also
extend these functions with new notation, which will be helpful in the
presentation of formal theories in later sections.

Chapter 4 of \cite{CookNguyen} defines the two-sorted first-order
logic of our theories.  There are two kinds of variables (and
predicates and functions): {\it number}
variables, indicated by lowercase letters $x$, $y$, $z$, \ldots, and
{\it string} variables, indicated by uppercase letters  $X$, $Y$, $Z$,
\ldots.  Strings can be 
interpreted as finite subsets of $\N$.  Predicate symbols $P$, $Q$,
$R$, \ldots, can take arguments of both sorts, as can function
symbols. 
Function symbols are differentiated by case: $f$, $g$, $h$, \ldots,
are number functions, and $F$, $G$, $H$, \ldots, are string
functions.

\subsection{Strings encoding matrices and lists}
\label{subsection:str-mat-lists}

Section 5D of \cite{CookNguyen} provides useful notation for encoding a 
$k$-dimensional bit array in a string $X$.  For our purposes below, it
is useful to extend this notation, so that a string $X$ may encode a
$2$-dimensional array of strings.

\cite{CookNguyen} defines the $AC^0$ functions $\operatorname{\it left}$, $\operatorname{\it right}$, $\seq$,
and $\langle \cdot, \cdot \rangle$, and the $AC^0$ relations $\Pair$
and $\Row$, as follows.

The pairing function $\langle x,y\rangle$ is defined as
$$ \langle x , y \rangle  = (x+y)(x+y+1) + 2y $$
This function can be chained to ``pair'' more than two numbers:
$$ \langle x_1, x_2, \ldots, x_k \rangle = 
\langle \langle x_1, \ldots, x_{k-1}\rangle, x_k\rangle$$
Inputs to the pairing function can be recovered using the projection
functions $\operatorname{\it left}$ and $\operatorname{\it right}$:
$$ y = \operatorname{\it left}(x) \leftrightarrow \exists z \leq x (x = \langle
y,z\rangle)
\hspace{1cm}
z=\operatorname{\it right}(x) \leftrightarrow \exists y \leq x (x = \langle y,z\rangle)$$
By definition, $\operatorname{\it left}(x)=\operatorname{\it right}(x) = 0$ when $x$ is not a pair number,
i.e., when $\neg \Pair(x)$, where $\Pair(x) \equiv \exists y,z \leq x (x
= \langle y,z\rangle)$.

Thus a $k$-dimensional bit array can be encoded in string $X$ by:
$$ X(x_1,\ldots,x_k) = X(\langle x_1,\ldots, x_k\rangle)$$

The $\Row$ function is bit-defined as:
$$ \Row(x,Z)(i) \leftrightarrow i < |Z| \wedge Z(x,i)$$
For notational convenience, we write $\Row(x,Z) = Z^{[x]}$.  This can be used to
encode a $1$-dimensional array of $j$ strings $X_1, \ldots, X_j$ in a
single string $Z$, where $X_i = Z^{[i]}$.

It will be useful in Section \ref{Row_2_usage} to compose the $\Row$
function so that a single string encodes 
a 2-dimensional array of strings $X_{i,j}$.  First encode each row $i$
as a string $Y_i$, where $Y_i^{[j]} = X_{i,j}$.  Then encode the list
of strings $Y_i$ as a $1$-dimensional array of strings $Z^{[i]} =
Y_i$.  The resultant string encodes a $2$-dimensional array of
strings.
Let $\Rowtwo(x,y,Z) = Z^{[x][y]}$ represent the string in the
$(x,y)^\textrm{th}$ position of the matrix of strings encoded in $Z$.
$\Rowtwo$ has bit definition:
\begin{equation} \label{Row2}
\Rowtwo(x,y,Z)(i) \leftrightarrow i<|Z| \wedge \Row(x,Z)(y,i)
\end{equation}

Using the $\Row$ Elimination Lemma (5.52 in \cite{CookNguyen}), we can obtain a
$\Sigma_0^B(\mathcal{L}_A^2)$ formula provably equivalent in
$V^0(\Row)$ to (\ref{Row2}).  By Corollary 5.39 in \cite{CookNguyen}, $V^0(\Rowtwo)$ is a
conservative extension of $V^0$.  Also, using the
$\Sigma_0^B$-Transformation Lemma (5.40 in \cite{CookNguyen}), we can obtain an analogous
$\Rowtwo$ elimination lemma.

\begin{lemma} % [$\Rowtwo$ Elimination Lemma]
  \label{lemma:Row_2_elimination} 
For every $\Sigma_0^B(\Rowtwo)$ formula $\varphi$, there is a
$\Sigma_0^B(\mathcal{L}_A^2)$ formula $\varphi^+$ such that
$V^0(\Rowtwo) \vdash \varphi \leftrightarrow \varphi^+$.
\end{lemma}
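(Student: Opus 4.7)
The plan is to follow the template suggested in the paragraph preceding the lemma: reduce the elimination of $\Rowtwo$ to the already-established elimination of $\Row$. The key observation is that the bit-defining formula (\ref{Row2}) for $\Rowtwo$ is itself a $\Sigma_0^B(\Row)$ formula, and by the $\Row$ Elimination Lemma (5.52 of \cite{CookNguyen}) it is provably equivalent in $V^0(\Row)$ to some $\Sigma_0^B(\mathcal{L}_A^2)$ formula $\psi(x,y,Z,i)$. Hence $\Rowtwo$ is essentially a string function introduced over $V^0$ by a $\Sigma_0^B(\mathcal{L}_A^2)$ bit definition, placing it in exactly the situation to which the standard elimination machinery applies.

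With $\psi$ in hand, I would proceed by structural induction on $\varphi$. The only nontrivial case is an atomic subformula of the form $\Rowtwo(s,t,T)(u)$ for number terms $s,t,u$ and a string term $T$. Replace each such atomic subformula with $\psi(s,t,T,u)$; iterating this yields a $\Sigma_0^B(\mathcal{L}_A^2)$ formula $\varphi^+$. The equivalence $\varphi \leftrightarrow \varphi^+$ is provable in $V^0(\Rowtwo)$ because (a) the defining axiom for $\Rowtwo$ is available there by construction, and (b) the equivalence of that defining formula with $\psi$ is a theorem of $V^0$ augmented with $\Row$, which is itself a defined operation and so can be treated within $V^0(\Rowtwo)$.

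The main obstacle is syntactic: naive substitution into a $\Sigma_0^B$ formula can create nested string-function terms, misnested bounded quantifiers, or string terms appearing in positions not allowed by the $\Sigma_0^B$ grammar. This is exactly what the $\Sigma_0^B$-Transformation Lemma (5.40 of \cite{CookNguyen}) exists to resolve. I would apply it first to bring $\varphi$ into a normal form in which every occurrence of $\Rowtwo$ appears as a top-level atomic predicate, so that substituting $\psi$ for those occurrences keeps the resulting formula in $\Sigma_0^B(\mathcal{L}_A^2)$. Once this normalization is in place, the rest of the argument runs in complete parallel to the proof of the $\Row$ Elimination Lemma itself and requires no new ideas beyond bookkeeping of terms.
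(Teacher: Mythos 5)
Your proposal matches the paper's own argument: the paper likewise obtains a $\Sigma_0^B(\mathcal{L}_A^2)$ equivalent of the bit-definition (\ref{Row2}) via the $\Row$ Elimination Lemma (5.52 of \cite{CookNguyen}) and then invokes the $\Sigma_0^B$-Transformation Lemma (5.40) to eliminate $\Rowtwo$ from arbitrary $\Sigma_0^B(\Rowtwo)$ formulas. Your added detail about structural induction and normalizing atomic occurrences before substitution is just the internal content of that Transformation Lemma, so the approach is essentially identical and correct.
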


Just as $\Row$ is used to extract a list of strings $Z^{[0]},Z^{[1]},
\ldots$, from $Z$, the number function $\seq$ enables string $Z$ to
encode a list of numbers $y_0$, $y_1$, $y_2$, \ldots, where $y_i =
\seq(i,Z)$.  We denote $\seq(i,Z)$ as $(Z)^i$.  The number function
$\seq(x,Z)$ has the defining axiom
$$ y = \seq(x,Z) \leftrightarrow
(y< |Z| \wedge Z(x,y) \wedge \forall z<y, \neg Z(x,z)) \vee
(\forall z<|Z|, \neg Z(x,z) \wedge y=|Z|)$$

%% It turned out this section -- encoding matrices of k dimensions --
%% was not necessary; I only needed 1- and 2-dimensional matrices.
%%
%% It would be convenient to have notation for an arbitrary number of
%% these nested lists of strings, i.e., for a $k$-dimensional array of
%% strings.  The cleanest way to achieve this is to use $k$ as an input
%% to the function.  For $y=\langle x_1, \ldots, x_k\rangle$, let the
%% function $Multirow(k,y,Z)$ denote the string in position
%% $(x_1,\ldots,x_k)$ of the $k$-dimensional string array encoded in
%% (very large) $Z$.  Using $\Row$, this function can be bit-defined as:
%% \begin{eqnarray}
%% Multirow(k,y,Z)(i) &\leftrightarrow &
%% \exists 0<j<k \big( k=j+1 \wedge
%%  \Row(\operatorname{\it right}(y), Multirow(j,\operatorname{\it
%%  left}(y),Z))(i) \big) \nonumber \\ 
%% && \vee \big(k=1 \wedge \Row(y,Z)(i) \big) \nonumber \\
%% && \vee \big(k=0 \wedge Z(i) \big)
%%   \label{multirow}
%% \end{eqnarray}
%% Thus $\Row(x,Z) = Multirow(1,x,Z)$ and $\Rowtwo(x,y,Z) =
%% Multirow(2,\langle x,y\rangle,Z)$.
%% % Lila: need this case k=0 so that we can define it nicely, i.e., on
%% % all inputs.  I don't like the use of quantifier, but since we don't
%% % have subtraction, there's nothing better.
%% The parameter $k$ counts the number of times the operation ``use $\Row$
%% to encode a list of strings in a single string'' was performed; thus
%% the base case $Multirow(0,x,Z)=Z(x)$ where $Z$ simply ``encodes'' a
%% single string, that is, $Z$ is a string.

It will be useful in Section \ref{section:formalizing} to compose
$\Row$ and $\seq$ so that a string encodes a matrix of numbers.  Each
row of the matrix is encoded as a string $Y_i$, where $(Y_i)^j =
\seq(j, Y_i)$ is the $j^\textrm{th}$ number in the row.  These rows
are 
encoded as a string $Z$, where $Y_i = Z^{[i]} = \Row(i,Z)$.  The
$(i,j)^\textrm{th}$ entry of the matrix is recoverable as
$\entry(i,j,Z)$. 
Number function $\entry$ has the definition
\begin{equation} \label{equation:entry-def}
  \entry(i,j,Z)=y \leftrightarrow (Z^{[i]})^j = y
\end{equation}
The next lemma follows from the $\Row$ Elimination Lemma (5.52 in \cite{CookNguyen}), the
$\Sigma_0^B$-Transformation Lemma (5.40 in \cite{CookNguyen}), and
(\ref{equation:entry-def}).
\begin{lemma}
For every $\Sigma_0^B(\entry)$ formula $\varphi$, there is a
$\Sigma_0^B(\mathcal{L}_A^2)$ formula $\varphi^+$ such that \\
$V^0(\entry) \vdash \varphi \leftrightarrow \varphi^+$.
\end{lemma}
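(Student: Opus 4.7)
The plan is to follow the same template as the preceding $\Rowtwo$ elimination argument, exploiting the fact that $\entry$ is definable by an unnested composition of $\seq$ and $\Row$. Concretely, writing $\entry(i,j,Z)=y$ as $(Z^{[i]})^j = y$ means
$\entry(i,j,Z) = \seq(j,\Row(i,Z))$, and the defining axiom for $\seq$ already gives a $\Sigma_0^B$ formula (over the vocabulary containing $\Row$) that expresses $y = \seq(j,\Row(i,Z))$: namely, either $y < |\Row(i,Z)|$ and $\Row(i,Z)(j,y)$ holds with no smaller $z < y$ such that $\Row(i,Z)(j,z)$, or no such $z$ exists at all and $y = |\Row(i,Z)|$. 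Let $\psi_{\entry}(i,j,Z,y)$ denote this $\Sigma_0^B(\Row)$ formula.

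First I would show, in $V^0(\entry)$, that $\entry(i,j,Z) = y \leftrightarrow \psi_{\entry}(i,j,Z,y)$. This is an immediate consequence of the defining axiom (\ref{equation:entry-def}) and the defining axiom for $\seq$ (both of which are provable in the relevant extension by Corollary 5.39 of \cite{CookNguyen}). Next, given an arbitrary $\Sigma_0^B(\entry)$ formula $\varphi$, I would eliminate the occurrences of $\entry$ one at a time: each atomic subformula involving a term $t = \entry(s_1,s_2,T)$ can be replaced by $\exists y \leq 1 + |T|, \, \psi_{\entry}(s_1,s_2,T,y) \wedge \varphi'[y/t]$, where the bound $1 + |T|$ suffices because the value of $\seq$ is bounded by the length of its string argument. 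Since this rewriting introduces only bounded quantifiers, the resulting formula is $\Sigma_0^B(\Row)$ and provably equivalent to $\varphi$ in $V^0(\entry)$.

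Now I would invoke the $\Row$ Elimination Lemma (5.52 in \cite{CookNguyen}) to obtain a $\Sigma_0^B(\mathcal{L}_A^2)$ formula equivalent, over $V^0$, to the $\Sigma_0^B(\Row)$ formula just obtained; applying the $\Sigma_0^B$-Transformation Lemma (5.40 in \cite{CookNguyen}) where necessary ensures that the resulting formula is genuinely in the $\Sigma_0^B(\mathcal{L}_A^2)$ class rather than merely built from such formulas. The resulting $\varphi^+$ is equivalent to $\varphi$ in $V^0(\entry)$, since $V^0(\entry)$ extends $V^0(\Row)$ conservatively.

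The only non-routine step is selecting an appropriate $V^0$-definable bound for the existential quantifier introduced when replacing $\entry(s_1,s_2,T)$ by a fresh number variable, and verifying that this replacement commutes with nested occurrences (e.g.\ $\entry$ terms appearing inside other $\entry$ terms, or as bounds of quantifiers). This is handled inductively on term structure, working from innermost occurrences outward, exactly as in the analogous $\Sigma_0^B$-elimination arguments for other defined function symbols in Chapter 5 of \cite{CookNguyen}.
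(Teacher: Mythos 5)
Your proposal is correct and follows essentially the same route as the paper, which derives the lemma from the defining equation (\ref{equation:entry-def}) together with the $\Row$ Elimination Lemma (5.52) and the $\Sigma_0^B$-Transformation Lemma (5.40) of \cite{CookNguyen}; you simply spell out the intermediate step of unwinding $\entry$ into $\seq$ composed with $\Row$ and bounding the introduced quantifier, which the paper leaves implicit.
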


\subsection{Terminology and conventions for defining string functions}

For some string functions, we are only concerned with certain bits of
the output; we ignore all other bits.  However, every bit of output
must be specified in order to define a function and argue about its
uniqueness. The following convention allows us to define a string
function by specifying only its ``interesting'' bits, and requiring
that all other bits be zero.

For example, let $F(i,\vec{x},\vec{X}) = Z$ be a
string function, and let $\varphi$ be its bit-graph:
$$ F(i,\vec{x},\vec{X})(b) \leftrightarrow
\varphi(i,b,\vec{x},\vec{X})$$
Let $G$ be a string function such that $G(\vec{x},\vec{X})^{[i]} =
F(i,\vec{x},\vec{X})$.

The technically correct bit-definition of $G$ must specify every bit
of the output:
\begin{equation} \label{technicalBitDef}
G(\vec{x},\vec{X})(b) \leftrightarrow
\exists i,j<b, \langle i,j\rangle =b \wedge
\varphi(i,j,\vec{x},\vec{X}) 
\end{equation}
For conciseness, throughout this document, such bit-definitions will
instead be written as:
$$G(\vec{x},\vec{X})(i,j) \leftrightarrow
\varphi(i,j,\vec{x},\vec{X})$$ This has the intended meaning of
(\ref{technicalBitDef}), that is, the string function $G$ is false at
all bits $b$ which are \emph{not} pair numbers.  Notice that this
encoding of the list $G$ of strings $F(0,\vec{x},\vec X)$, $F(1,\vec
x,\vec X)$,\ldots,  results in many ``wasted'' bits.

The following definitions are restated here for ease of reference.

\begin{definition}[\cite{CookNguyen} 5.26, Two-Sorted Definability]
\label{def:two-sorted-definability}
Let $\mathcal{T}$ be a theory with vocabulary $\mathcal{L}\supseteq
\mathcal{L}_A^2$, and let $\Phi$ be a set of $\mathcal{L}$-formulas.
A number function $f$ not in $\mathcal{L}$ is $\Phi$-definable in
$\mathcal{T}$ if there is a formula $\varphi(y,\vec x, \vec X)$ in
$\Phi$ such that
$$ \mathcal{T} \vdash \forall \vec x \forall \vec X \exists ! y,
\varphi(y,\vec x, \vec X)$$
and
$$ y = f(\vec x, \vec X) \leftrightarrow \varphi(y,\vec x, \vec X)$$
A string function $F$ not in $\mathcal{L}$ is $\Phi$-definable in
$\mathcal{T}$ if there is a formula $\varphi(\vec x, \vec X, Y)$ in
$\Phi$ such that
$$ \mathcal{T} \vdash \forall \vec x \forall \vec X \exists ! Y,
\varphi(\vec x, \vec X, Y)$$
and
$$ Y = F(\vec x, \vec X) \leftrightarrow \varphi(\vec x, \vec X, Y) $$ 
\end{definition}

\begin{definition}[\cite{CookNguyen} 5.31, Bit-Definable Function]
Let $\varphi$ be a set of $\mathcal{L}$ formulas where $\mathcal{L}
\supseteq \mathcal{L}_A^2$.  We say that a string function symbol
$F(\vec x, \vec X)$ not in $\mathcal{L}$ is $\Phi$-bit-definable from
$\mathcal{L}$ if there is a formula $\varphi(i,\vec x, \vec X)$ in
$\Phi$ and an $\mathcal{L}_A^2$ number term $t(\vec x, \vec X)$ such
that the bit graph of $F$ satisfies
$$ F(\vec x, \vec X)(i) \leftrightarrow i<t(\vec x, \vec X) \wedge
\varphi(i,\vec x, \vec X)$$
The right-hand side of the above equation is the ``bit-definition'' of
$F$. 
\end{definition}

\begin{definition}[\cite{CookNguyen} 5.37, $\Sigma_0^B$-definable]
\label{def:Sigma_0^B-definability}
A number (resp., string) function is $\Sigma_0^B$-definable \\
from a collection $\mathcal{L}$ of two-sorted functions and relations
if it is $p$-bounded and its (bit) graph is represented by a 
$\Sigma_0^B(\mathcal{L})$ formula.
\end{definition}

The notion of $\Sigma_0^B$-definability is different from
$\Sigma_0^B$-definability in a theory, which is concerned with
provability.   The two are related by the next result.

\begin{corollary}[\cite{CookNguyen} 5.38]\label{cor:definability}
Let $\mathcal{T} \supseteq V^0$ be a theory over $\mathcal{L}$
and assume that $\mathcal{T}$ proves the \\
$\Sigma_0^B(\mathcal{L})\comp$ axiom scheme.
  Then a function which is $\Sigma_0^B$-definable
from $\mathcal{L}$ is $\Sigma_0^B(\cal{L})$-definable in $\mathcal{T}$.
\end{corollary}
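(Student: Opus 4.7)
The plan is to establish the two required clauses---existence and uniqueness in $\mathcal{T}$---by appealing to the $\Sigma_0^B(\mathcal{L})\comp$ scheme together with string extensionality, both available in $\mathcal{T}$. Roughly, $\comp$ manufactures a concrete witness (the characteristic string), and extensionality (plus, for number functions, a least-witness device) pins that witness down uniquely.

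Consider first a string function $F$ that is $\Sigma_0^B$-definable from $\mathcal{L}$. By Definition \ref{def:Sigma_0^B-definability}, there is a bounding term $t(\vec x, \vec X)$ and a $\Sigma_0^B(\mathcal{L})$ formula $\varphi(i, \vec x, \vec X)$ with bit-graph $F(\vec x, \vec X)(i) \leftrightarrow i < t \wedge \varphi$. Take as the defining formula
$$\psi(\vec x, \vec X, Y) \equiv |Y| \leq t \wedge \forall i < t\,(Y(i) \leftrightarrow \varphi(i, \vec x, \vec X)),$$
which is itself $\Sigma_0^B(\mathcal{L})$. Existence of a $Y$ satisfying $\psi$ is immediate from the $\Sigma_0^B(\mathcal{L})\comp$ axiom applied to $i < t \wedge \varphi$ with bound $t$. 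Uniqueness follows from string extensionality (provable in $V^0 \subseteq \mathcal{T}$): any two witnesses $Y_1, Y_2$ agree on all bits $< t$ by $\psi$, and both have no bits set at positions $\geq t$ by the length bound, so $Y_1 = Y_2$. The biconditional $Y = F(\vec x, \vec X) \leftrightarrow \psi$ then holds because it matches the bit-graph of $F$.

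For a number function $f$ with graph $\varphi(y, \vec x, \vec X)$ and bound $t$, use the least-witness formula $\psi(y, \vec x, \vec X) \equiv y \leq t \wedge \varphi(y) \wedge \forall z < y,\, \neg \varphi(z)$. Uniqueness is immediate from the minimality clause. For existence, $\Sigma_0^B(\mathcal{L})\comp$ yields the characteristic string $W$ with $W(y) \leftrightarrow y \leq t \wedge \varphi(y)$, and $\Sigma_0^B$-number minimization (derivable from $\comp$ over $V^0$) extracts a least $y$ satisfying $\varphi$ from any nonempty $W$. The main obstacle I anticipate is ensuring that $\mathcal{T}$ proves $W$ is nonempty, since the hypothesis guarantees totality of $f$ only in the standard model. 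The standard remedy is to enlarge $\psi$ with a default branch (say, returning $0$ when no $z \leq t$ satisfies $\varphi(z)$), which keeps $\psi$ provably total in $\mathcal{T}$ while still coinciding with the graph of $f$ in the standard model, where totality of $f$ guarantees the default clause is never triggered.
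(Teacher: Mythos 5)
Your proposal is correct and takes essentially the same route as the source: the paper gives no proof of this corollary, citing it directly from Cook--Nguyen (Corollary 5.38), and your argument---the comprehension instance for $i<t\wedge\varphi$ to produce the witness string, string extensionality (with the length bound) for uniqueness, and for number functions the least-witness formula with a default value, justified by $\Sigma_0^B(\mathcal{L})$ minimization derived from the comprehension scheme over $V^0$---is exactly the standard argument given there. In particular, the obstacle you flag (that $\mathcal{T}$ need not prove the graph formula has a witness) and the default-branch remedy are precisely the standard resolution, so no gap remains.
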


% Specifies notation for this manuscript.
% In particular, defines the notation for pairing, left, 
% right, Row, and Row_2.

\section{Formalizing $\#L$ and $\parityL$} 
\label{section:formalizing}

The first step is to prove $\# STCON$ and matrix powering over $\N$
are $AC^0$-complete for $\# L$.  The notion of an $AC^0$ reduction 
generalizes $\Sigma_0^B$-definability (Definition
\ref{def:Sigma_0^B-definability}).

\begin{definition}[\cite{CookNguyen} 9.1 $AC^0$-Reducibility]
\label{def:AC^0-reducibility}
A string function $F$ (respectively, a number function $f$) is 
$AC^0$-reducible to $\mathcal{L}$ if there is a sequence of string
functions $F_1$, \ldots, $F_n$, ($n\geq 0$) such that
$$ F_i \text{ is } \Sigma_0^B\text{-definable from } 
  \mathcal{L} \cup \{ F_1, \ldots, F_{i-1} \}
  \text{ for }
  i = 1, \ldots, n;
$$
and $F$ (resp. $f$) is $\Sigma_0^B$-definable from $\mathcal{L}
\cup \{ F_1, \ldots, F_{n}\}$.  A relation $R$ is $AC^0$-reducible to
$\mathcal{L}$ if there is a sequence of string functions 
$F_1$, \ldots, $F_n$ as above, and $R$ is represented by a 
$\Sigma_0^B(   \mathcal{L} \cup \{ F_1, \ldots, F_{n}\})$-formula.
\end{definition}

Notice that this is a semantic notion, separate from whether the
function $F$ (or $f$) is definable \emph{in a theory} (in the style of
Definition \ref{def:two-sorted-definability}).

\begin{definition}[\cite{CookNguyen} 5.15, Function Class]
\label{def:FC}
If $C$ is a two-sorted complexity class of relations, then the
corresponding function class $FC$ consists of all $p$-bounded number
functions whose graphs are in $C$, together with all $p$-bounded
string functions whose bit graphs are in $C$.
\end{definition}

This general definition yields the function classes $F\#L$ and
$F\parityL$ from the classes $\#L$ and $\parityL$.

\subsection{$\#L$  and its $AC^0$-complete problems}

\begin{definition} %[$\# L$] 
\label{def:numberL} 
  $\# L$ is the class of functions $F$
  such that there is a non-deterministic logspace Turing machine,
  halting in polynomial time on all inputs, which on input $X$, has
  exactly $F(X)$ accepting computation paths.\footnote{In general, we
  expect that the value of a function in $\#L$ is exponential;
  since we limit ourselves to 
  polynomially-bounded theories, the output of the function will be
  encoded as a binary string, and denoted with a capital letter:
  $F(X)$ instead of $f(X)$.} 
\end{definition}

The class $\#L$ is not known to be closed under $AC^0$-reductions, so we consider
instead its closure $AC^0(\#L)$, denoted $DET$ as in \cite{All04},
\cite{MV}, \cite{AllOg}, and others.

\begin{definition} \label{def:DET}
  $DET$ is the class of functions $F$ which are $AC^0$-reducible to
  $\#L$. 
\end{definition}

Since $DET$ is a function class, and we want a relation class, we
consider the class of relations with characteristic functions in
$DET$. 

The \emph{characteristic function} $f_R(\vec x, \vec X)$
of a relation $R(\vec x, \vec X)$ is defined as 
$$ f_R(\vec x, \vec X) = \left\{ \begin{array}{rl} 1 & \text{if }
  R(\vec x, \vec X) \\ 0 & \text{otherwise} \end{array} \right .$$
\begin{definition} % [RDET]
  \label{def:RDET}
  $RDET$ is the class of relations whose characteristic functions are
  in $DET$.
\end{definition}

%% The following theorem is proven in \cite{AllOg}.
%% \begin{theorem}
%%   $DET$ is closed under $AC^0$-reductions.
%% \end{theorem} 

%% \reminder{remove this:}
%% The complexity classes considered in \cite{CookNguyen} are classes of
%% relations; $\# L$ could equally well be defined as the class of
%% relations that are bit-graphs of
%% the above functions.  These two definitions are used interchangeably
%% below (as in \cite{BDHM}).

\begin{definition} \label{numberSTCON}
Let $\# STCON$ be the functional counting analog of the decision
problem $STCON$.  That is, $\# STCON$ is a function which,
given a directed graph\footnote{A simple graph.  Multiple edges are not
allowed.} with two distinguished nodes $s \neq t$ among its $n$
nodes, outputs the number of distinct paths of length $\leq p$ from
$s$ to $t$.  Let $G$ be the $n\times n$ Boolean matrix encoding the
adjacency matrix of the graph.
Then this number is represented (in binary) by the string
$\#STCON(n,s,t,p,G)$. 
\end{definition}

There are several simple ways to ensure that this output is finite.
It suffices to require that the the input graph be loop-free, but for
the reduction below it is more convenient to place an upper bound $p$
on the number of edges in an $s$-$t$ path.
We consider the output of $\#STCON$ as a number given in binary
notation (from least to most significant bit).  Thus $\#STCON$ is a
string function.

\begin{claim}
\label{claim:nSTCONcomplnL} 
$\# STCON$ is complete for $DET$ under $AC^0$ reductions. 
\end{claim}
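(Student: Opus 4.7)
The plan is to establish completeness by showing the two standard halves: (i) $\#STCON \in \#L \subseteq DET$, and (ii) every function in $\#L$ is $AC^0$-reducible to $\#STCON$. Since $AC^0$-reductions compose (each $F_i$ in Definition \ref{def:AC^0-reducibility} can be replaced by the sequence witnessing its reducibility to $\#STCON$), part (ii) will immediately upgrade to: every function in $DET = AC^0(\#L)$ is $AC^0$-reducible to $\#STCON$.

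For membership, I would describe a nondeterministic logspace machine that, on input $\langle n,s,t,p,G \rangle$, maintains a current vertex $v$ (initially $s$) and a step counter $i \leq p$ in its $O(\log n)$ worktape, at each step nondeterministically guessing a neighbor of $v$ according to $G$, accepting exactly when $v=t$ and $i \leq p$. The number of accepting computations is precisely the number of $s$-$t$ paths of length at most $p$ in $G$, so the string function producing the binary representation of this count is in $F\#L$.

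For hardness, let $F \in \#L$ be witnessed by a nondeterministic logspace machine $M$ running in polynomial time $p(|X|)$. From input $X$ I would build, by an $AC^0$ function, the \emph{time-layered configuration graph} of $M$ on $X$: vertices are pairs $\langle c,i \rangle$ where $c$ ranges over configurations of $M$ (of length $O(\log |X|)$, so polynomially many) and $0 \leq i \leq p(|X|)$, with an edge from $\langle c,i \rangle$ to $\langle c',i+1 \rangle$ whenever $M$ has a legal one-step transition from $c$ to $c'$ on input $X$. Checking the transition relation between two configurations is a local predicate on bit positions of $X$ and the configuration strings, hence $\Sigma_0^B$-definable, so the adjacency matrix, together with the source (initial configuration at layer $0$), a canonical unique accepting configuration at the top layer, $n$, and the bound $p$, is $\Sigma_0^B$-definable from $X$. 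Applying $\#STCON$ to this tuple yields the count of accepting computations of $M$, which is $F(X)$; thus $F$ is $\Sigma_0^B$-definable from $\#STCON$, giving the desired $AC^0$-reduction.

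The main obstacle is the simple-graph requirement in Definition~\ref{numberSTCON}: an NL machine may have two distinct nondeterministic choices from $c$ leading to the same successor configuration, producing a multi-edge. This is exactly what the time layering handles — by indexing vertices with the step number and, if necessary, by also folding the nondeterministic bit into the target vertex (e.g.\ replacing $c'$ by $\langle c',b\rangle$ for the choice $b \in \{0,1\}$) one keeps the graph simple without altering the path count. The remaining bookkeeping — confirming that configuration encodings, the transition test, and the pairing of vertices used above are all uniformly $AC^0$, and that the binary count produced by $\#STCON$ matches the binary output convention of Definition~\ref{def:numberL} — is routine within the notation of Section~\ref{notation}.
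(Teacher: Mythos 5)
Your proposal is correct and takes essentially the same approach as the paper: membership via the nondeterministic logspace machine that guesses an $s$--$t$ path of length $\leq p$, and hardness via a $\Sigma_0^B$-definable ($AC^0$) construction of the machine's time-stamped configuration graph on input $X$, so that accepting computations correspond bijectively to $s$--$t$ paths, with the extension to all of $DET$ following because $AC^0$-reductions compose. The only cosmetic differences are that the paper folds the time stamp into the configuration tuple as a counter component rather than an external layer index, and routes every accepting configuration to a distinguished target node $t=1$ rather than to a canonical accepting configuration in the top layer.
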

\begin{proofof}{claim \ref{claim:nSTCONcomplnL}}
First, \underline{$\# STCON \text{ is } AC^0\text{-reducible to }
DET$}.  We show this by proving the stronger fact that $\#STCON \in
\# L$. 

The graph input to Turing machine $M$ is formatted specifically.  Let
the graph be represented by its Boolean adjacency matrix $G$, 
with $s$ and $t$ two listed vertices.  This matrix is
encoded as a binary string by means of the $\Row$ and pairing
functions.

$M$ maintains three numbers on its tape: the ``current'' vertex, the
``next'' vertex, and a count of the number of edges it has traversed.
These numbers are stored in binary.  The ``current'' vertex is
initialized to $s$% (that is, the number $0$, which indexes $s$ in the
% encoded adjacency matrix)
, and the count is initialized to $0$.

When run, $M$ traverses the graph by performing the following
algorithm.

\begin{algorithm}{Traverse}{n,s,t,p,G} \label{nSTCONalg}
  current \= s \\
  count \= 0 \\
  \begin{WHILE}{current \neq t \wedge count \leq p}
    next \= \text{nondeterministically-chosen number} < n \\
                                           %% vertex from} G 
    \begin{IF}{G(current,next)} %% =1}
      current \= next
      \ELSE 
        \text{halt and reject}
    \end{IF} \\
    counter \= counter + 1
  \end{WHILE} \\
  \begin{IF}{current =t}
    \text{halt and accept}
    \ELSE
    \text{halt and reject}
  \end{IF}
\end{algorithm}

$M$ simulates a traversal of the graph from $s$ to $t$.  Every
accepting computation of $M$ traces a path from $s$ to $t$ (of length
$\leq p$), and for every path of length $\leq p$ from $s$ to $t$ there
is an accepting computation of $M$.  Thus $\# STCON \in DET$.

Next, \underline{$\# STCON$ is hard for $DET$} with respect to
$AC^0$-reducibility (Definition \ref{def:AC^0-reducibility}).  

Let $F_M(X) = $ the number of accepting computation paths of
nondeterministic logspace Turing machine $M$ on input $X$.

%% Let $g_Y$ be the graph encoded in binary string $Y$.  
%% Let $Gr(p, Y)=$ the number of paths of length $\leq p$ from $s$ to $t$
%% (two distinguished nodes) in $g_Y$.  
%% Thus $$Gr(p,Y) = \# STCON(n,s,t,p, g_Y)$$
%% It is convenient
%% here to use the above-established format for encoding graphs by
%% strings via the the $\Row$ and pairing functions. Thus string $Y$
%% encodes the $n\times n$
%% % the first row of which is all zeroes except for the $p^{th}$ place,
%% adjacency matrix of graph $g_Y$, with $s$ and $t$ the first nodes
%% listed. 

Below, we show that the string function $\#STCON$ is complete for
$DET$ under 
$AC^0$-reductions.  We $\Sigma_0^B$-bit-define an $AC^0$ function $H$
such that 
$$F_M(X) = \#STCON(n,s,t,p,H(X))$$
It follows that $\#STCON$
is $\Sigma_0^B$-definable from $DET$.
%% , it has bit graph 
%% $$F_M(X)(i) \leftrightarrow \#STCON(n,s,t,p,H(X))(i)$$

\underline{Defining $H$.}  $H$ can be defined using prior knowledge of
$M$.  As usual, logspace $M$ has two tapes, a read-only input
tape and a read-write work tape.  WLOG, let the work tape be infinite
in both directions, and initialized to all zeroes; also, let $M$ work
with the alphabet $\Sigma = \{0,1,\$ \}$,\footnote{$\$ $ is a special
symbol which used only to indicate the right end of the input
string.  It is safe to assume that $M$ keeps track of its input head
position and the left end of the input string, e.g., by maintaining a
counter of how many bits are to the left of the head.}  have a
``counter'' which increments with each step of computation, and have a
single accepting state.  $M$ runs in logarithmic space and
\emph{always} halts.  Thus, on input $X$, $M$ runs in time bounded
by $|X|^k + 1$,\footnote{The $+1$ here ensures that, even on the empty
input, $M$ can at least read its input.  If a tight time bound is
actually, e.g., $5|X|^j$, then $k=j+1$ will still yield $|X|^k +1$ a
bound.  For inputs $X$ short enough that $M$ cannot perform any
significant computation within the time bound, $M$ can read the input
and then look up the answers in a table.} where $k$ is some constant
specific to $M$.

\underline{Configurations of $M$.} Configurations of $M$ are
represented by $5$-tuples $(a,b,c,d,e)$ of numbers where
\begin{itemize}
  \item $a$ is the label of $M$'s current state, 
  \item $b$ is the position of the head on the input tape,
  \item $c$ is the value of the counter,
  \item $d$ is the numerical value of the binary contents the work tape
    to the left of the head, and 
  \item $e$ is the numerical value of the contents of the rest of the
    work tape, in reverse (so the least significant bit of $C$ is the
    bit currently being read).
\end{itemize}
$M$'s configuration represented by $(a,b,c,d,e)$ is encoded as the
number $\langle a,b,c,d,e \rangle$ using the pairing
function.  This encoding and the $left$ and $right$ projection
functions are used below in the $\Sigma_0^B$ formula representing the
bit-graph of $H$.  

\underline{$\Sigma_0^B$-defining $H$.}  
The string function $H(X)$ encodes the adjacency matrix of
nodes in the graph.  $H(X)(\ell,m)$ is true if there is a transition
of $M$ from the state encoded by $\ell$ to the state encoded by $m$.
\begin{align*}
H(X)( \ell, m) \leftrightarrow &
% (\ell = 0 \wedge m = \overbrace{|X|\cdot |X|\cdots |X|}^\text{$k$
% times}+1) \vee 
\big ( \exists a,b,c,d,e \leq (\ell+m), \psi \big) \vee \\  
& \exists a,b,c,d,e,a',b',c',d',e' \leq \ell,
\big( \ell = \langle a,b,c,d,e \rangle \wedge
m =  \langle a',b',c',d',e'\rangle \wedge \varphi
\big)
\end{align*}

\underline{Clauses of $\varphi$ correspond to transitions of $M$.}
The formula $\varphi$ is in disjunctive normal form.  For each possible
transition in $M$'s transition relation $\Delta$,
$\varphi$ has a clause specifying that there is an edge in the graph
between the nodes representing $M$'s configuration before and after
the transition.  For example, let $M$ be in state $\alpha_9$ reading a
$0$ on its input tape and $0$ on its work tape, with counter value
$17$ (below its cutoff limit).  Let one possible transition be to
write a $1$ to the work tape, move the work tape head right, move the
input tape head left, and update the counter.
$$
\begin{array}{c|ccc}
  & \text{before transition} & & \text{after transition} \\ 
  \hline
  \text{state:} & \alpha_9 & & \alpha_{12} \\
  \text{input tape:} &
  \begin{array}{ccccccccc} 
    & &  \bigtriangledown \\
    \cline{1-4} \cline{6-8}
    \multicolumn{1}{|c|}{1} &
    \multicolumn{1}{c|}{1} &
    \multicolumn{1}{c|}{0} &
    &
    \multicolumn{1}{c}{\cdots} &
    &
    \multicolumn{1}{|c|}{1} &
    \multicolumn{1}{c|}{\$} \\
    \cline{1-4} \cline{6-8}
  \end{array} & & 
  \begin{array}{ccccccccc} 
    &  \multicolumn{2}{c}{\bigtriangledown^\curvearrowleft} \\
    \cline{1-4} \cline{6-8}
    \multicolumn{1}{|c|}{1} &
    \multicolumn{1}{c|}{1} &
    \multicolumn{1}{c|}{0} &
    &
    \multicolumn{1}{c}{\cdots} &
    &
    \multicolumn{1}{|c|}{1} &
    \multicolumn{1}{c|}{\$} \\
    \cline{1-4} \cline{6-8}
  \end{array} \\ 
  \text{counter: } &  \fbox{17} &
  \xrightarrow{\text{transitions to}} & \fbox{18} \\
  \begin{array}{l} \\ 
    \text{ work tape: }\\
    \\ \\ 
  \end{array} &
%%  \text{ work tape: } &
  \begin{array}{ccccccccc} 
    & &  & %\phantom{^\curvearrowright}
      \bigtriangledown \\
    \hline
    \cdots &
    \multicolumn{1}{|c|}{0} &
    \multicolumn{1}{c|}{1} &
    \multicolumn{1}{c|}{0} &
    \multicolumn{1}{|c|}{0} &
    \multicolumn{1}{|c|}{1} &
    \cdots  \\
    \hline \\
    \\
  \end{array} & &
  \begin{array}{ccccccccc} 
    & & &  \multicolumn{2}{c}{ ^\curvearrowright \bigtriangledown} \\
    \hline
    \cdots &
    \multicolumn{1}{|c|}{0} &
    \multicolumn{1}{c|}{1} &
    \multicolumn{1}{c|}{1} &
    \multicolumn{1}{|c|}{0} &
    \multicolumn{1}{|c|}{1} &
    \cdots  \\
    \hline
    & & & \uparrow \\
    & & \multicolumn{4}{c}{\text{written bit}}
  \end{array}
\end{array}
$$
Then $\varphi$ contains the clause:
$$
(a=9 \wedge a'=12 \wedge \neg X(b) \wedge b+1 = b' \wedge c \leq
\underbrace{|X|\cdot|X|\cdots|X|}_{k \text{ times}} + 1 \wedge c'=c+1
  \wedge 2d+1 = d' \wedge 2e'=e) 
$$ 
Variables $d$, $d'$, $e$, and $e'$ store the numerical value of the
binary contents of the work tape.
$$
\begin{array}{c|} \hline \cdots \\ \hline \end{array}
\begin{array}{|c|} \hline 0 \\ \hline \end{array}
\begin{array}{|c|} \hline 1 \\ \hline \end{array}
\underbrace{
  \stackrel{\bigtriangledown}
	   {\begin{array}{|c|} \hline 0 \\ \hline \end{array}}
  \overbrace{
    \begin{array}{|c|} \hline 0 \\ \hline \end{array}
    \begin{array}{|c|} \hline 1 \\ \hline \end{array}
    \begin{array}{|c} \hline \cdots \hspace{1.2cm} \\ \hline \end{array}
  }^{e' \text{ in binary, reversed}}
}_{e \text{ in binary, reversed}}
\hspace{2mm}
\xrightarrow{\text{transitions to}}
\hspace{2mm}
\underbrace{
  \overbrace{
    \begin{array}{c|} \hline \cdots \\ \hline \end{array}
    \begin{array}{|c|} \hline 0 \\ \hline \end{array}
    \begin{array}{|c|} \hline 1 \\ \hline \end{array}
  }^{d \text{ in binary}}
  \begin{array}{|c|} \hline 1 \\ \hline \end{array}
}_{d' \text{in binary}}
\stackrel{^\curvearrowright \bigtriangledown \hspace{2mm}}
	 {\begin{array}{|c|} \hline 0 \\ \hline \end{array}}
\begin{array}{|c|} \hline 1 \\ \hline \end{array}
\begin{array}{|c} \hline \cdots \\ \hline \end{array}
$$ 

\underline{The right end of the input tape.}  In the special case when
$M$ reads $\$$ on its input tape, $b=|X|$.  For $b \geq |X|$, by
definition $X(b)$ is false; however, the intended interpretation is
that $X(b)$ is false iff the $b^\textrm{th}$ bit of $X$ is zero.  To
amend 
this, any clauses of $\varphi$ corresponding to transitions where $M$
reads $\$$ from the input tape will include the condition $b=|X|$
instead of $X(b)$ or $\neg X(b)$.  It is safe to assume that, having
read $\$$, $M$ \emph{never} moves further right on its input tape.

\underline{Specifying $s$ and $t$.}  The standard (above) for encoding
adjacency matrices specifies that the distinguished nodes $s$ and $t$
be associated with row/column $0$ and $1$, respectively.  Note that
$0$ and $1$ are not in the correct form of encoded $5$-tuples,
requiring special treatment in $\psi$, a disjunction of two clauses.
The first clause specifies that $H(X)(0,m)$ is true for $m$ encoding
$M$'s starting configuration.  Let $\alpha_0$ be $M$'s initial state:
$$ a=b=c=d=e=0 \wedge m = \langle a,b,c,d,e\rangle \wedge \ell=0 $$
The second clause specifies that $H(X)(\ell,1)$ is
true for any $\ell$ encoding a configuration where $M$ is in its
single accepting state $\alpha_r$:
$$ \ell = \langle a,b,c,d,e \rangle \wedge a = r \wedge 
c \leq \underbrace{|X|\cdot|X|\cdots|X|}_{k \text{ times}} + 1)
\wedge m=1 $$

Given $\varphi$ and $\psi$ as described above, we have a $\Sigma_0^B$
formula representing the bit-graph of $H$, a function which takes
input $X$ to logspace Turing machine $M$ and outputs the adjacency
matrix of $M$'s configurations (encoded as a binary string).  By
construction, the number $F_M(X)$ of accepting paths of machine $M$ on
input $X$ is exactly equal to the number of paths from $s$ (the node
numbered $0$) to $t$ (node $1$) in the adjacency matrix encoded in
$H(X)$.  By setting the bound $p$ on path length to be sufficiently
large, we see that $\# STCON$ is hard for $\# L$.

Since $\# STCON$ is both in $DET$ and hard for $DET$ with respect to
$AC^0$ reductions, it is complete for $DET$.
\end{proofof}

\begin{definition}[Matrix Powering] \label{def:matrix-powering}
  Given matrix $A$ and integer $k$, matrix powering is
  the problem of computing the matrix $A^k$.
\end{definition} 

As a function mapping $(A,k)\longmapsto A^k$, matrix powering does not
exactly fit the format required by Definition \ref{def:numberL}; it is
neither a number function nor a bit-graph.  Notationally, matrices are
represented as bit-strings; thus matrix powering is a string function.
In the context of $\# L$, we are interested in the bit-graph of this
string function.  Observe that the entries of the $k^\textrm{th}$
power of even a matrix over $\zo$ can have $k$ digits, so that the
$k^\textrm{th}$ power of a matrix must have entries encoded in binary.

Specifying matrices as strings requires an encoding
scheme for input matrix entries; \emph{unary matrix powering} and
\emph{binary matrix powering} are distinguished according to this
number encoding.  A unary input matrix is encoded by means of the
$\entry$ function; a binary input matrix is encoded by means of the
$\Rowtwo$ function (defined in Section \ref{subsection:str-mat-lists}).

\begin{remk} \label{remark:matrix-observation}
Let $A$ be the adjacency matrix of a graph; then $A[i,j]$
is $1$ if there is an edge from vertex $i$ to $j$, and $0$ otherwise.
Thus $A^1[i,j]$ is the number of edges from $i$ to $j$, that is, the
number of paths of length $1$ from $i$ to $j$.  $A^2[i,j] = \sum_{\ell
= 1}^n A[i,\ell] A[\ell,j]$ is the number of paths from $i$ to $j$
passing through one intermediate vertex, i.e., of length exactly $2$.
Inductively, $A^k[i,j]$ is the number of paths from $i$ to $j$ of
length exactly $k$.
\end{remk}

This observation provides the main insight for the next two lemmas.

\begin{lemma} \label{STCON<=MP}
$\# STCON$ is $AC^0$-reducible to matrix powering over $\N$.
\end{lemma}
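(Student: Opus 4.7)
The plan is to reduce $\#STCON$ to a single invocation of matrix powering over $\N$ by augmenting the graph with a ``trap'' vertex that converts walks of length at most $p$ ending at $t$ into walks of a fixed length $p+1$ ending at the trap. Correctness will fall directly out of Remark~\ref{remark:matrix-observation}.

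First I would $\Sigma_0^B$-define an auxiliary string function $B=B(n,s,t,p,G)$ encoding an $(n+1)\times(n+1)$ matrix as follows: $B[i,j]=G[i,j]$ for $i,j<n$, together with $B[t,n]=1$ and $B[n,n]=1$, and $0$ for all remaining entries involving the new index $n$. The new vertex $n$ is the trap: its unique in-edge comes from $t$ and its unique out-edge is the self-loop, so any walk entering $n$ must remain there. Each bit of $B$ (encoded in the matrix-of-strings format demanded by matrix powering input, as in Definition~\ref{def:matrix-powering}) is specified by an easy $\Sigma_0^B$ condition on $G$, $t$, and $n$, so $B$ is $\Sigma_0^B$-definable from the input.

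Next I would invoke the matrix powering oracle on $(B,p+1)$ to obtain $C=B^{p+1}$, and let the reduction output the single entry $C[s,n]$, extracted via $\entry$ (or $\Rowtwo$), which is again $\Sigma_0^B$-definable from $C$. For correctness, Remark~\ref{remark:matrix-observation} gives that $C[s,n]$ counts walks of length exactly $p+1$ from $s$ to $n$ in the graph with adjacency matrix $B$. By the construction of $B$, every such walk decomposes uniquely as a walk of length $k$ from $s$ to $t$ using only original edges (for some $1\leq k\leq p$), followed by the edge $(t,n)$, followed by $p-k$ self-loops at $n$. Summing over $k$ and using $s\neq t$ to discard the $k=0$ term yields
$$C[s,n]=\sum_{k=1}^{p} G^k[s,t]=\#STCON(n,s,t,p,G).$$
Chaining the string functions $B\mapsto C\mapsto C[s,n]$ around the oracle call exhibits $\#STCON$ as $AC^0$-reducible to matrix powering over $\N$ in the sense of Definition~\ref{def:AC^0-reducibility}.

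The main obstacle is not the combinatorics --- which is essentially a one-line consequence of the preceding remark --- but the bookkeeping: one must verify that the two string functions wrapped around the oracle call are genuinely $\Sigma_0^B$-definable under the matrix-encoding conventions used by matrix powering (the input $B$ is a $0$-$1$ matrix encoded through $\Rowtwo$, while the output $C$ has binary entries), and that the bound $p+1$ can be passed as a number argument consistently with the formal definition of the matrix powering function.
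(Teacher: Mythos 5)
Your reduction is correct, but it uses a different gadget from the paper's. You adjoin a single absorbing ``trap'' vertex $n$ (in-edge from $t$, self-loop at $n$) and read off $B^{p+1}[s,n]$; the paper instead builds a layered graph with $p+1$ copies of the vertex set plus two fresh vertices $s'$ and $t'$, puts the original edges between consecutive layers, adds $s'\to(s,0)$, $(t,\ell)\to t'$ for each layer $\ell$, and a self-loop at $t'$, and reads off $A'^{(p+2)}[s',t']$. Both constructions rest on exactly the same two ideas --- Remark~\ref{remark:matrix-observation} (powers of the adjacency matrix count walks of exact length) and padding short walks with self-loops at a sink to force a fixed length --- and both decompositions are genuinely bijective: in your version the walk enters the trap exactly once, via the unique edge $t\to n$, so it splits uniquely into an $s$--$t$ walk of length $k$ with $1\le k\le p$ (using $s\neq t$ to rule out $k=0$), the edge $t\to n$, and $p-k$ self-loops. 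What your version buys is economy: an $(n+1)\times(n+1)$ $0$-$1$ matrix and a very short $\Sigma_0^B$ bit-definition, versus the paper's matrix of size roughly $n(p+1)+2$. What the paper's layered version buys is that the layer index records elapsed length, so the length bookkeeping and the uniqueness of the correspondence are visible directly in the vertex names; neither difference affects correctness, and both reductions only require powering a matrix with entries in $\zo$. One small bookkeeping remark: Definition~\ref{def:matrix-powering} itself fixes no encoding --- for a $0$-$1$ input matrix the paper just uses the plain two-dimensional bit-array encoding via pairing (the $\entry$/$\Rowtwo$ encodings are for matrices whose entries need more than one bit, as in the output $B^{p+1}$), so your appeal to $\Rowtwo$ for the input $B$ is unnecessary, though harmless.
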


\begin{proof}
Recall from Definition \ref{numberSTCON} that $\# STCON$ counts the
number of paths of length $\leq p$.  Matrix powering counts paths of
\emph{exactly} a given length.  It suffices to construct an $AC^0$
function to convert the adjacency matrix $G$ of a graph into the
adjacency matrix $G'$ of a different graph, such that every $s$-$t$
path of length $\leq p$ in $G$ is converted into an $s'$-$t'$ path of
length\footnote{The added constant is an artifact of the reduction.}
\emph{exactly} $p+2$ in the graph $G'$.  By notational convention, $G$
and $G'$ are be Boolean matrices; let $A'$ be the number matrix over
$\zo$ corresponding to $G'$, i.e., $A'[i,j]=1$ iff $G'(i,j)$.  Thus
$\#STCON(n,s,t,p,G) = A'^{(p+2)}[s',t']$.

The new graph $G'$ consists of $p+1$ ``layers'' of vertices; each
layer contains a copy of the vertices of $G$ (with no edges).  $G$
additionally has two vertices $s'$ and $t'$ not in any layer.  A pair
$(v,\ell)$ denotes the vertex $v$ in layer $\ell$; thus $(s,0)$ is the
copy of vertex $s$ in the first layer.  There is an edge from vertex
$(v,\ell)$ to $(v',\ell+1)$ if there is an edge from $v$ to $v'$ in
$G$.  There is an edge from $s'$ to $(s,0)$, an edge from each
$(t,\ell)$ to $t'$, and a self-loop on vertex $t'$.  Vertex $(x,y)$ in
$G'$ is identified with the number $\langle x, y\rangle$; also, $s'=0$
and $t'=1$.

Adopting the same convention as above, let the first two nodes listed
in $A$ be $s=0$ and $t=1$. Let $A$ be encoded as string $X$ and $A'$
be encoded as $X'=H(p,X)$.  The conversion can be $AC^0$ bit-defined
as follows:
\begin{eqnarray}
H(p,X)(i,j) &\leftrightarrow &
(i=0 \wedge j = \langle 0,0 \rangle) \vee \label{1} \\
&& (i=1 \wedge j=1) \vee \label{2} \\
&& \exists k \leq p (i=\langle 1,k\rangle \wedge j=1) \vee \label{3}
\\ 
&& \exists k < p \exists v,v' \leq |X| (X(v,v') \wedge i=\langle
v,k\rangle  
\wedge j = \langle v', k+1\rangle) \label{4}
\end{eqnarray}
The clause on line (\ref{1}) ensures that there is an edge from $s'$
to $(s,0)$.  Line (\ref{2}) defines the self-loop on vertex $t'$.
Line (\ref{3}) adds an edge from each $(t,\ell)$ to $t'$.  By the
condition on line (\ref{4}), the new graph appropriately ``inherits''
all edges from the original graph.

This construction ensures that, for every $s$-$t$ path in $G$ of
length $k$, there is a unique path in $G'$ from $(s,0)$ to $(t,k)$,
which can be uniquely extended to a path from $s'$ to $t'$ (passing
through $(s,0)$ and $(t,k)$) of length $k+2$.  Vertex $t'$ has only
one outgoing edge, a self-loop.  Thus for every $s$-$t$ path in $G$ of
length $\leq p$, there is a unique path in $G'$ from $(s,0)$ to $t'$
of length exactly $p+2$.

This completes the reduction, as 
$$\# STCON(n,s,t,p,G) = A'^{(p+2)}[s',t']$$
Thus $\# STCON$ is reducible to matrix powering.
Note that this reduction only requires powering a matrix with entries
in $\zo$.
\end{proof}

\begin{lemma} \label{lemma:MP<=nSTCON}
Unary matrix powering over $\N$ is $AC^0$-reducible to $\# STCON$.
\end{lemma}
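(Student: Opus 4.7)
The plan is to encode the multiplicities $a_{ij}$ of the unary matrix $A$ as parallel intermediate vertices in a layered simple directed graph, so that $s$-$t$ path counting recovers the entries of $A^k$ by the bipartite version of Remark~\ref{remark:matrix-observation}. First I would $\Sigma_0^B$-bit-define an $AC^0$ auxiliary string function $F_1(A, k) = G'$ whose vertices come in two types: \emph{layer vertices} $v_{i, \ell}$ for $i \in [n]$ and $0 \leq \ell \leq k$, and \emph{multiplicity vertices} $w_{i, j, m, \ell}$ for $i, j \in [n]$, $0 \leq \ell < k$, and $1 \leq m \leq a_{ij}$; each is assigned a distinct numerical label via standard iterated pairing. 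The edges of $G'$ run from each $v_{i, \ell}$ to each $w_{i, j, m, \ell}$ (over all $j$ and all $m \leq a_{ij}$) and from each $w_{i, j, m, \ell}$ to $v_{j, \ell+1}$. Because $A$ is encoded in unary, $a_{ij} = \entry(i, j, A) \leq |A|$, so the total vertex count $N$ is polynomial in $|A| + k$, and the predicate ``$m \leq a_{ij}$'' is $\Sigma_0^B$ in the entries of $A$; hence the adjacency matrix of $G'$ is $AC^0$-bit-definable from $(A, k)$.

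Second I would verify the path count. The graph $G'$ is a directed acyclic layered graph in which every edge advances the layer index by one half, so every path from $v_{i, 0}$ to $v_{j, k}$ has length exactly $2k$ and decomposes uniquely into $k$ length-$2$ hops $v_{i_\ell, \ell} \to w_{i_\ell, i_{\ell+1}, m_\ell, \ell} \to v_{i_{\ell+1}, \ell+1}$. The number of choices of $m_\ell$ at each hop is $a_{i_\ell i_{\ell+1}}$, so summing over sequences $i = i_0, i_1, \ldots, i_k = j$ gives
$$\#(\text{paths from } v_{i,0} \text{ to } v_{j,k}) \;=\; \sum_{i_1, \ldots, i_{k-1}} \prod_{\ell=0}^{k-1} a_{i_\ell i_{\ell+1}} \;=\; A^k[i, j].$$
Setting $p = 2k$ turns the length-$\leq p$ count of $\#STCON$ into the exact count, since no path is shorter or longer. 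The vertices $v_{i,0}$ and $v_{j,k}$ are distinct whenever $k \geq 1$, so the $s \neq t$ requirement is met; the case $k = 0$ reduces to outputting the identity matrix.

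Finally I would assemble the output $B = A^k$ using the $\Rowtwo$ encoding of Section~\ref{subsection:str-mat-lists}, so that $B^{[i][j]}$ is the binary representation of $A^k[i, j]$. Its bit-graph is
$$B(i, j, b) \leftrightarrow \#STCON(N,\, v_{i, 0},\, v_{j, k},\, 2k,\, F_1(A, k))(b),$$
which is $\Sigma_0^B$-definable from $\{F_1, \#STCON\}$. Per Definition~\ref{def:AC^0-reducibility} this exhibits unary matrix powering as $AC^0$-reducible to $\#STCON$.

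The main obstacle is the $\Sigma_0^B$ bit-definition of $G'$: one must enumerate multiplicity vertices using only polynomially-bounded number quantifiers and must express ``$m \leq a_{ij}$'' as a $\Sigma_0^B$ predicate on the unary representation of $A$ (routine using $\entry$). The path-counting argument is the bipartite-blowup analogue of Remark~\ref{remark:matrix-observation}; the only subtlety is confirming that no non-canonical paths appear, but this is immediate from the strictly advancing layered structure of $G'$.
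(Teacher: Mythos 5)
Your proof is correct, but it takes a genuinely different route from the paper's. The paper views the unary matrix as the adjacency matrix of a multigraph and converts it to a simple graph by bisecting every parallel edge with a fresh vertex (the $\Convert$ function), with \emph{no} layering; since $\#STCON$ counts paths of length $\leq p$ while matrix powering needs exact lengths, the paper then recovers each entry as a difference of two calls, $A^{k+1}[i,j] = \#STCON(n',i,j,2(k+1),A') - \#STCON(n',i,j,2k,A')$. You instead build a layered DAG, in effect importing the layering device the paper uses for the opposite reduction (Lemma \ref{STCON<=MP}) and combining it with the multiplicity-vertex expansion, so that every path from $v_{i,0}$ to $v_{j,k}$ has length exactly $2k$ and a \emph{single} call to $\#STCON$ with $p=2k$ already equals $A^k[i,j]$. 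The trade-off: your graph is larger by roughly a factor of $k$ (a copy of the vertex set per layer, plus multiplicity vertices per layer), whereas the paper's graph is just the bisected multigraph with $n+\sum_{i,j}A[i,j]$ vertices; in exchange you avoid taking a difference of two binary-encoded $\#STCON$ outputs inside the reduction (binary subtraction is $AC^0$-definable, so this is a matter of convenience rather than correctness). Your treatment of the side conditions — simplicity of the constructed graph, meeting the $s\neq t$ requirement because the layer indices differ, the bounded $\Sigma_0^B$ definition of ``$m\leq \entry(i,j,A)$'' from the unary encoding, and the separate $k=0$ case — is sound, so both arguments establish the lemma.
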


% Cook described this co-reducibility as ``AC^0-equivalent,'' but I
% think that enumerating both directions is clear enough.  Sept 4
% 2008. 

\begin{proof}
Matrices above had entries from $\zo$. 
As a consequence, the $\Row$ and pairing functions sufficed to encode
them as strings.  However, when
each matrix entry requires more than one bit to store, another layer
of encoding is required.  We will use the $\entry$ function, with
definition given above by (\ref{equation:entry-def}).  This allows us
to encode a matrix of numbers $A = (x_{ij})$ as a string $X$,
recoverable as $x_{ij} = \entry(i,j,X) = (X^{[i]})^{j}$.

Since $A$ has entries from $\N$, it can be viewed as the adjacency
matrix of a multigraph.\footnote{Entries of $A$ that are $> 1$ can be
viewed as duplicate edges in the graph, e.g., $A[i,i]=4$ would mean
that node $i$ has four self-loops; $A[i,j]=3$ would mean that there
are three edges from node $i$ to node $j$.}  However, $\# STCON$ is
defined only on graphs with at most one edge between an ordered pair
of vertices.  A multigraph can easily be converted into a standard
graph by bisecting each edge with a new vertex.  If the former graph
$G$ had $E$ edges and $V$ vertices, then the new graph $G'$ will have
$2|E|$ edges and $|V|+|E|$ vertices.  The number of paths between any
two vertices in $G$ remains unchanged in $G'$, and the length of every
path is exactly doubled.

The following $AC^0$-defined string function $\Convert$ is designed to
perform this transformation from multigraph to graph.  Let $X$ be the
string encoding the unary $n \times n$ adjacency matrix $A$ of the
multigraph $G=(V,E)$, where $|V|=n$ and $|E|=\sum_i\sum_j A[i,j]$.  
Let $n' = n + |E|$. 
We construct the function $\Convert$ so that $\Convert(X) = A'$ is the
string encoding the $n'\times n'$ Boolean adjacency matrix
of the graph $G'$.
Let $\Convert$ be $AC^0$-bit-defined:
\begin{eqnarray*}
\Convert(X)(k,\ell) \leftrightarrow 
\exists i,j,c < |X| &\big(& 
(k=\langle i,j,c+n\rangle \wedge \ell = j \wedge c < \entry(i,j,X))\\ &&
\vee (\ell=\langle i,j,c+n\rangle \wedge k=i \wedge c < \entry(i,j,X))
\hspace{2mm} \big) 
\end{eqnarray*}
By construction, there are $A(i,j)$ vertices in $G'$ (with labels
$\langle i,j,0+n\rangle, \ldots, \langle i,j,c-1+n\rangle$) such that,
for each $c \in \{0,\ldots, c-1\}$, both 
$A'(i,\langle i,j,c+n\rangle)$ and $A'(\langle i,j,c+n\rangle,j)$ are
true.
(The added $n$ ensures that the vertex labels are unique, that is,
there is no new vertex labelled $\langle i,j,c+n\rangle = i'$ for some
$i'<n$ a label of a vertex in the original graph.)

Every pair of vertices in $G'$ shares at
most one edge; hence a single bit suffices to store each entry of
$A'$, and it is not necessary to use the $\seq$ function.
This definition of $\Convert(X)$ takes advantage of that
simplification by outputting a Boolean matrix.

By construction, for every $i,j \leq n$,
$$A^{k+1}[i,j] = A'^{2(k+1)}[i,j] 
= \# STCON(n',i,j,2(k+1),A')- \# STCON(n',i,j,2k,A')$$ 
The number $n'$ is not obviously computable, but is used here for
clarity.  It can be replaced with a larger number bounding $n'$ from
above, e.g., $|A|$.  This yields the same output of $\#STCON$, since
the string $\Convert(X)$ can be ``interpreted'' on larger numbers: any
vertex of index $\geq n$ has no in- or out-edges.

By remark \ref{remark:matrix-observation}, $A^{k+1}[i,j]$ is the
number of paths from 
$i$ to $j$ of length exactly $k+1$.  These paths map uniquely to paths
in $A'$ from $i$ to $j$ of length exactly $2(k+1)$.  (Notice that, by
construction, paths between vertices of the original graph will always
be of even length.) This is exactly the
number of paths from $i$ to $j$ of length $\leq 2(k+1)$ \emph{minus} the
number of paths from $i$ to $j$ of length $\leq 2k$.
Thus unary matrix powering is $AC^0$-reducible to $\# STCON$.
\end{proof}

\begin{claim} \label{claim:MPcomplnL}
 Unary matrix powering over $\N$ is $AC^0$-complete for $DET$.
\end{claim}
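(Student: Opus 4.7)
The plan is to chain together the three results already established: Claim \ref{claim:nSTCONcomplnL} ($\#STCON$ is $AC^0$-complete for $DET$), Lemma \ref{STCON<=MP} ($\#STCON$ is $AC^0$-reducible to matrix powering over $\N$), and Lemma \ref{lemma:MP<=nSTCON} (unary matrix powering over $\N$ is $AC^0$-reducible to $\#STCON$). Completeness requires proving both membership in $DET$ and $DET$-hardness, each of which follows almost immediately from one of the previous results, together with the fact that the class of $AC^0$-reductions is closed under composition (this is clear from Definition \ref{def:AC^0-reducibility}, since one can simply concatenate the two defining sequences of $\Sigma_0^B$-definable string functions).

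First I would argue membership: unary matrix powering over $\N$ is $AC^0$-reducible to $\#STCON$ by Lemma \ref{lemma:MP<=nSTCON}, and $\#STCON \in \#L \subseteq DET$ (as shown in the first half of the proof of Claim \ref{claim:nSTCONcomplnL}). Since $DET = AC^0(\#L)$ is by Definition \ref{def:DET} the $AC^0$-closure of $\#L$, any string function $AC^0$-reducible to something in $DET$ is itself in $DET$. Hence unary matrix powering lies in $DET$.

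Next I would argue hardness. By Lemma \ref{STCON<=MP}, $\#STCON$ is $AC^0$-reducible to matrix powering over $\N$, and crucially the proof of that lemma explicitly notes that the constructed matrix $A'$ has entries in $\zo$. A matrix with $\zo$-entries is representable as a unary-encoded integer matrix (each entry being a single bit, the unary and binary encodings coincide), so the same reduction in fact witnesses that $\#STCON$ is $AC^0$-reducible to \emph{unary} matrix powering. Since $\#STCON$ is $AC^0$-hard for $DET$ by Claim \ref{claim:nSTCONcomplnL}, composing the two $AC^0$-reductions shows that every function in $DET$ is $AC^0$-reducible to unary matrix powering, i.e., unary matrix powering is $AC^0$-hard for $DET$.

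I do not anticipate a serious obstacle here, since the hard work has been done in the preceding lemmas; the only subtle point worth checking carefully is that the reduction of Lemma \ref{STCON<=MP} genuinely produces a unary-encoded input (not merely a matrix with small entries presented in some other format), which amounts to verifying that the $\Sigma_0^B$ bit-definition of $H$ in that proof outputs each entry as a single bit accessible via the $\entry$ function or directly, so that it is legitimately a unary-encoded input to matrix powering in the sense required by Definition \ref{def:matrix-powering}. Once that is spelled out, combining hardness and membership completes the proof.
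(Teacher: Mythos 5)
Your proposal is correct and follows essentially the same route as the paper, which simply cites Claim \ref{claim:nSTCONcomplnL}, Lemma \ref{STCON<=MP}, and Lemma \ref{lemma:MP<=nSTCON} for this claim. You spell out the composition of $AC^0$-reductions and the point that the reduction in Lemma \ref{STCON<=MP} produces a $\zo$-entry matrix (hence a legitimately unary-encoded input), which is exactly the remark the paper makes at the end of that lemma's proof.
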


This follows from Lemma \ref{STCON<=MP}, Lemma \ref{lemma:MP<=nSTCON}, and
Claim \ref{claim:nSTCONcomplnL}.

\subsection{$\parityL$ and its $AC^0$-complete problems} 

\begin{definition} %[$\parityL$] 
  \label{def:2L}
  $\parityL$ is the class of decision problems $R(\vec x, \vec X)$ such
  that, for some function $F \in \# L$, 
  $$ R(\vec x, \vec X) \leftrightarrow F(\vec x, \vec X)(0)$$
  That is, $R(\vec x, \vec X)$ holds depending on the value of $F(\vec
  x, \vec X) \bmod 2$.
\end{definition}

$\parityL$ is also called $ParityL$ or $MOD_2 L$.  In general, $MOD_k
L$ has a similar definition\footnote{That is, $MOD_kL$ is the class of
  decision problems $R(\vec x, \vec X)$ such that, for some function
  $F \in \#L$, $R(\vec x, \vec X)$ holds iff $F(\vec x, \vec X) \neq 0
  \bmod k$. 
  Using Fermat's little theorem, this can easily be modified to the
  two cases $F(x) =1 \bmod k$ and $F(x) = 0 \bmod k$ for any prime
  $k$.}. 

\begin{theorem}
\label{theorem:2L-closed}
  $\parityL$ is closed under $AC^0$-reductions.
\end{theorem}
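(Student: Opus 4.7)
The plan is to show that any string (or number) function $F$ that is $AC^0$-reducible to $F\parityL$ is itself in $F\parityL$. By Definition \ref{def:AC^0-reducibility}, it suffices to establish the key lemma: every string function $\Sigma_0^B$-definable from $F\parityL \cup \mathcal{L}_A^2$ lies in $F\parityL$. The full statement then follows by induction on the length of the reduction sequence $F_1, \ldots, F_n$, since at each stage $F_i$ is $\Sigma_0^B$-definable from the already-in-$F\parityL$ earlier functions.

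For the key lemma, fix a $\Sigma_0^B(F\parityL)$ formula $\varphi(i, \vec x, \vec X)$ bit-defining some string function $G$. I would prove by structural induction on $\varphi$ that the predicate it defines is in $\parityL$: to each subformula $\psi$, assign an $\#L$ counting function $N_\psi$ with $[\psi] \equiv N_\psi \pmod 2$. Base cases: $\mathcal{L}_A^2$-atomic formulas are $AC^0$-decidable, hence trivially in $\#L$ with the characteristic function as $N$; bit-graph queries $F_i(\vec x, \vec X)(j)$ are in $\parityL$ directly by the hypothesis $F_i \in F\parityL$ combined with Definition \ref{def:2L}. For Boolean connectives, use standard $\#L$ closure properties: $N_{\neg\psi} := N_\psi + 1$ (add a dummy accepting path), $N_{\psi \wedge \chi} := N_\psi \cdot N_\chi$ (a single log-space machine runs the $\chi$-machine after the $\psi$-machine, so the product of accepting-path counts equals the Cartesian product of path spaces), and $\vee$ via De Morgan. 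This gives closure under all Boolean operations even mod $2$.

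The genuinely non-trivial step is the bounded quantifier case $\forall y < t, \psi(y)$. Here I would use the identity
\[ \bigl[\forall y < t, \neg \psi(y)\bigr] \;\;\leftrightarrow\;\; \prod_{y<t}\bigl(1 + N_{\psi(y)}\bigr) \equiv 1 \pmod{2}, \]
since each factor $1 + N_{\psi(y)}$ is odd exactly when $N_{\psi(y)}$ is even, i.e., when $\psi(y)$ is false. An $\#L$ machine realizes this product by iterating $y = 0, 1, \ldots, t-1$ in an outer log-space loop and, for each $y$, nondeterministically simulating the machine for $\psi(y)$ together with a single ``dummy'' choice, reusing the work tape between iterations. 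The accepting-path count is exactly $\prod_{y<t}(N_{\psi(y)} + 1)$, and bounded $\exists$ reduces to this via negation. The main obstacle lies precisely in this step: one must verify carefully that the sequential-product machine stays in logarithmic space (the outer counter $y$ and the inner simulation state both fit in $O(\log)$ bits, and workspace is erased between iterations), that the inner $\#L$ machines are uniformly constructible from $y$ using the $\Sigma_0^B$ definition of $\psi$, and that the polynomial bound $t$ keeps the overall runtime polynomial. Once this machinery is in hand, permuting the bits of the output gives $G \in F\parityL$, completing the induction and hence the theorem.
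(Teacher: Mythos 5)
The part you work out in detail (Boolean connectives and bounded quantifiers via parities of products of accepting-path counts) is the easy half of the theorem and is essentially sound, but your induction breaks at the atomic case, and that is where the entire content of the statement lives. In a $\Sigma_0^B(\mathcal{L}_A^2\cup\{F_1,\ldots,F_{i-1}\})$ formula the oracle functions occur inside \emph{terms}, not only as top-level bit queries on the input variables: atomic subformulas can have the form $F_j(\ldots,F_k(\vec x,\vec X),\ldots)(s)$ or contain number terms such as $|F_j(\cdots)|$, and since $\Sigma_0^B$ formulas have no string quantifiers there is no way to flatten this nesting away. Your base case ``bit-graph queries $F_i(\vec x,\vec X)(j)$ are in $\parityL$ by hypothesis'' therefore covers only queries applied to the original inputs; it says nothing about a $\parityL$ computation that must read bits of an argument which is itself only $\parityL$-computable. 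That missing step is exactly closure under composition ($\parityL^{\parityL}=\parityL$; in the Cook--Nguyen Theorem 9.7 phrasing, closure under composition in addition to string comprehension), and it is the genuinely nontrivial point: a nondeterministic counting machine cannot simply branch on an oracle bit along a path without corrupting the count. The standard repair is a guess-and-verify gadget in which a wrong guess of an oracle bit is followed by an \emph{even} number of accepting continuations, so that wrong branches cancel mod $2$ while correct branches contribute an odd factor; this trick is specific to counting modulo $2$ and is precisely what is not available for $\#L$ (which is why the paper must pass to $DET$ in that case). Your proposal never introduces this idea, so as written the induction does not go through.

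For comparison, the paper does not prove this theorem internally at all: it cites Beigel, Gill, and Hertrampf \cite{BGH} for closure of $\parityL$ under the stronger $NC^1$-reductions, notes that $AC^0$-closure follows, and explicitly defers a self-contained proof via Theorem 9.7 of \cite{CookNguyen} to future work. So a correct direct argument along your lines would actually go beyond what the paper does; to complete it you need to add the composition lemma --- if $R\in\parityL$ and $H\in F\parityL$ then the relation obtained by substituting $H(\vec x,\vec X)$ for an argument of $R$ is still in $\parityL$, proved by the even-cancellation gadget above --- after which your sequential-product constructions for $\neg$, $\wedge$, and bounded quantifiers, together with the outer induction on the reduction sequence, do finish the proof.
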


Beigel et al. prove that $\parityL$ is closed under $NC^1$-reductions \cite{BGH},
which certainly implies it is closed under weaker
$AC^0$-reductions.
%%  \cite{HRV} proves directly that $L^\parityL =
%% \parityL$, i.e., $\parityL$ is closed under logspace reductions and
%% hence under $AC^0$-reductions.
Chapter 9 of \cite{CookNguyen} provides a general criterion (Theorem 9.7) for such closure:
closure under finitely many applications of composition and string
comprehension.  \label{mod2lclosure}

By Definition \ref{def:matrix-powering}, matrix powering is a
function.  For convenience in aligning with Definition \ref{def:2L},
matrix powering mod 2 can also be considered as a decision problem on
tuples $(n,k,\langle i,j\rangle,A)$ where $A$ is a binary $n\times n$
matrix, and the answer is ``yes'' iff the $(i,j)^\textrm{th}$ entry of $A^k$
is $1$.
The proof of the following claim is nearly identical to that of Claim
\ref{claim:MPcomplnL}. 

\begin{claim} \label{claim:UMPcompleteMod2L}
Matrix powering over $\Z_2$ is $AC^0$-complete for $\parityL$.
\end{claim}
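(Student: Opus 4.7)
The plan is to mirror the structure of Claim \ref{claim:MPcomplnL}, showing both that matrix powering over $\Z_2$ lies in $\parityL$ and that it is $AC^0$-hard for $\parityL$. The key observation is that matrix powering over $\Z_2$ agrees entry-by-entry with matrix powering over $\N$ on a $\zo$-valued matrix, after each entry is reduced mod $2$; so the $AC^0$-reductions built for $\#L$ above transfer almost verbatim.

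For membership, I would treat matrix powering over $\Z_2$ as the decision problem ``$(A^k)[i,j]=1$'' as described in the paragraph before the claim, and introduce the auxiliary function $F(n,k,\langle i,j\rangle,A)$ that counts the number of paths of length exactly $k$ from $i$ to $j$ in the graph with adjacency matrix $A$ viewed over $\N$. A nondeterministic logspace machine can guess and verify such a path in the standard $\#STCON$-style simulation, so $F\in\#L$. By Remark \ref{remark:matrix-observation}, $F$ equals $A^k[i,j]$ computed over $\N$, and since $A$ is $\zo$-valued, its low-order bit coincides with the $(i,j)$ entry of $A^k$ computed over $\Z_2$. Definition \ref{def:2L} then places the problem in $\parityL$.

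For hardness, I would take an arbitrary $R\in\parityL$, use Definition \ref{def:2L} to obtain $F\in\#L$ with $R(\vec x,\vec X)\leftrightarrow F(\vec x,\vec X)(0)$, apply the reduction in the proof of Claim \ref{claim:nSTCONcomplnL} to produce an $AC^0$ function $H$ with $F(\vec x,\vec X)=\#STCON(n,s,t,p,H(\vec x,\vec X))$, and then apply Lemma \ref{STCON<=MP} to reduce $\#STCON$ to matrix powering over $\N$ on a $\zo$-valued matrix $A'$, yielding $\#STCON(n,s,t,p,G)=A'^{(p+2)}[s',t']$. Composing these $AC^0$-reductions, $R(\vec x,\vec X)$ holds iff $A'^{(p+2)}[s',t']$ is odd, which is exactly the $(s',t')$ entry of $A'^{(p+2)}$ computed over $\Z_2$. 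The only subtle point to verify — and really the sole place the argument could go wrong — is that every matrix produced by the borrowed reductions is $\zo$-valued, so that matrix powering over $\N$ followed by mod $2$ matches matrix powering over $\Z_2$ bit-for-bit; Lemma \ref{STCON<=MP} already records this property, and the multigraph-to-simple-graph conversion of Lemma \ref{lemma:MP<=nSTCON} is not needed since our input matrices are $\zo$-valued to begin with.
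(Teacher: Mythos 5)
Your proposal is correct and follows essentially the route the paper intends: the paper gives no separate argument for this claim beyond stating that the proof is ``nearly identical'' to that of Claim \ref{claim:MPcomplnL}, i.e., reuse Claim \ref{claim:nSTCONcomplnL} and Lemma \ref{STCON<=MP} with entries reduced mod $2$, which is exactly what you do. Your added observations --- that the intermediate matrices are $\zo$-valued so powering over $\N$ followed by mod $2$ agrees with powering over $\Z_2$, and that the multigraph conversion of Lemma \ref{lemma:MP<=nSTCON} is therefore unnecessary --- are precisely the adaptations the paper leaves implicit.
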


% Defines the classes #L and parityL, and proves closure 
% and complete  problems for those classes: #STCON and 
% matrix powering (resp., mod 2). 

\section{A theory for $\parityL$} \label{section:2L} 
%--------------------------------------------------------------------

In this section, we develop a finitely axiomatized theory for the
complexity class $\parityL$.  Matrix powering is complete
for $\parityL$ under $AC^0$ reductions, and has the convenient
property that each matrix entry can be stored in a single bit.  

$V^0(2)$ provides the base theory for this section.  It is associated
with the complexity class $AC^0(2)$, which is $AC^0$ with the addition
of mod $2$ gates.  Section 9D of \cite{CookNguyen} develops this
theory, and the universal conservative extension $\overline{V^0(2)}$.

The theory $V\parityL$ is obtained from the base theory $V^0(2)$ by
adding an axiom which states the existence of a solution to matrix
powering over $\Z_2$.  $V\parityL$ is a theory over the language
$\mathcal{L}_A^2$; below, we detail two methods to obtain the added
$\Sigma_1^B(\mathcal{L}_A^2)$ axiom, either by defining it explicitly,
or by defining it in the universal conservative extension
$\overline{V^0(2)}$ and using results of \cite{CookNguyen} to find a
provably equivalent formula in the base language.

Throughout this section, matrices will be encoded using only the $\Row$
and pairing functions, taking advantage of the fact that a matrix
entry from $\zo$ can be stored by a single bit of a bit string.  For a
string $W$ encoding $n \times n$ matrix $M$, entries are recovered as
$M[i,j] = W(i,j)$.  (Or, more precisely, $M[i,j]=1$ iff $W(i,j)$.)

\subsection{The theory $V\parityL$} \label{subsection:V2L}

By the nature of its construction, the theory $V\parityL$ 
corresponds to $\parityL$.  
As we will prove, the set of provably total functions of $V\parityL$
exactly coincides 
with the functions of $F\parityL$ (Definition \ref{def:FC}) 
and the $\Delta_1^B$-definable  
relations of $V\parityL$ are exactly the relations in $\parityL$.

\begin{definition}[String Identity $ID(n)$] \label{def:ID(n)}
Let the $AC^0$ string function $ID(n)$ have output the string that
encodes the $n \times n$ identity matrix.  $ID(n)=Y$ has the 
$\Sigma_0^B(\mathcal{L}_{FAC^0})$ bit definition:
$$Y(b) \leftrightarrow 
% \big( 
\operatorname{\it left}(b) < n \wedge 
\Pair(b) \wedge 
\operatorname{\it left}(b)=\operatorname{\it right}(b) 
% \big) 
$$
\end{definition}

\begin{definition}[$\Powtwo$] \label{def:Pow}
Let $X$ be a string representing an $n\times n$ matrix over $\zo$.
Then the string function $\Powtwo(n,k,X)$ has output $X^k$, the string
representing the $k^\textrm{th}$ power of the same
matrix.\footnote{This section features a slight, but innocuous, abuse 
of notation: in mathematical expressions, string $X$ stands for the
matrix it represents, e.g., $X^3$ and $XY$.}  
\end{definition}

\begin{definition}[$\PowSeqtwo$] \label{def:PowSeq}
Let $X$ be a string representing an $n \times n$ matrix over $\zo$,
and let $X^i$ be the string representing the $i^\textrm{th}$ power of
the same matrix.  Then the string function $\PowSeqtwo(n,k,X)$ has output
the list of matrices $[ID(n), X,X^2,\ldots,X^k]$.
\end{definition}

Although Definition \ref{def:matrix-powering} specifies that matrix
powering is the problem of finding $X^k$, the function $\PowSeqtwo$
computes every entry of every power of $X$ up to the $k^\textrm{th}$
power. 
Lemmas \ref{lemma:Pow<=PowSeq} and \ref{lemma:PowSeq<=Pow} show that
$\Powtwo$ and $\PowSeqtwo$ are $AC^0$-reducible to each other.  It is more
convenient for us to develop the theory $V\parityL$ with an axiom
asserting the existence of a solution for $\PowSeqtwo$.

\begin{lemma} \label{lemma:Pow<=PowSeq}
$\Powtwo$ is $AC^0$-reducible to $\PowSeqtwo$.
\end{lemma}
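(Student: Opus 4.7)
The plan is to exhibit a direct $\Sigma_0^B$ bit-definition of $\Powtwo$ from $\PowSeqtwo$ (together with the base language and the auxiliary functions $\Row$, $\langle\cdot,\cdot\rangle$ already available in $V^0$). Since an $AC^0$-reduction (Definition \ref{def:AC^0-reducibility}) just requires $\Powtwo$ to be $\Sigma_0^B$-definable from $\mathcal{L}_A^2 \cup \{\PowSeqtwo\}$, a single such definition will suffice; no chain of intermediate functions is needed.

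The reduction is the obvious one: to obtain $X^k$, compute the whole sequence $[ID(n),X,X^2,\ldots,X^k]$ and read off its last element. Concretely, let $Z = \PowSeqtwo(n,k,X)$, so that by Definition \ref{def:PowSeq} the matrix $X^i$ is encoded as the substring $Z^{[i]} = \Row(i,Z)$ for $i=0,1,\ldots,k$. Then I would bit-define
$$\Powtwo(n,k,X)(b) \leftrightarrow \Row(k,\PowSeqtwo(n,k,X))(b),$$
or equivalently, using the pairing encoding of matrix entries,
$$\Powtwo(n,k,X)(\langle i,j\rangle) \leftrightarrow \PowSeqtwo(n,k,X)(k,\langle i,j\rangle),$$
with the usual convention from Section \ref{notation} that all other bits are zero. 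A bounding term is obtained from $|\PowSeqtwo(n,k,X)|$.

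The right-hand side is a $\Sigma_0^B(\mathcal{L}_A^2 \cup \{\PowSeqtwo\})$ formula (indeed, open), so $\Powtwo$ is $\Sigma_0^B$-definable from $\{\PowSeqtwo\}$, which by Definition \ref{def:AC^0-reducibility} is exactly what it means for $\Powtwo$ to be $AC^0$-reducible to $\PowSeqtwo$. There is essentially no obstacle here: the only thing to be careful about is aligning the two layers of encoding (the outer $\Row$ layer that separates the sequence into matrices, and the inner pairing-function layer that stores each matrix's entries), so that the bit indexed $\langle i,j\rangle$ inside the $k$-th row of $\PowSeqtwo(n,k,X)$ is the same bit as the $(i,j)$-th entry of $\Powtwo(n,k,X)$. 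That matching is immediate from the conventions fixed in Section \ref{subsection:str-mat-lists}.
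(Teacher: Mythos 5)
Your proposal is correct and matches the paper's own proof: the paper defines $\Powtwo(n,k,X)(i) \leftrightarrow i<\langle n,n\rangle \wedge \PowSeqtwo(n,k,X)^{[k]}(i)$, which is exactly your ``read off the $k$-th row of the sequence'' reduction. The only cosmetic difference is the bounding term, where the paper uses the explicit $\mathcal{L}_A^2$ term $\langle n,n\rangle$ rather than $|\PowSeqtwo(n,k,X)|$.
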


$\Powtwo$ can be $\Sigma_0^B(\mathcal{L}_{FAC^0}\cup
\{\PowSeqtwo\})$-defined by:
\begin{equation} \label{eq:Pow<=PowSeq}
\Powtwo(n,k,X)(i) \leftrightarrow i<\langle n,n\rangle \wedge
\PowSeqtwo(n,k,X)^{[k]}(i) 
\end{equation}
% Remember, this is a bit-definition.  Semantic notion.

\begin{lemma} \label{lemma:PowSeq<=Pow}
$\PowSeqtwo$ is $AC^0$-reducible to $\Powtwo$.
\end{lemma}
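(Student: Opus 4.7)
My plan is to observe that $\PowSeqtwo(n,k,X)$ is nothing more than the parallel concatenation, via the $\Row$ encoding, of the $k+1$ matrices $ID(n), \Powtwo(n,1,X), \Powtwo(n,2,X), \ldots, \Powtwo(n,k,X)$. Definition \ref{def:AC^0-reducibility} permits me to $\Sigma_0^B$-define $\PowSeqtwo$ from any collection that includes $\Powtwo$, and crucially a single $\Sigma_0^B$ bit-definition may invoke $\Powtwo$ at many different indices $i \leq k$ simultaneously, since those many invocations are folded into a single existential quantifier on the bit index. So no block-matrix trick or iteration is needed; a direct bit-definition should suffice.

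Concretely, following the $\Row$ convention of Section \ref{subsection:str-mat-lists} and the ``interesting bits only'' convention surrounding (\ref{technicalBitDef}), I would write
$$\PowSeqtwo(n,k,X)^{[i]}(b) \leftrightarrow i \leq k \wedge b < \langle n, n\rangle \wedge \Powtwo(n,i,X)(b),$$
with the understanding that $\Powtwo(n,0,X) = ID(n)$ by the standard convention $X^0 = I$. Unfolded into the fully formal bit-definition required by (\ref{technicalBitDef}), this becomes
$$\PowSeqtwo(n,k,X)(c) \leftrightarrow \exists i,b < c \bigl(c = \langle i,b\rangle \wedge i \leq k \wedge b < \langle n,n\rangle \wedge \Powtwo(n,i,X)(b)\bigr),$$
which is a $\Sigma_0^B(\mathcal{L}_{FAC^0} \cup \{\Powtwo\})$ formula bounded by an $\mathcal{L}_A^2$ number term in $n$ and $k$. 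By Definition \ref{def:AC^0-reducibility}, taking the single auxiliary string function $F_1 = \Powtwo$, this exhibits $\PowSeqtwo$ as $AC^0$-reducible to $\Powtwo$.

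The only delicate point I anticipate is the boundary case $i = 0$: if one does not wish to bake $X^0 = I$ into the semantics of $\Powtwo$, then I would split the right-hand side above into two disjuncts, one for $i = 0$ substituting the $\Sigma_0^B$ bit-graph of $ID(n)$ from Definition \ref{def:ID(n)}, and one for $1 \leq i \leq k$ using $\Powtwo(n,i,X)(b)$. Either way the resulting formula remains $\Sigma_0^B(\mathcal{L}_{FAC^0} \cup \{\Powtwo\})$, so the reduction goes through. I do not expect any real obstacle here beyond making the encoding conventions line up.
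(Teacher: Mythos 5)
Your proposal is correct and is essentially the paper's own proof: the paper gives exactly this one-line $\Sigma_0^B(\mathcal{L}_{FAC^0}\cup\{\Powtwo\})$ bit-definition in (\ref{eq:PowSeq<=Pow}), merely phrased with $\Pair$, $\operatorname{\it left}$, and $\operatorname{\it right}$ instead of your explicit existential quantification over the pair components, and with the same implicit convention that the index-$0$ row is $ID(n)$. No further comparison is needed.
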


$\PowSeqtwo$ can be $\Sigma_0^B(\mathcal{L}_{FAC^0} \cup \{\Powtwo\})$-defined
by: 
\begin{equation}
  \PowSeqtwo(n,k,X)(i) \leftrightarrow
  i<\langle k,\langle n,n\rangle\rangle \wedge 
  \Pair(i) \wedge 
  \Powtwo(n, \operatorname{\it left}(i),X)(\operatorname{\it right}(i))
  \label{eq:PowSeq<=Pow}
\end{equation}
% Remember, this is a bit-definition.  Semantic notion.

We detail two ways to define the relation
$\delta_{\PowSeqtwo}(n,k,X,Y)$ representing the graph of
$\PowSeqtwo(n,k,X)=Y$ 
in Sections \ref{subsection:implicit} and \ref{explicit}, below.

\begin{definition} % [$V\parityL$]
The theory $V\parityL$ has vocabulary $\mathcal{L}_A^2$ and is
axiomatized by $V^0(2)$ and a $\Sigma_1^B(\mathcal{L}_A^2)$ axiom $PS_2$
(formula \ref{equation:implicit-PS}) 
stating the existence of a string value for the function
$\PowSeqtwo(n,k,X)$.
\end{definition}
The axiom is obtained below implicitly (formula 
(\ref{equation:implicit-PS}) in Section
\ref{subsection:implicit}) and explicitly (equation
(\ref{equation:explicit-PS}) in Section \ref{explicit}),
and is roughly equivalent to the statement ``there is some 
string $Z$ that witnesses the fact that $Y=\PowSeqtwo(n,k,X)$.''  
Thus it effectively 
states the existence of a solution to matrix powering over $\Z_2$.
Notice that it actually asserts the existence of the entire 
series of matrices $X^1, X^2, \ldots, X^k$, not just the matrix
$X^k$ as specified by Definition \ref{def:matrix-powering}. 

\subsection{Implicitly defining the new axiom}
\label{subsection:implicit} 
Using a series of intuitively ``helper'' functions, the
relation $\delta_{\PowSeqtwo}(n,k,X,Y)$ can be defined in the language
$\mathcal{L}_{FAC^0(2)} \supset \mathcal{L}_A^2$.  This method
requires the introduction of new function symbols, which can be used
to express the axiom $PS_2$ in $\overline{V^0(2)}$, a universal
conservative extension of $V^0(2)$.

Let $G(n,i,j,X_1,X_2)$ be the $AC^0$ string function which witnesses
the computation of the $(i,j)^\textrm{th}$ entry of the $n \times n$ matrix
product $X_1X_2$, bit-defined as:
$$ G(n,i,j,X_1,X_2)(b) \leftrightarrow b < n \wedge X_1(i,b) \wedge
X_2(b,j)$$
Each bit of $G(n,i,j,X_1,X_2)$ is the pairwise product of
the bits in the $i^\textrm{th}$ row of $X_1$ and the $j^\textrm{th}$ column of
$X_2$.  Thus the $(i,j)^\textrm{th}$ entry of the matrix product $X_1X_2
\bmod 2$ is 1 if and only if $PARITY(G(n,i,j,X_1,X_2))$ holds.
$\Prodtwo(n,X_1,X_2)$, the string function computing the product of two
matrices, can be bit-defined as:
$$ \Prodtwo(n,X_1,X_2)(i,j) \leftrightarrow i < n \wedge j<n \wedge
PARITY(G(n,i,j,X_1,X_2))$$ 
That is, the $(i,j)^\textrm{th}$ bit mod $2$ of the product matrix $X_1X_2$ is
$\sum_{b=0}^{n-1} X_1(i,b) X_2(b,j) \bmod 2$.  Each bit of the string 
$G(n,i,j,X_1,X_2)$ is one of the terms of this sum; hence
the parity of $G(n,i,j,X_1,X_2)$ is exactly the desired bit.

Observe that $G$ and $\Prodtwo$, as well as $\langle\cdot,\cdot\rangle$
and $\Pair$ (from Section \ref{notation}), have
$\Sigma_0^B(\mathcal{L}_{FAC^0(2)})$ 
definitions and are $AC^0(2)$ functions.  Let
$\delta_{\PowSeqtwo}(n,k,X,Y)$
be the $\Sigma_0^B(\mathcal{L}_{FAC^0(2)})$ formula
\begin{eqnarray}
  \forall b<|Y|,
  |Y|<\langle k,\langle n,n\rangle\rangle \wedge
  (Y(b) \supset \Pair(b)) 
%% % It's ok to omit \Pair(\operatorname{\it right}(b))
%% here, as the requirement of string equality for each item of the
%% list below IMPLIES that Y(b) \supset Pair(right(b)) .
  \wedge
  Y^{[0]} = ID(n) \wedge 
  \hspace{1in} \nonumber \\  \hspace{1in}
  \forall i<k ( Y^{[i+1]} = \Prodtwo(n,X,Y^{[i]})) 
  \label{formula:implicit-formula}
\end{eqnarray}

This formula asserts that the string $Y$ is the output of
$\PowSeqtwo(n,k,X)$.  The ``unimportant'' bits -- i.e., the bits of $Y$
that do not encode a piece of the list -- are required to be zero.
Thus $\PowSeqtwo(n,k,X)$ is the lexographically first string that encodes
the list of matrices $[X^1,X^2,\ldots,X^k]$.

$\overline{V^0(2)}$ is a conservative universal extension of $V^0(2)$,
defined in Section 9D of \cite{CookNguyen}. 
Theorem 9.67(b) of \cite{CookNguyen} asserts that there is a 
$\mathcal{L}_A^2$ term $t$ and a $\Sigma_0^B(\mathcal{L}_A^2)$ formula
$\alpha_{\PowSeqtwo}$ such that 
$$\exists Z<t,\alpha_{\PowSeqtwo}(n,k,X,Y,Z)$$
is provably equivalent to $(\ref{formula:implicit-formula})$ in
the theory $\overline{V^0(2)}$.

The axiom $PS_2$ used to define the theory $V\parityL$ is
\begin{equation} \label{equation:implicit-PS} 
\exists Y < m, \exists Z<t,\alpha_{\PowSeqtwo}(n,k,X,Y,Z)
\end{equation}

Formulas (\ref{formula:implicit-formula}) and
(\ref{equation:implicit-PS}) each specify that $Y$ witnesses the
intermediate strings $X^1$, $X^2$, \ldots, $X^k$.  String $Y$ is
\emph{not} required to witness any of the work performed in
calculating each entry of each product $X^j=X\times X^{j-1}$.  In
Section \ref{explicit}, $Y$ is used to witness \emph{all} of the
intermediate work, by using the notation $Y^{[x][y]}$ from Section
\ref{notation}.

\begin{lemma}\label{lemma:PowSeq_definable}
The matrix powering function $\PowSeqtwo$ is
$\Sigma_1^B(\mathcal{L}_A^2)$-definable in
$V\parityL$.
\end{lemma}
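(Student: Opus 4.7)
The plan is to take the defining formula to be essentially the existential part of the axiom $PS_2$ itself. Set
$$\varphi_{\PowSeqtwo}(n,k,X,Y) \;\equiv\; \exists Z < t,\; \alpha_{\PowSeqtwo}(n,k,X,Y,Z).$$
Since $\alpha_{\PowSeqtwo}$ is $\Sigma_0^B(\mathcal{L}_A^2)$, this formula is in $\Sigma_1^B(\mathcal{L}_A^2)$. To satisfy Definition~\ref{def:two-sorted-definability}, I need to show that $V\parityL$ proves $\forall n\,\forall k\,\forall X\,\exists!\,Y\,\varphi_{\PowSeqtwo}(n,k,X,Y)$, and that the graph equivalence $Y = \PowSeqtwo(n,k,X) \leftrightarrow \varphi_{\PowSeqtwo}(n,k,X,Y)$ holds.

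Existence is immediate: the axiom $PS_2$ of equation~(\ref{equation:implicit-PS}) directly asserts $\exists Y < m\,\varphi_{\PowSeqtwo}(n,k,X,Y)$.

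For uniqueness I would argue inside the universal conservative extension $\overline{V^0(2)}$ of $V^0(2)$, where by Theorem~9.67(b) of \cite{CookNguyen} the formula $\varphi_{\PowSeqtwo}(n,k,X,Y)$ is provably equivalent to formula~(\ref{formula:implicit-formula}). Given two witnesses $Y, Y'$ each satisfying (\ref{formula:implicit-formula}), I would prove by $\Sigma_0^B$-induction on $i \leq k$ that $Y^{[i]} = Y'^{[i]}$: the base case follows since $Y^{[0]} = ID(n) = Y'^{[0]}$, and the induction step uses that $\Prodtwo$ is a function, giving $Y^{[i+1]} = \Prodtwo(n,X,Y^{[i]}) = \Prodtwo(n,X,Y'^{[i]}) = Y'^{[i+1]}$. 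The size bound $|Y| < \langle k,\langle n,n\rangle\rangle$ together with the clause $Y(b) \supset \Pair(b)$ forces every ``non-list'' bit of both $Y$ and $Y'$ to $0$, so $Y = Y'$ bit by bit. The resulting uniqueness statement lives in $\mathcal{L}_A^2$, so conservativity of $\overline{V^0(2)}$ over $V^0(2)$ transfers it down to $V^0(2) \subseteq V\parityL$. The graph equivalence then follows from the construction of $\PowSeqtwo$ (Definition~\ref{def:PowSeq}), since (\ref{formula:implicit-formula}) exactly expresses ``$Y$ encodes the list $[ID(n), X, X^2, \ldots, X^k]$ with all non-list bits zero.''

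The main obstacle is the uniqueness induction: although the recurrence is intuitively clear, care is needed to ensure the induction formula $Y^{[i]} = Y'^{[i]}$ is $\Sigma_0^B$ in a vocabulary for which induction is available. This requires invoking the $\Row$ Elimination Lemma (5.52 in \cite{CookNguyen}) and the $\Sigma_0^B$-Transformation Lemma to convert the induction formula into an equivalent $\Sigma_0^B(\mathcal{L}_A^2)$ form, after which bounded $\Sigma_0^B$-induction, a theorem of $V^0 \subseteq V^0(2)$, completes the argument. Everything else is routine bookkeeping.
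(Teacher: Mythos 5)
Your proposal matches the paper's own proof in all essentials: the defining formula is $\exists Z<t\,\alpha_{\PowSeqtwo}(n,k,X,Y,Z)$, existence comes directly from the axiom $PS_2$, and uniqueness is established by a $\Sigma_0^B$-induction carried out in the conservative extension $\overline{V^0(2)}$ and then transferred to $V\parityL$. The only (cosmetic) difference is the choice of induction formula — the paper inducts on ``the first $i$ bits of $Y$ are unique'' while you induct on the row index $i\leq k$ showing $Y^{[i]}=Y'^{[i]}$ — which does not change the structure of the argument.
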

\begin{proof}
By construction,
$$ \PowSeqtwo(n,k,X) = Y \leftrightarrow \exists Z< t,
\alpha_{\PowSeqtwo}(n,k,X,Y,Z) $$
We need to show that
$$V\parityL \vdash \forall n,k \forall X \exists !  Y \exists Z<t,
\alpha_{\PowSeqtwo}(n,k,X,Y,Z)$$
% The uniqueness of $Z$ is not required; however, $Z$ is unique.  
% This fact helps show the uniqueness of $Y$.

The axiom $PS_2$ guarantees that such $Y$ and $Z$ exist.
(The function served by $Z$ is made explicit in Section \ref{explicit}.)
Formula (\ref{formula:implicit-formula}) uniquely specifies every bit of $Y$
using $n$, $k$, and $X$.
Consider the $\Sigma_0^B$-formula stating that the first $i$ bits of $Y$ are
unique.
By induction on this formula, $Y$ can be proved to be unique
 in the conservative extension $\overline{V^0(2)}$ together with the defining
axiom for $\PowSeqtwo$.
Thus $V\parityL$ also proves that $Y$ is unique.
% It can be shown that $Z$ exists uniquely by extending the proof of the
% First Elimination Theorem (9.17 in \cite{CookNguyen}).  The existence
% of $Z$ implies the existence of $Y$.  (The   In order to apply the First
% Elimination Theorem, it must be the case that the aggregate function
% $\PowSeqtwostar$ is definable in $V\parityL$, lemma
% \ref{lemma:PowSeq*_definable} below.
\end{proof}

\begin{corollary} \label{cor:Powtwo_definable}
$\Powtwo$ is $\Sigma_1^B(\mathcal{L}_A^2)$-definable in $V\parityL$.
\end{corollary}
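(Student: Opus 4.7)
The plan is to combine Lemma \ref{lemma:PowSeq_definable}, which $\Sigma_1^B(\mathcal{L}_A^2)$-defines $\PowSeqtwo$ in $V\parityL$, with the $AC^0$-reduction from $\Powtwo$ to $\PowSeqtwo$ given in equation (\ref{eq:Pow<=PowSeq}). Intuitively, once $\PowSeqtwo(n,k,X)$ is accessible inside $V\parityL$, extracting its $k$-th row to obtain $\Powtwo(n,k,X)$ is a $\Sigma_0^B$ operation, so the composition should remain $\Sigma_1^B(\mathcal{L}_A^2)$.

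First, I would write down a candidate defining formula $\beta(n,k,X,W)$ for the graph of $\Powtwo$, of the shape
\[ \exists Y < m,\ \exists Z < t,\ \Big( \alpha_{\PowSeqtwo}(n,k,X,Y,Z) \wedge \forall i < |W|,\ \big( W(i) \leftrightarrow (i < \langle n,n\rangle \wedge Y^{[k]}(i)) \big) \Big), \]
with $\alpha_{\PowSeqtwo}$ and $t$ as supplied by Lemma \ref{lemma:PowSeq_definable}, and $m$ an appropriate $\mathcal{L}_A^2$ bounding term for $W$. I would then invoke Lemma \ref{lemma:Row_2_elimination} (or the plain $\Row$ Elimination Lemma from \cite{CookNguyen}) to replace the occurrence of $Y^{[k]}(i)$ by a provably equivalent $\Sigma_0^B(\mathcal{L}_A^2)$ formula, placing $\beta$ in $\Sigma_1^B(\mathcal{L}_A^2)$.

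Next, I would verify in $V\parityL$ that $\forall n, k,\ \forall X,\ \exists !\, W,\ \beta(n,k,X,W)$. Existence is immediate from Lemma \ref{lemma:PowSeq_definable}: it supplies the required $Y$ and $Z$, after which the witness $W$ is built by $\Sigma_0^B$-comprehension --- available in $V^0(2) \subseteq V\parityL$ --- from the bit-condition on $W$ displayed above. Uniqueness follows because Lemma \ref{lemma:PowSeq_definable} proves $Y$ unique, and given $Y$ the bit-condition in $\beta$ pins $W$ down bit-by-bit.

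I do not expect a genuine obstacle here: the corollary is essentially immediate bookkeeping on top of Lemma \ref{lemma:PowSeq_definable} together with equation (\ref{eq:Pow<=PowSeq}). The only point worth double-checking is that nesting the $\Row$-application $Y^{[k]}$ inside a $\Sigma_0^B$ context does not push the formula above $\Sigma_1^B(\mathcal{L}_A^2)$, and this is exactly what Lemma \ref{lemma:Row_2_elimination} and its conservativity over $V^0$ were set up to handle.
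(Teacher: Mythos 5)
Your proposal is correct and follows essentially the same route as the paper, which derives the corollary directly from Lemma \ref{lemma:PowSeq_definable} together with the $\Sigma_0^B$-reduction of $\Powtwo$ to $\PowSeqtwo$ in equation (\ref{eq:Pow<=PowSeq}); you have simply spelled out the bookkeeping (comprehension for the witness $W$, uniqueness from the uniqueness of $Y$, and $\Row$ elimination) that the paper leaves implicit.
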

This corollary follows from Lemmas \ref{lemma:PowSeq_definable} and
\ref{lemma:Pow<=PowSeq} above.

In Lemma \ref{lemma:PowSeq*_definable} below, we show similarly that 
the aggregate function $\PowSeqtwostar$ is $\Sigma_1^B$-definable in
$V\parityL$.
Recall from Chapter 8 of \cite{CookNguyen} (Definition 8.9) that the aggregate
function $\PowSeqtwostar$ is the polynomially bounded string function that
satisfies
$$ | \PowSeqtwostar(b,W_1,W_2,X)| \leq \langle b, \langle |W_2|, \langle
|W_1|,|W_1|\rangle \rangle \rangle $$
and
$$ \PowSeqtwostar(b,W_1, W_2,X)(i,v) \leftrightarrow i<b \wedge
\PowSeqtwo((W_1)^{i},
(W_2)^{i},X^{[i]})(v)$$ 
The strings $W_1$, $W_2$, and $X$ encode
$b$-length lists of inputs to each place of $\PowSeqtwo$: $W_1$ encodes the
list of numbers $n_0,n_1,n_2,\ldots,n_{b-1}$; $W_2$ encodes the list
of numbers $k_0,k_1,\ldots,k_{b-1}$; and $X$ encodes the list of
strings $X_0,X_1, \ldots, X_{b-1}$; we are interested in raising the
$n_i\times n_i$ matrix encoded in $X_i$ to the powers
$1,2,\ldots,k_i$.  The string function $\PowSeqtwostar$ computes \emph{all}
these lists of powers; it aggregates many applications of the function
$\PowSeqtwo$.

\begin{lemma} \label{lemma:PowSeq*_definable}
The aggregate matrix powering function $\PowSeqtwostar$ is
$\Sigma_1^B(\mathcal{L}_A^2)$-definable in $V\parityL$.
\end{lemma}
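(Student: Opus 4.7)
The plan is to show that $\PowSeqtwostar$ is $AC^0(2)$-reducible to a single application of $\PowSeqtwo$, and then lift this reduction into a $\Sigma_1^B(\mathcal{L}_A^2)$-definition in $V\parityL$ by composing it with the $\Sigma_1^B$-definition of $\PowSeqtwo$ already established in Lemma \ref{lemma:PowSeq_definable}. The reduction exploits the blockwise behavior of matrix multiplication: once the $b$ input matrices are packaged into a single block-diagonal matrix, one call to $\PowSeqtwo$ simultaneously yields every intermediate power needed across all $b$ instances.

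Concretely, I would set $N = \max_{i<b}(W_1)^i$ and $K = \max_{i<b}(W_2)^i$; both are $\Sigma_0^B$-definable from the inputs as maxima over a short list of small numbers. I would then form the $bN \times bN$ block-diagonal matrix $M$ whose $i$-th diagonal block is the matrix encoded by $X^{[i]}$, embedded in the top-left corner of an $N \times N$ block and padded with zeros. The string encoding $M$ admits a direct $\Sigma_0^B(\mathcal{L}_A^2)$ bit-definition in terms of $(b, W_1, W_2, X)$. Applying $\PowSeqtwo(bN, K, M)$ produces a string whose $j$-th slice is $M^j$; by the block-diagonal identity $M^j = \operatorname{diag}\bigl((X^{[0]})^j, \ldots, (X^{[b-1]})^j\bigr)$ this slice contains every power $(X^{[i]})^j$ needed. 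The desired output $\PowSeqtwostar(b,W_1,W_2,X)$ is then $\Sigma_0^B$-extracted bit by bit: bit $(i,v)$ is set precisely when $i<b$, $v$ is a valid index into the list $[ID((W_1)^i), X^{[i]}, \ldots, (X^{[i]})^{(W_2)^i}]$, and the corresponding coordinate of $\PowSeqtwo(bN, K, M)$ inside block $i$ is true. Composing this $\Sigma_0^B$ extraction with the $\Sigma_1^B$-formula $\exists Z < t,\alpha_{\PowSeqtwo}(\cdots)$ from equation (\ref{equation:implicit-PS}) yields a $\Sigma_1^B(\mathcal{L}_A^2)$-definition of $\PowSeqtwostar$; the existence of $Y$ follows from the axiom $PS_2$ applied to the single packaged instance, and uniqueness follows bit by bit from uniqueness of $\PowSeqtwo$, exactly in the style of the closing argument of Lemma \ref{lemma:PowSeq_definable}.

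The main obstacle will be proving the block-diagonal multiplication identity inside $V\parityL$: from the defining formula (\ref{formula:implicit-formula}) one only gets $Y^{[0]} = ID(bN)$ together with $Y^{[j+1]} = \Prodtwo(bN, M, Y^{[j]})$, so the fact that each $Y^{[j]}$ actually has the claimed block-diagonal form must be proved by $\Sigma_0^B$-induction on $j$, using the bit-definition of $\Prodtwo$ to argue that off-block entries of the product vanish. This induction is available via the $\Sigma_0^B\comp$ axioms inherited from $V^0(2)$, but it is the step that carries the real content. A secondary bookkeeping obligation is to check that the length bound $|\PowSeqtwostar(b,W_1,W_2,X)| \leq \langle b, \langle |W_2|, \langle |W_1|, |W_1|\rangle\rangle\rangle$ demanded by the aggregate convention is respected by the extraction formula, which reduces to routine arithmetic with the pairing function and with $N$, $K$, and $bN$.
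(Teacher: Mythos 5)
Your proposal matches the paper's proof essentially step for step: the paper likewise introduces the $AC^0$ helper functions $\max$ and a block-diagonal packaging function $S(b,W_1,X)$, derives existence and uniqueness of the aggregate from the single instance $\PowSeqtwo(b\cdot n_{max},\,k_{max},\,S(b,W_1,X))$ provided by Lemma \ref{lemma:PowSeq_definable}, and recovers $\PowSeqtwostar$ by a blockwise $\Sigma_0^B$ extraction exactly as in your formula. The only difference is one of emphasis: the paper merely ``observes'' the block-diagonal power identity, whereas you correctly identify that establishing it inside the theory calls for a $\Sigma_0^B$-induction on the power using the bit-definition of $\Prodtwo$, a step the paper leaves implicit.
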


\begin{proof}
  For $\delta_{\PowSeqtwostar}$ a $\Sigma_1^B$ formula, we need to show both:
  \begin{equation} \label{eq:PS*def_axiom}
 \PowSeqtwostar(b,W_1,W_2,X) 
  = Y \leftrightarrow \delta_{\PowSeqtwostar}(b,W_1,W_2,X,Y) 
  \end{equation}
  and
  $$V\parityL \vdash \forall b \forall X, W_1, W_2 \exists ! Y
  \delta_{\PowSeqtwostar}(b,W_1,W_2,X,Y)$$ 

  We introduce the $AC^0$ functions $\operatorname{\it max}$ and $S$ in order to
simplify the definition of
  $\delta_{\PowSeqtwostar}$ over $\overline{V^0(2)}$,
  and then use Theorem 9.67(b) of \cite{CookNguyen} to obtain a provably 
  equivalent $\Sigma_1^B(\mathcal{L}_A^2)$ formula.  
  Since $\overline{V^0(2)}$ is a conservative extension of $V^0(2)$, 
  this suffices to show that $\PowSeqtwostar$ is 
  $\Sigma_1^B(\mathcal{L}_A^2)$-definable in $V\parityL$.
  In order to do so, we must establish a $\Sigma_1^B(\mathcal{L}_{FAC^0(2)})$ 
  formula equivalent to the desired formula $\delta_{\PowSeqtwostar}$.

  Let the function $\operatorname{\it max}(n,W)$ yield the maximum number from a
list of $n$
  numbers encoded in string $W$:
  \begin{equation}
  \operatorname{\it max}(n,W)=x \leftrightarrow \exists i<n \forall j<n, x=
(W)^i \geq
  (W)^j \label{eq:max_def}
  \end{equation}

  The string function $S$ can be bit-defined as follows.
  Consider two strings $W_1$, representing a list of $b$ numbers, and
  $X$, representing a list of $b$ matrices as above.  The function
  $S(b,W_1,X)$ returns the matrix with matrices $X_i$ (appropriately
  padded with zeroes) on the diagonal, and all other entries zero.
  Let $m = \operatorname{\it max}(b,W_1)$.  Let $X_i'$ be the $m\times m$ matrix
$X_i$
  padded with columns and rows of zeroes:
  $$ \begin{array}{rc}
    X_i & \overbrace{\begin{array}{ccc} 0 & \ldots & 0 \end{array}}^{m-n_i} \\
    m-n_i \left \{ \begin{array}{c} 0 \\ \vdots \\ 0 \end{array} \right . &
    \begin{matrix} \ddots & & \vdots \\  &\ddots \\ \ldots && 0
    \end{matrix}
  \end{array} $$
  Then $S(b,Y_1,X)$ is the string encoding the matrix:
  $$ \begin{bmatrix}
    \begin{array}{cc}
      X_0' \\ & X_2'
    \end{array}
    & 0 \\
    0 &
    \begin{array}{cc}
      \ddots \\ & X_{b-1}'
    \end{array}
  \end{bmatrix} $$
  All entries not in the matrices along the diagonal are $0$.

  The string function $S$ can be bit-defined:
  \begin{eqnarray*}
    S(b,W_1,X)(i,j) &\leftrightarrow&
    \exists a<b, \exists i', j' < (W_1)^a, 
    i < m \wedge j< m \wedge \\
    &&     i=i' + m \wedge j=j'+m \wedge X^{[a]}(i',j')
  \end{eqnarray*}
  By convention, the unspecified bits (i.e., bits $b$ that are not
  pair numbers) are all zero.  This bit-definition ensures that the
  string $S(b,W_1,X))$ is uniquely defined.

  We will use this matrix $S(b,W_1,X)$ and the existence and uniqueness of its
sequence of matrix powers to show the existence and uniqueness of the aggregate
matrix powering function. 

  Let $n_{max}$ denote $max(b,W_1)$ and $k_{max}$ denote $max(b,W_2)$.
  Recall Definition \ref{formula:implicit-formula} of
$\delta_{\PowSeqtwo}(n,k,X)$.  
  By convention, the aggregate function of $\PowSeqtwo$ is defined as:
  \begin{eqnarray}
    \PowSeqtwostar(b,W_1,W_2,X) = Y  \leftrightarrow 
    |Y| < \langle b, \langle k_{max},\langle n_{max},n_{max}\rangle \rangle
\rangle
    \wedge 
    \hspace{1in} \nonumber \\
    \forall j < |Y|, \forall i < b,
    \big[ (Y(j) \supset \Pair(j)) \wedge 
    \delta_{\PowSeqtwo}((W_1)^i,(W_2)^i,X^{[i]},Y^{[i]}) \big] 
  \label{formula:implicit-PowSeq2*}
  \end{eqnarray}

  The right-hand side of (\ref{formula:implicit-PowSeq2*}) is the
  relation $\delta_{\PowSeqtwostar}(b,W_1,W_2,X,Y)$; it has a provably
equivalent 
  $\Sigma_1^B(\mathcal{L}_A^2)$-formula $\exists Z<t,
\alpha_{\PowSeqtwostar}(b,W_1,W_2,X,Y,Z)$,
  used as the definition for $\PowSeqtwostar$ over $V\parityL$.

  It remains to prove the existence and uniqueness for
$\PowSeqtwostar(b,W_1,W_2,X)$.
  The functions $\operatorname{\it max}$ and $S$ allow for a straightforward
proof based
 on the existence and uniqueness of the string $A=\PowSeqtwo(b\cdot n_{max},
k_{max}, S(b,W_1,X))$,
where $n_{max} = \operatorname{\it max}(b,W_1)$ and $k_{max} = \operatorname{\it
max}(b,W_2)$.

We would like to define the string $B = \PowSeqtwostar(b,W_1,W_2,X)$ from $A$.
Notice that $B$ encodes a list of strings, each of which represents a power of
the matrix $S(b,W_1,X)$.
The string $A$ encodes nearly the same information, but in a different format:
$A$ is a list of \emph{lists}, each of which encodes the powers of a matrix from
the list $X$ of matrices.

Observe that:
$$ S(b,W_1,X)^i = 
\begin{bmatrix}
    \begin{array}{cc}
      X_0' \\ & X_2'
    \end{array}
    & 0 \\
    0 &
    \begin{array}{cc}
      \ddots \\ & X_{b-1}'
    \end{array}
  \end{bmatrix}^i
= 
\begin{bmatrix}
    \begin{array}{cc}
      X_0'^i \\ & X_2'^i
    \end{array}
    & 0 \\
    0 &
    \begin{array}{cc}
      \ddots \\ & X_{b-1}'^i
    \end{array}
  \end{bmatrix} $$
Also, 
$$X_j'^i = 
\begin{array}{rc}
    X_j^i & \overbrace{\begin{array}{ccc} 0 & \ldots & 0 \end{array}}^{m-n_j} \\
    m-n_j \left \{ \begin{array}{c} 0 \\ \vdots \\ 0 \end{array} \right . &
    \begin{matrix} \ddots & & \vdots \\  &\ddots \\ \ldots && 0
    \end{matrix}
  \end{array} $$
Thus we can ``look up" the required powers of each matrix.
Let $A$ and $B$ be shorthand:
$$A=\PowSeqtwo( b\cdot n_{max}, k_{max}, S(b,W_1,X))$$
$$B = \PowSeqtwostar (b,W_1,W_2,X)$$
Then we can define $\PowSeqtwostar$ from $\PowSeqtwo$ as follows.
\begin{eqnarray}
{B^{[m][p]}}(i,j) &\leftrightarrow&
m < b \wedge p \leq (W_2)^m \wedge i < (W_1)^m \wedge j<(W_1)^m \wedge \nonumber
\\
&&\exists m'<m, m'+1=m \wedge
A^{[p]}(n_{max}\cdot m'+i, n_{max}\cdot m'+j)
\label{eq:PS2star-ABdef}
\end{eqnarray}
Here, $m$ represents the number of the matrix in the list $X$, $p$ represents
the power of matrix $X_m$, and $i$ and $j$ represent the row and column; thus
the formula above defines the bit $(X_m^p)(i,j)$ for all matrices $X_m$ in the
list $X$.
By shifting around the pieces of this formula and adding quantifiers for $m$,
$p$, $i$, and $j$, it is clear that (\ref{eq:PS2star-ABdef}) can be translated
into the appropriate form for a $\Sigma_1^B$-definition of the graph of
$\PowSeqtwostar$.

Thus existence and uniqueness of $\PowSeqtwostar(b,W_1,W_2,X)$ follow from
existence and uniqueness of $\PowSeqtwo(b\cdot n_{max},k_{max},S(b,W_1,X)$.
Since $\max$ and $S$ are $AC^0$ functions, they can be used without increasing
the complexity of the definition (by use of Theorem 9.67(b) of
\cite{CookNguyen}, as above).
\end{proof}

\subsection{Explicitly defining the new axiom} \label{explicit}

The application of Theorem 9.67(b) in the previous section allowed for
an easy conversion from a readable formula (\ref{formula:implicit-formula})
expressing the graph $\delta_{\PowSeqtwo}(n,k,X,Y)$ to a
$\Sigma_1^B(\mathcal{L}_A^2)$ formula (\ref{equation:implicit-PS}) for $PS_2$.
In this section, we explicitly write a
$\Sigma_1^B(\mathcal{L}_{A}^2(PARITY))$ formula for $PS_2$ without
helper functions.  This formula clarifies the function of string $Z$
in formula (\ref{equation:implicit-PS}).

We can write an explicit $\Sigma_1^B(\mathcal{L}_{FAC^0(2)})$ formula
for $PS_2$ using $Z$ to witness \emph{all} of the intermediate work done
in computing each bit of each power of $X$: $X^1, X^2, \ldots, X^k$.
Recall from Section \ref{notation} the notation $Z^{[i]}$ for the
string in the $i^\textrm{th}$ row of the list of strings encoded in $Z$, and 
$Z^{[i][j]}$ for the $(i,j)^\textrm{th}$ string entry of the two-dimensional
matrix of strings encoded in string $Z$.  In particular, the (very
long!) string $Z$ encodes a list of $k$ strings $Z^{[\ell]}$ such that
for for $\ell < k$,
$Z^{[\ell][\langle i,j\rangle]}$ is the string witnessing the
computation of the $(i,j)^\textrm{th}$ bit of $X^{\ell+1}$ for $i,j<n$.
\label{Row_2_usage}

With this organization of $Z$, $\delta_{\PowSeqtwo}(n,k,X,Y)$ can be
$\Sigma_1^B(\mathcal{L}_{FAC^0(2)})$ defined as:
\begin{eqnarray} 
\exists Z < \langle k,\langle n,n\rangle\rangle, 
\big(Y^{[0]} = ID(n)\big) \wedge 
\Big( \forall \ell<k, \forall i,j<n,
 \big( 
Z^{[\ell+1][\langle i,j\rangle]}(b) \leftrightarrow 
X(i,b) \wedge Y^{[\ell]}(b,j) \big) 
\nonumber \\ 
\wedge Y^{[\ell+1]}(i,j) \leftrightarrow
PARITY(Z^{[\ell+1][\langle i,j\rangle]})
\Big) \hspace{5mm}
\label{explicit-formula}
\end{eqnarray}
For all $0 \leq \ell \leq k$, the string $Y^{[\ell]}$ encodes the
$\ell^\textrm{th}$ power of matrix $X$.
The sequence of matrices $X^0$, $X^1$, $X^2$, \ldots, $X^k$ are encoded as
strings $Y^{[0]}, Y^{[1]}, \ldots, Y^{[k]}$. 

We require a $\Sigma_1^B(\mathcal{L}_A^2)$ axiom for $\exists Y<m, 
\delta_{\PowSeqtwo}(n,k,X,Y)$.  Unfortunately, formula
(\ref{explicit-formula}) above includes $PARITY$, which is not a
function in $\mathcal{L}_A^2$.  The usage of $PARITY$ is most
conveniently eliminated by application of Theorem 9.67(b), as in the
previous section, obtaining a $\Sigma_1^B(\mathcal{L}_A^2)$ formula
$\exists Z<t, \beta_{\PowSeqtwo}(n,k,X,Y,Z)$ which $\overline{V^0(2)}$
proves equivalent to 
(\ref{explicit-formula}).  This formula can be made
explicit by the construction in the proof of the First Elimination
Theorem (9.17), which forms the basis of Theorem 9.67(b).  This
construction provides no special insight for the present case,
however, and so we omit it here.  

The $\Sigma_1^B(\mathcal{L}_A^2)$ axiom $PS_2$ for the theory
$V\parityL$ is: 
\begin{equation} \label{equation:explicit-PS}
 \exists Y<m, \exists Z<t, \beta_{\PowSeqtwo}(n,k,X,Y,Z)
\end{equation}
Notice that this appears identical to the axiom
(\ref{equation:implicit-PS}) 
obtained in Section \ref{subsection:implicit} by less explicit 
means.  The differences between (\ref{equation:implicit-PS}) and
(\ref{equation:explicit-PS}) are obscured in the formulae 
$\alpha_{\PowSeqtwo}$ and $\beta_{\PowSeqtwo}$.

\subsection{The theory $\overline{V\parityL}$}
\label{subsection:overline-V2L}

Here we develop the theory $\overline{V\parityL}$, a universal
conservative extension of $V\parityL$.  Its language
$\mathcal{L}_{F\parityL}$ contains function symbols for all 
string functions in $F\parityL$.  The defining axioms for the 
functions in $\mathcal{L}_{F\parityL}$ are based on their $AC^0$ 
reductions to the matrix powering function.  
Additionally, $\overline{V\parityL}$ has a
quantifier-free defining axiom for $\PowSeqtwo'$, a string function 
with inputs and output the same as $\PowSeqtwo$.

%% Recall the $\Sigma_0^B(\mathcal{L}_{FAC^0(2)})$ formula 
%% (\ref{formula:implicit-formula}) for the relation 
%% $\delta_{\PowSeqtwo}(n,k,X,Y)$ representing the graph of 
%% $\PowSeqtwo$.  
Formally, we leave the definition of
$\PowSeqtwo$ unchanged; let $\PowSeqtwo'$ be the function with the same
value as $\PowSeqtwo$ but the following quantifier-free defining axiom:
\begin{eqnarray}
  |Y|<\langle k,\langle n,n\rangle\rangle \wedge
  \big( b<|Y| \supset
  (\neg Y(b) \vee \Pair(b)) \big)
  \wedge
  Y^{[1]} = X \wedge 
   \hspace{1in} \nonumber \\  \hspace{2in}
  \big(
  i<k\supset  Y^{[i+1]} = \Prodtwo(n,X,Y^{[i]})
  \big)
  \label{formula:quant-free-def}
\end{eqnarray}
Notice that this formula is similar to
(\ref{formula:implicit-formula}), but has new free variables $b$ and
$i$.  The function $\PowSeqtwo$ satisfies this defining axiom for
$\PowSeqtwo'$.  $V^0(2)$, together with both defining axioms, proves
$\PowSeqtwo(n,k,X)=\PowSeqtwo'(n,k,X)$.

We use the following notation.  For a given formula $\varphi(z,\vec x,
\vec X)$ and $\mathcal{L}_{A}^2$-term $t(\vec x, \vec X)$, let
$F_{\varphi,t}(\vec x, \vec X)$ be the string function with bit
definition
\begin{equation} \label{formula:FvarphiDefAxiom}
  Y = F_{\varphi(z),t}(\vec x, \vec X) \leftrightarrow
  z < t(\vec x, \vec X) \wedge \varphi(z,\vec x, \vec X)
\end{equation}

\begin{definition}[$\mathcal{L}_{F\parityL}$]
We inductively define the language $\mathcal{L}_{F\parityL}$ of all
functions with (bit) graphs in $\parityL$.
Let $\mathcal{L}_{F\parityL}^0 = \mathcal{L}_{FAC^0(2)}\cup
\{\PowSeqtwo'\}$.  
Let $\varphi(z,\vec x,\vec X)$ be an open formula over
$\mathcal{L}_{F\parityL}^i$, and let $t=t(\vec x, \vec X)$ be a
$\mathcal{L}_A^2$-term.  
Then $\mathcal{L}_{F\parityL}^{i+1}$ is $\mathcal{L}_{F\parityL}^i$
together with the string function $F_{\varphi(z),t}$ that has defining 
axiom:
\begin{equation} \label{equation:F_varphi}
F_{\varphi(z),t}(\vec x, \vec X)(z) \leftrightarrow z < t(\vec x, \vec
X) \wedge \varphi(z,\vec x, \vec X)
\end{equation}

Let $\mathcal{L}_{F\parityL}$ be the union of the languages
$\mathcal{L}_{F\parityL}^i$.  Thus $\mathcal{L}_{F\parityL}$ is the
smallest set containing $\mathcal{L}_{FAC^0(2)}\cup\{\PowSeqtwo'\}$ and
with the defining axioms for the functions $F_{\varphi(z),t}$ for
every open $\mathcal{L}_{F\parityL}$ formula $\varphi(z)$.
\end{definition}

Notice that $\mathcal{L}_{F\parityL}$ has a symbol for every string
function in $F\parityL$.  By Exercise 9.2 of \cite{CookNguyen}, for every number
function $f$
in $F\parityL$, there is a string function $F$ in $F\parityL$ such that
$f=|F|$.  Thus there is a term in $\mathcal{L}_{F\parityL}$ for every
number function in $F\parityL$.

\begin{definition}[$\overline{V\parityL}$] The universal theory
$\overline{V\parityL}$ over language $\mathcal{L}_{F\parityL}$ has the
axioms of  
$\overline{V^0(2)}$ together with the quantifier-free defining axiom 
(\ref{formula:quant-free-def}) for $\PowSeqtwo'$ and the above defining
axioms (\ref{equation:F_varphi}) for the functions
$F_{\varphi(z),t}$. 
\end{definition}

\begin{theorem} \label{theorem:overline-UCE-2L}
$\overline{V\parityL}$ is a universal conservative extension of
  $V\parityL$.
\end{theorem}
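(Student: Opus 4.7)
The plan is to mimic the proof template of Chapter 9 of \cite{CookNguyen} (specifically, the construction of $\overline{V^0(2)}$ from $V^0(2)$, and more generally the proof that $\overline{VC}$ is a universal conservative extension of $VC$ for an appropriate complexity class $C$). Three things must be checked: (i) $\overline{V\parityL}$ is a universal theory, (ii) every axiom of $V\parityL$ is provable in $\overline{V\parityL}$, and (iii) every $\mathcal{L}_A^2$-consequence of $\overline{V\parityL}$ is already a theorem of $V\parityL$. Parts (i) and (ii) are quick: universality is immediate from inspection of the axioms (the axioms of $\overline{V^0(2)}$ are universal, (\ref{formula:quant-free-def}) is quantifier-free, and each $F_{\varphi(z),t}$ axiom (\ref{equation:F_varphi}) is open); and the axiom $PS_2$ of $V\parityL$ is witnessed in $\overline{V\parityL}$ by $Y := \PowSeqtwo'(n,k,X)$, once one verifies that the defining axiom (\ref{formula:quant-free-def}) implies $\delta_{\PowSeqtwo}(n,k,X,\PowSeqtwo'(n,k,X))$ and hence (after applying Theorem 9.67(b) of \cite{CookNguyen}) the matrix axiom $\exists Y{<}m,\exists Z{<}t,\alpha_{\PowSeqtwo}(n,k,X,Y,Z)$.

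For the conservativity step (iii), which is the main obstacle, the plan is a model-theoretic expansion argument, done inductively on the layers $\mathcal{L}_{F\parityL}^i$. Given any model $\mathcal{M} \models V\parityL$, I will expand it stage-by-stage to a model of $\overline{V\parityL}$. At stage $0$, start from the expansion of $\mathcal{M}$ to $\overline{V^0(2)}$ guaranteed by Cook-Nguyen, then interpret $\PowSeqtwo'$: by Lemma \ref{lemma:PowSeq_definable}, $\PowSeqtwo$ is $\Sigma_1^B(\mathcal{L}_A^2)$-definable in $V\parityL$, which gives a unique string $Y$ in $\mathcal{M}$ for each $(n,k,X)$ satisfying $\delta_{\PowSeqtwo}$; interpret $\PowSeqtwo'$ as the function sending $(n,k,X)$ to this $Y$. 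One then checks that the quantifier-free axiom (\ref{formula:quant-free-def}) holds: the bounds on $|Y|$, the paring condition on active bits, the $Y^{[1]}=X$ identity, and the product recurrence $Y^{[i+1]}=\Prodtwo(n,X,Y^{[i]})$ all follow from $\delta_{\PowSeqtwo}$. At stage $i+1$, for each open $\mathcal{L}_{F\parityL}^i$-formula $\varphi(z,\vec x,\vec X)$ and each $\mathcal{L}_A^2$-term $t$, interpret $F_{\varphi(z),t}$ using bounded string comprehension, which is provable in $V\parityL$ via $\Sigma_0^B$-COMP relativized to the previously added functions. Taking the union of all these expansions produces an $\mathcal{L}_{F\parityL}$-structure $\mathcal{M}'$, which by construction satisfies every axiom of $\overline{V\parityL}$.

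The hard part is justifying that bounded comprehension over open $\mathcal{L}_{F\parityL}^i$-formulas really is available in $V\parityL$ at each stage — that is, that $V\parityL$ proves $\Sigma_0^B(\mathcal{L}_{F\parityL}^i)\comp$ for every $i$. This is where the closure of $\parityL$ under $AC^0$-reductions (Theorem \ref{theorem:2L-closed}) is essential: it ensures that every open $\mathcal{L}_{F\parityL}^i$-formula $\varphi$ represents a relation in $\parityL$, so its bit-by-bit characteristic function is computable by iterated matrix powering over $\Z_2$ and hence witnessed by a suitable composition of $\PowSeqtwo$-applications that is $\Sigma_1^B$-definable in $V\parityL$. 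Concretely, I would proceed by induction on $i$, at each step applying Theorem 9.67(b) of \cite{CookNguyen} to convert the $\Sigma_0^B(\mathcal{L}_{F\parityL}^i)$ bit-graph of the desired witness into a provably equivalent $\Sigma_1^B(\mathcal{L}_A^2)$ formula, and then using $\Sigma_1^B$-COMP (available in $V\parityL$ because $PS_2$ supplies the needed witnesses) to produce the required string.

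Once the expansion $\mathcal{M}'$ has been constructed for an arbitrary $\mathcal{M} \models V\parityL$, conservativity follows by the usual contrapositive: any $\mathcal{L}_A^2$-sentence $\sigma$ not provable in $V\parityL$ has a countermodel $\mathcal{M}$, whose expansion $\mathcal{M}'$ then falsifies $\sigma$ while satisfying $\overline{V\parityL}$, so $\sigma$ is not provable in $\overline{V\parityL}$ either.
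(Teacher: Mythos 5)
Your outer structure (universality, extension, then conservativity via stage-by-stage treatment of the languages $\mathcal{L}_{F\parityL}^i$) matches the paper's plan, but the step you yourself flag as the hard part is handled incorrectly, and this is a genuine gap. You claim that comprehension over open $\mathcal{L}_{F\parityL}^i$-formulas can be obtained by converting the bit-graph of the desired witness via Theorem 9.67(b) of \cite{CookNguyen} and then invoking ``$\Sigma_1^B$-COMP (available in $V\parityL$ because $PS_2$ supplies the needed witnesses).'' Neither half of this works. Theorem 9.67(b) applies to $\Sigma_0^B(\mathcal{L}_{FAC^0(2)})$ formulas, not to formulas over the enlarged languages $\mathcal{L}_{F\parityL}^i$ containing $\PowSeqtwo'$ and the previously added $F_{\varphi(z),t}$, so it cannot be applied at the inductive step as you describe. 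More seriously, $V\parityL$ does not prove $\Sigma_1^B$-COMP: it is axiomatized by $V^0(2)$ plus the single axiom $PS_2$, and if it proved $\Sigma_1^B$-COMP it would contain $V^1$ and its provably total functions would include all of $FP$, contradicting Theorem \ref{theorem:provablytotalF2L} unless $F\parityL = FP$. The appeal to Theorem \ref{theorem:2L-closed} also conflates the semantic fact that each open $\mathcal{L}_{F\parityL}^i$-formula defines a $\parityL$ relation with the provability claim that $V\parityL$ proves the corresponding comprehension instances; the paper is explicit (Definition \ref{def:Sigma_0^B-definability} versus Corollary \ref{cor:definability}) that these are different things.

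The missing ingredient is the aggregate-function machinery, which is exactly what the paper's proof uses and why Lemma \ref{lemma:PowSeq*_definable} (the $\Sigma_1^B(\mathcal{L}_A^2)$-definability of $\PowSeqtwostar$) is proved beforehand. In the paper's induction, each theory $\mathcal{T}^i$ is shown to prove $\Sigma_0^B(\mathcal{L}^i)\comp$ by Lemma 9.22 and Theorem 8.15 of \cite{CookNguyen}: one needs both $F_i$ and its aggregate $F_i^*$ to be definable, together with the provable identity (\ref{equation:165}), because a $\Sigma_0^B$ formula mentioning a new function symbol may apply it to arguments depending on the comprehension index, so polynomially many applications must be witnessed simultaneously by a single string rather than by gathering $\Sigma_1^B$ witnesses (which is precisely what $\Sigma_0^B$-COMP alone cannot do and what your $\Sigma_1^B$-COMP shortcut was meant to paper over). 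With that machinery in place, each $F_{i+1}$ is definable in $\mathcal{T}^i$, so each extension is by definition and hence conservative; your model-theoretic wrapper (expanding an arbitrary model of $V\parityL$ stage by stage and concluding by contraposition) is a legitimate alternative packaging of this, but it does not remove the need to establish provable $\Sigma_0^B(\mathcal{L}^i)$ comprehension at every stage via the aggregate function, and as written your argument for that step fails.
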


\begin{proof}
Obviously, $\overline{V\parityL}$ is a universal theory.  It is an
extension of $V\parityL$ because all axioms of $V\parityL$ are
theorems of $\overline{V\parityL}$.

To show that it is conservative, we build a sequence of theories
$\mathcal{T}_i$ 
analogous to the languages $\mathcal{L}_{F\parityL}^i$, and show that
each $\mathcal{T}_{i+1}$ is a universal conservative extension of
$\mathcal{T}_i$.  By Theorem 
5.27 in \cite{CookNguyen} (extension by definition), it
suffices to show that $F_{i+1}$ is definable in $\mathcal{T}_i$.
(Alternatively, (\ref{equation:F_varphi}) provides a bit-definition of
$F_{i+1}$, so Corollary 5.39 of \cite{CookNguyen} gives the same result.)

Let $\mathcal{T}_0 = V\parityL$, and let $\mathcal{T}_{i+1}$ be the
language obtained from $T_i$ by adding a new function $F_{i+1}$ of the
form $F_{\varphi(z),t}$ and its defining axiom (\ref{equation:F_varphi}), where
$\varphi(z)$ is a quantifier-free formula in the language
$\mathcal{L}^i$ of $\mathcal{T}^i$.  Thus
$\overline{V\parityL}$ extends every theory $\mathcal{T}^i$; it is their
union.
$$ \overline{V\parityL} = \bigcup_{i \geq 0} \mathcal{T}^i $$

The rest of the proof consists of proving the claim that, for each
$i\geq 0$, the function $F_{i+1}$ is definable in $\mathcal{T}^i$.
We proceed by induction, using results from \cite{CookNguyen}.

As a base case, $V\parityL$ proves $\Sigma_0^B(\mathcal{L}_{FAC^0(2)})
\comp$ by Lemma 9.67(c), since it extends $V^0(2)$.  Formula
(\ref{formula:implicit-formula}) gives a
$\Sigma_0^B(\mathcal{L}_{FAC^0(2)})$ defining axiom for $\PowSeqtwo$ (and
$\PowSeqtwo'$).  By Lemma 9.22, 
$V\parityL(\PowSeqtwo, \PowSeqtwostar)$ proves (\ref{equation:165}) for
$\PowSeqtwo$;
also,  both $\PowSeqtwo$ and $\PowSeqtwostar$ are 
$\Sigma_0^B(\mathcal{L}_{FAC^0(2)})$-definable in $V\parityL$.
(Notice that formulas (\ref{formula:implicit-formula}) and
(\ref{formula:implicit-PowSeq2*}) provide exactly 
these definitions for $\PowSeqtwo$ and $\PowSeqtwostar$, respectively.) 
By Lemma 9.22 and Theorem 8.15, $V\parityL$ proves 
$\Sigma_0^B(\mathcal{L}_{FAC^0(2)} \cup \{ \PowSeqtwo \}) \comp$.

Inductively, $\mathcal{T}^i$ proves $\Sigma_0^B(\mathcal{L}^{i-1})
\comp$, $F_i$ and $F_i^*$ are
$\Sigma_0^B(\mathcal{L}^{i-1})$-definable in $\mathcal{T}^i$, and
$\mathcal{T}^i(F_i,F_i^*)$ proves equation (165):
\begin{equation} \label{equation:165}
 \forall i<b, F_i^*(b,\vec Z, \vec X)^{[i]} =
F_i((Z_1)^i, \ldots, (Z_k)^i, X_1^{[i]}, \ldots, X_n^{[i]})
\end{equation}

Thus by Theorem 8.15, $\mathcal{T}^i$ proves
$\Sigma_0^B(\mathcal{L}^i) \comp$. 

By construction, $F_{i+1}(\vec x, \vec X)$ has defining axiom
(\ref{formula:FvarphiDefAxiom}) for $\varphi$ some open
$\mathcal{L}^i$ formula.  Thus by Lemma 9.22, $F_{i+1}$ is definable
in $\mathcal{T}^i$.  (Also, the next inductive hypothesis established:
$\mathcal{T}^{i+1}$ extends $\mathcal{T}^i$, which can
$\Sigma_0^B(\mathcal{L}^i)$-define $F_i$ and $F_i^*$ and prove
(\ref{equation:165}) for $F_i$ and $F_i^*$.)
\end{proof}

\subsection{Provably total functions of $V\parityL$}
\label{section:provably-total-V2L}

The remaining results follow directly from \cite{CookNguyen}.  We
restate them here for convenience.

\begin{claim}
The theory $\overline{V\parityL}$ proves the axiom schemes
$\Sigma_0^B(\mathcal{L}_{FAC^0(2)}) \comp$, \\
$\Sigma_0^B(\mathcal{L}_{FAC^0(2)}) \ind$, and
$\Sigma_0^B(\mathcal{L}_{FAC^0(2)}) \minaxiom$.
\end{claim}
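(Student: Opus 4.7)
The plan is to leverage results already cited from Cook--Nguyen and reduce the claim to a standard equivalence between the three axiom schemes. Since $\overline{V\parityL}$ is a conservative universal extension of $V\parityL$ (Theorem \ref{theorem:overline-UCE-2L}), and $V\parityL$ in turn extends $V^0(2)$, it suffices to show all three schemes are provable in $\overline{V\parityL}$ itself.

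First, I would establish $\Sigma_0^B(\mathcal{L}_{FAC^0(2)})\comp$. This is essentially a restatement of a fact already used in the proof of Theorem \ref{theorem:overline-UCE-2L}: by Lemma 9.67(c) of \cite{CookNguyen}, $V^0(2)$ proves $\Sigma_0^B(\mathcal{L}_{FAC^0(2)})\comp$, and since $\overline{V\parityL}$ extends $V^0(2)$, the scheme is inherited. There is nothing new to prove here; the COMP scheme is already secured by the base theory.

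Next I would obtain $\Sigma_0^B(\mathcal{L}_{FAC^0(2)})\ind$ and $\Sigma_0^B(\mathcal{L}_{FAC^0(2)})\minaxiom$ by appealing to the standard equivalence of the three schemes, which in \cite{CookNguyen} is proved for $\Sigma_0^B$ over $V^0$ (Corollary 5.9 and related results) and routinely lifts to $\Sigma_0^B(\mathcal{L})$ over any theory extending $V^0$ provided $\mathcal{L}$ is closed under the basic syntactic operations. The standard argument is: given a $\Sigma_0^B(\mathcal{L}_{FAC^0(2)})$ formula $\varphi(x)$, apply $\comp$ (with parameter the intended bound) to obtain a string $Y$ satisfying $Y(i)\leftrightarrow \varphi(i)$ for $i$ below that bound; then $\ind$ and $\minaxiom$ for $\varphi$ reduce to induction and minimization statements about bits of $Y$, which are provable from $\Sigma_0^B$-$\ind$ (hence from $V^0$, hence from our theory) applied to the bit-predicate of $Y$.

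The main obstacle, such as it is, lies in checking that the equivalence really does transfer to the extended language: one must verify that the Cook--Nguyen proof of the equivalence uses only features available in $\overline{V\parityL}$, namely $\Sigma_0^B$-induction for the bit-graph formulas of strings, plus the ability to form the characteristic string via $\comp$. Both are guaranteed by the fact that $\overline{V\parityL}$ extends $V^0(2)$ and, by Theorem \ref{theorem:overline-UCE-2L}, has function symbols for exactly the $F\parityL$ functions with the appropriate defining axioms. Thus the argument goes through uniformly, and all three schemes follow.
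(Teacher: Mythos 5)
Your proposal is correct and takes essentially the same route as the paper, which disposes of this claim by citing Lemma 9.23 of \cite{CookNguyen}: the content of that lemma's proof is exactly what you reconstruct, namely that $\Sigma_0^B(\mathcal{L}_{FAC^0(2)})\comp$ is inherited from the base theory (the paper's own appeal to Lemma 9.67(c) in the proof of Theorem \ref{theorem:overline-UCE-2L}), and that $\ind$ and $\minaxiom$ then follow from $\comp$ by the standard equivalence over $V^0$. The only cosmetic caveat is that the scheme over $\mathcal{L}_{FAC^0(2)}$ is strictly speaking provable in $\overline{V^0(2)}$ (whose language contains those function symbols) rather than in $V^0(2)$ itself, a looseness of phrasing the paper shares.
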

See Lemma 9.23 \cite{CookNguyen}.

\begin{claim}
\begin{enumerate}
  \renewcommand{\labelenumi}{(\alph{enumi})}
  \item A string function is in $F\parityL$ if and only if it is represented
    by a string function symbol in $\mathcal{L}_{F\parityL}$.
   \item A relation is in $\parityL$ if and only if it is represented by an open
formula of $\mathcal{L}_{F\parityL}$ if and only if it is
    represented by a $\Sigma_0^B(\mathcal{L}_{F\parityL})$ formula.
 \end{enumerate}
\end{claim}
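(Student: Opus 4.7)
The plan is to follow the pattern of Theorem 9.68 of \cite{CookNguyen}, using two main ingredients developed earlier in the paper: closure of $\parityL$ under $AC^0$-reductions (Theorem \ref{theorem:2L-closed}) and the $AC^0$-completeness of matrix powering over $\Z_2$ for $\parityL$ (Claim \ref{claim:UMPcompleteMod2L}). Both parts of the claim are proved together, with part (b) essentially a corollary of part (a) once the quantifier-elimination machinery is in place.

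For part (a), the ``if'' direction proceeds by induction on $i$ to show that every symbol in $\mathcal{L}_{F\parityL}^i$ represents a function in $F\parityL$. The base case $\mathcal{L}_{F\parityL}^0 = \mathcal{L}_{FAC^0(2)} \cup \{\PowSeqtwo'\}$ is immediate: $AC^0(2) \subseteq \parityL$ takes care of the $\mathcal{L}_{FAC^0(2)}$ symbols, and $\PowSeqtwo'$ is matrix powering over $\Z_2$, which lies in $\parityL$ by Claim \ref{claim:UMPcompleteMod2L}. For the inductive step, any newly added $F_{\varphi(z),t}$ with $\varphi$ an open $\mathcal{L}_{F\parityL}^i$-formula has its bit-graph $AC^0$-reducible to the finitely many $F\parityL$-functions appearing in $\varphi$; by Theorem \ref{theorem:2L-closed}, the bit-graph lies in $\parityL$, so $F_{\varphi(z),t} \in F\parityL$. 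For the ``only if'' direction, given $F \in F\parityL$, the bit-graph of $F$ is in $\parityL$ and therefore $AC^0$-reducible to $\PowSeqtwo'$. Unwinding the $AC^0$-reduction of Definition \ref{def:AC^0-reducibility} produces a sequence of auxiliary string functions, each $\Sigma_0^B$-definable from $\mathcal{L}_{FAC^0(2)} \cup \{\PowSeqtwo'\}$ together with the previously defined ones, and finally an open $\mathcal{L}_{F\parityL}$-formula $\varphi$ with $\mathcal{L}_A^2$-bound $t$ such that $F = F_{\varphi(z),t}$; the successive applications of the $F_{\varphi,t}$ construction (used to eliminate the bounded quantifiers arising from $\Sigma_0^B$-definability) produce symbols at successively higher levels of $\mathcal{L}_{F\parityL}$.

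For part (b), the three equivalences are almost automatic given part (a). An open formula is trivially a $\Sigma_0^B(\mathcal{L}_{F\parityL})$ formula. A relation $R$ represented by a $\Sigma_0^B(\mathcal{L}_{F\parityL})$ formula lies in $\parityL$: by part (a) each function symbol denotes an $F\parityL$-function, and $\parityL$ absorbs bounded-quantifier combinations of its relations via Theorem \ref{theorem:2L-closed} (the bounded quantifier $\exists y < t\,\psi$ is itself $AC^0$-reducible to its components). Finally, if $R(\vec x,\vec X) \in \parityL$, Definition \ref{def:2L} provides $F \in \#L$ with $R \leftrightarrow F(\vec x,\vec X)(0)$; the characteristic bit is then a string function of $F\parityL$, hence by part (a) has a symbol in $\mathcal{L}_{F\parityL}$, and $R$ is represented by the open formula obtained by reading off its $0$-th bit.

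The main obstacle is the quantifier elimination in the ``only if'' direction of (a): translating a generic $AC^0$-reduction of $F$ to $\PowSeqtwo'$ into a sequence of open formulas and $F_{\varphi,t}$ applications requires care, because $\Sigma_0^B$-definability involves bounded quantifiers that have to be absorbed into new function symbols rather than left in the formula. This is precisely the role of the inductively defined levels $\mathcal{L}_{F\parityL}^i$, and the argument should mirror the proof of Theorem 9.65 in \cite{CookNguyen} (where the same mechanism handles the analogous step for $VTC^0$ and counting); the only new wrinkle here is that the initial ``seed'' function is $\PowSeqtwo'$ rather than a counting function, but the combinatorial structure of the induction is identical.
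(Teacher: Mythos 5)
Your proposal is correct and takes essentially the same approach as the paper: the paper proves this claim simply by citing Lemma 9.24 of \cite{CookNguyen} (the general result for a class closed under $AC^0$-reductions with a complete seed function), and your induction on the levels $\mathcal{L}_{F\parityL}^i$, with the bounded-quantifier absorption via the $F_{\varphi(z),t}$ construction, is exactly the content of that cited proof instantiated with Theorem \ref{theorem:2L-closed} and Claim \ref{claim:UMPcompleteMod2L}. There is no gap beyond the level of detail the paper itself delegates to the citation.
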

See Lemma 9.24 \cite{CookNguyen}.

\begin{corollary}
Every $\Sigma_1^B(\mathcal{L}_{F\parityL})$ formula $\varphi^+$ is
equivalent in $\overline{V\parityL}$ to a $\Sigma_1^B(\mathcal{L}_A^2)$
formula $\varphi$.
\end{corollary}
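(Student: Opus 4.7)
The plan is to show that every function symbol in $\mathcal{L}_{F\parityL} \setminus \mathcal{L}_A^2$ can be eliminated from a $\Sigma_1^B$ formula by replacing each application with a fresh existentially quantified string variable, constrained by a $\Sigma_1^B(\mathcal{L}_A^2)$-definition of that function, and then pulling the new quantifiers into the outermost existential block. The result is a formula of the correct syntactic shape, and the equivalence is provable in $\overline{V\parityL}$ because the theory contains (or conservatively extends) the defining axioms.

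First I would put $\varphi^+$ into prenex form, obtaining $\exists \vec Y < \vec r,\ \psi^+(\vec Y, \vec x, \vec X)$ with $\psi^+$ a $\Sigma_0^B(\mathcal{L}_{F\parityL})$ formula. Then I would proceed by induction on the stratification $\mathcal{L}_{F\parityL}^0 \subseteq \mathcal{L}_{F\parityL}^1 \subseteq \cdots$ from the definition of $\mathcal{L}_{F\parityL}$. For the base case $\mathcal{L}_{F\parityL}^0 = \mathcal{L}_{FAC^0(2)} \cup \{\PowSeqtwo'\}$, the symbols of $\mathcal{L}_{FAC^0(2)}$ are already handled by the $\Sigma_0^B$-Transformation Lemma together with the fact that $\overline{V^0(2)}$ is a universal conservative extension of $V^0(2)$, while $\PowSeqtwo'$ can be eliminated using Lemma \ref{lemma:PowSeq_definable}: each occurrence $\PowSeqtwo'(n,k,X)$ in a $\Sigma_0^B$ context is replaced by a fresh bound variable $Y_0$, conjoined with the $\Sigma_1^B(\mathcal{L}_A^2)$ formula $\exists Z<t,\alpha_{\PowSeqtwo}(n,k,X,Y_0,Z)$, and the new $Y_0$ and $Z$ are prepended to the existential prefix. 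For the inductive step, each symbol $F_{\varphi(z),t}$ from $\mathcal{L}_{F\parityL}^{i+1} \setminus \mathcal{L}_{F\parityL}^i$ has the explicit bit-definition (\ref{equation:F_varphi}) in terms of an open formula over $\mathcal{L}_{F\parityL}^i$, so by induction it can be eliminated in favour of a $\Sigma_1^B(\mathcal{L}_A^2)$ formula, replacing each term $F_{\varphi(z),t}(\vec s,\vec T)$ by a witness variable bounded by $t(\vec s,\vec T)$ whose bit graph is given by the inductively simplified form of $\varphi$.

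Once every extended-language symbol has been replaced, all remaining bound string quantifiers introduced by the process are existential and bounded, so they can be pulled to the front of the prefix, leaving a $\Sigma_0^B(\mathcal{L}_A^2)$ matrix. Bounded number quantifiers and Boolean connectives present no obstruction since $\Sigma_0^B(\mathcal{L}_A^2)$ is closed under both. Provable equivalence in $\overline{V\parityL}$ holds at each step because the substitution uses exactly the defining axioms of $\PowSeqtwo'$ and the $F_{\varphi(z),t}$ (for functions above $\mathcal{L}_{F\parityL}^0$), and the $\Sigma_1^B(\mathcal{L}_A^2)$ reformulations of $\PowSeqtwo'$ are provable from $V\parityL \subseteq \overline{V\parityL}$ via the uniqueness half of Lemma \ref{lemma:PowSeq_definable}.

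The main obstacle I expect is bookkeeping rather than conceptual: ensuring that when a function term occurs inside a negation, implication, or bounded quantifier, the replacement by an existentially quantified witness preserves equivalence rather than yielding only one direction. The standard trick, used in the proof of the First Elimination Theorem (Theorem 9.17 of \cite{CookNguyen}) and in Theorem 9.67(b) which this corollary essentially packages, is to invoke uniqueness of the witness: because each defining formula $\exists Z<t,\alpha_F(\vec x,\vec X, Y, Z)$ determines $Y$ uniquely (as established for $\PowSeqtwo'$ in Lemma \ref{lemma:PowSeq_definable}, and trivially for the bit-defined $F_{\varphi(z),t}$), the substituted existential is equivalent to the original term occurrence in any Boolean context. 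With uniqueness in hand the induction goes through, and the corollary follows immediately; indeed it is essentially the $\parityL$-analogue of the transformation repeatedly applied via Theorem 9.67(b) in Sections \ref{subsection:implicit} and \ref{explicit}.
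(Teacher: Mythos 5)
Your overall plan---eliminating the symbols of $\mathcal{L}_{F\parityL}$ by induction on the stratification and replacing each application by an existentially quantified witness constrained by a $\Sigma_1^B(\mathcal{L}_A^2)$ defining formula---is the right general shape, and it is essentially the argument behind Corollary 9.25 of \cite{CookNguyen}, which is all the paper itself offers as proof of this statement. However, there is a genuine gap at the step you dismiss as bookkeeping: occurrences of a function term under a bounded \emph{number} quantifier whose variable appears in the term's arguments. Uniqueness of the witness is enough to pass between $\exists Y(\mathrm{def}_F \wedge \ldots)$ and $\forall Y(\mathrm{def}_F \supset \ldots)$, i.e.\ to handle negations and implications, but it does not let you pull the fresh existential string quantifier out across $\forall z<t$: after substitution you are left with a formula of the shape $\forall z<t\,\exists Y\,\exists Z<t'\,\alpha(\ldots)$, and converting this to $\exists W\,\exists Z'\,\forall z<t\,\alpha(\ldots,W^{[z]},Z'^{[z]})$ requires the theory to prove a collection/replacement step that is not available for free in these weak theories. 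Note that such configurations are unavoidable here; even the defining formula (\ref{formula:implicit-formula}) has $\Prodtwo$ applied to arguments containing the bounded variable $i$.

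The missing ingredient is precisely the aggregate-function machinery: one needs that the aggregate $F^*$ of each eliminated function is itself $\Sigma_1^B$-definable in the theory and that the theory proves (\ref{equation:165}), so that a single string provably collects the polynomially many witnesses arising under the bounded quantifier. This is exactly why the paper proves Lemma \ref{lemma:PowSeq*_definable} for $\PowSeqtwostar$ and carries $F_i^*$ together with equation (\ref{equation:165}) through the induction in the proof of Theorem \ref{theorem:overline-UCE-2L} (via Lemma 9.22 and Theorem 8.15 of \cite{CookNguyen}); the First Elimination Theorem 9.17 and Theorem 9.67(b), which you cite as packaging the result, are themselves proved with this aggregate apparatus rather than with uniqueness alone. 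As written, your quantifier-pulling step would fail; repairing it means adding the $F^*$'s (or an equivalent provable collection principle) to your induction, after which the argument coincides with the one the paper delegates to Corollary 9.25 of \cite{CookNguyen}.
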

See Corollary 9.25 \cite{CookNguyen}.

\begin{corollary} \label{cor:definability-2L}
\begin{enumerate}
    \renewcommand{\labelenumi}{(\alph{enumi})}
    \item A function is in $F\parityL$ iff it is
      $\Sigma_1^B(\mathcal{L}_{F\parityL})$-definable in
      $\overline{V\parityL}$ 
      iff it is $\Sigma_1^B(\mathcal{L}_{A}^2)$-definable in
      $\overline{V\parityL}$.
    \item A relation is in $\parityL$ iff it is
      $\Delta_1^B(\mathcal{L}_{F\parityL})$-definable in
      $\overline{V\parityL}$ iff it is
      $\Delta_1^B(\mathcal{L}_A^2)$-definable in $\overline{V\parityL}$.
\end{enumerate}
\end{corollary}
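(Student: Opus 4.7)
The plan is to assemble this corollary from the two results immediately preceding it (the analogues of Lemmas 9.24 and Corollary 9.25), the universal conservative extension theorem (Theorem \ref{theorem:overline-UCE-2L}), and the general witnessing theorem for $V\parityL$ promised in Section \ref{introduction} (the $\parityL$ analogue of Theorem 9.26 of \cite{CookNguyen}). The equivalence between $\Sigma_1^B(\mathcal{L}_{F\parityL})$-definability and $\Sigma_1^B(\mathcal{L}_A^2)$-definability in $\overline{V\parityL}$ is the easy piece and is exactly the content of the preceding Corollary: any $\Sigma_1^B(\mathcal{L}_{F\parityL})$ graph-defining formula is $\overline{V\parityL}$-provably equivalent to a $\Sigma_1^B(\mathcal{L}_A^2)$ one, and this translation preserves both existence and uniqueness, so I would apply it uniformly to the left-hand equivalence in (a) and to both halves of the $\Delta_1^B$ equivalence in (b).

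For the direction $F \in F\parityL \Rightarrow$ definable in part (a), I would appeal to the preceding Claim, which gives a symbol for every $F\parityL$-function in $\mathcal{L}_{F\parityL}$. Its defining axiom is either (\ref{formula:quant-free-def}) for $\PowSeqtwo'$ or (\ref{equation:F_varphi}) for a composed function $F_{\varphi,t}$; either way it is open and thus trivially $\Sigma_1^B(\mathcal{L}_{F\parityL})$, while $\overline{V\parityL}$ proves the required totality and uniqueness directly from its axioms. Combining this with the translation in the previous paragraph yields $\Sigma_1^B(\mathcal{L}_A^2)$-definability in $\overline{V\parityL}$.

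The harder direction, and what I expect to be the main obstacle, is the converse: if $F$ is $\Sigma_1^B(\mathcal{L}_A^2)$-definable in $\overline{V\parityL}$ then $F \in F\parityL$. Here I would first invoke Theorem \ref{theorem:overline-UCE-2L} to descend from $\overline{V\parityL}$ to $V\parityL$, after which a provable $\forall \exists!$ statement for the graph of $F$ in the base language $\mathcal{L}_A^2$ is in hand. The remaining step is the witnessing theorem: every $\Sigma_1^B$-definable function of $V\parityL$ lies in $F\parityL$. This is the real content; it is not proved in the excerpt but is an instance of the general Chapter 9 framework, requiring an anchored/free-cut-free normal form for $V\parityL$-proofs of $\Sigma_1^B$ formulas and then a recursive extraction of an $F\parityL$-witnessing function, using the $PS_2$ axiom of $V\parityL$ (together with the fact that $\PowSeqtwo$ is the relevant complete function) to convert each existential $\exists Y$ introduced in the proof into an explicit $F\parityL$-computable bit-definition.

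For part (b), I would reduce to part (a) on both sides. If $R \in \parityL$, then by the preceding Claim it is represented by an open $\mathcal{L}_{F\parityL}$ formula, which is simultaneously $\Sigma_1^B$ and $\Pi_1^B$, hence $\Delta_1^B(\mathcal{L}_{F\parityL})$-definable in $\overline{V\parityL}$; applying the translation of the preceding Corollary to both forms gives the $\Delta_1^B(\mathcal{L}_A^2)$ version. Conversely, if $R$ is $\Delta_1^B(\mathcal{L}_A^2)$-definable in $\overline{V\parityL}$ via a $\Sigma_1^B$ formula $\varphi$ and a $\Pi_1^B$ formula $\psi$ with $\overline{V\parityL} \vdash \varphi \leftrightarrow \psi$, I would package the two-sided definition into a $\Sigma_1^B$-definition of the characteristic function $f_R$ (its graph is $(y=1 \wedge \varphi)\vee(y=0 \wedge \neg\psi)$, and $\neg\psi$ is $\Sigma_1^B$), so that part (a) places $f_R \in F\parityL$ and hence $R \in \parityL$.
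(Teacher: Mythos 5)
Your assembly of the easy pieces (representation of every $F\parityL$ function by an $\mathcal{L}_{F\parityL}$ symbol, translation of $\Sigma_1^B(\mathcal{L}_{F\parityL})$ into $\Sigma_1^B(\mathcal{L}_A^2)$ via the preceding corollary, and the reduction of (b) to (a) through the characteristic function) matches what the general framework does, but for the hard direction your route genuinely diverges from the paper's. The paper proves this corollary simply by citing Corollary 9.26 of \cite{CookNguyen}, and the point of that citation is that no cut-elimination argument and no descent to $V\parityL$ is needed: since $\overline{V\parityL}$ is a \emph{universal} theory (Theorem \ref{theorem:overline-UCE-2L}) whose function symbols all denote $F\parityL$ functions, a provable $\Sigma_1^B$ definability statement is witnessed, by the Herbrand Theorem for universal theories, by an $\mathcal{L}_{F\parityL}$ term, which is an $F\parityL$ function by the preceding claim. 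You instead descend to $V\parityL$ by conservativity and invoke a witnessing theorem for $V\parityL$ obtained from anchored, free-cut-free proofs with the $PS_2$ axiom. That is viable in principle (it mirrors the direct witnessing proofs for $V^0$ and $V^1$ in earlier chapters of \cite{CookNguyen}), but it is substantially heavier machinery, and within this paper it runs the dependency backwards: Theorem \ref{theorem:provablytotalF2L}, which is exactly the statement that the $\Sigma_1^B$-definable functions of $V\parityL$ are the $F\parityL$ functions, is itself derived \emph{from} the present corollary together with conservativity. So unless you really carry out the independent cut-elimination witnessing argument you sketch (the one step you explicitly leave unproved), your proof is circular relative to the paper's ordering, whereas the Herbrand-based argument via the universal extension gives the result directly and is what the cited Corollary 9.26 buys.
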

See Corollary 9.26 \cite{CookNguyen}.

The next two theorems follow from Corollary \ref{cor:definability}
and Theorem \ref{theorem:overline-UCE-2L}.
Their analogs are Theorem 9.10 and Corollary 9.11 \cite{CookNguyen}.
\begin{theorem} \label{theorem:provablytotalF2L}
A function is provably total ($\Sigma_1^1$-definable) in $V\parityL$ iff
it is in $F\parityL$.
\end{theorem}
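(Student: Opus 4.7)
The plan is to follow the template of Theorem 9.10 in \cite{CookNguyen}, using the universal conservative extension $\overline{V\parityL}$ as a bridge. The two directions require quite different tools: the ``easy'' direction uses the already-established definability results together with conservativity, while the ``hard'' direction is a witnessing argument that extracts an $F\parityL$ algorithm from a $V\parityL$-proof of totality.

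For the forward direction (if $F \in F\parityL$ then $F$ is provably total in $V\parityL$), I would invoke Corollary \ref{cor:definability-2L}(a), which gives a $\Sigma_1^B(\mathcal{L}_A^2)$ formula $\varphi(\vec x, \vec X, Y)$ such that $Y = F(\vec x, \vec X) \leftrightarrow \varphi(\vec x, \vec X, Y)$ and
\[ \overline{V\parityL} \vdash \forall \vec x \forall \vec X \exists ! Y\, \varphi(\vec x, \vec X, Y). \]
Since this sentence lies in the base vocabulary $\mathcal{L}_A^2$ of $V\parityL$, and since Theorem \ref{theorem:overline-UCE-2L} says $\overline{V\parityL}$ is a conservative extension of $V\parityL$, we can pull this provability back: $V\parityL$ itself proves $\forall \vec x \forall \vec X \exists ! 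Y\, \varphi$, i.e., $F$ is $\Sigma_1^1$-definable (in particular, provably total) in $V\parityL$.

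For the reverse direction, suppose $F$ is provably total in $V\parityL$, so there is a $\Sigma_1^B(\mathcal{L}_A^2)$ formula $\exists Y\, \psi(\vec x, \vec X, Y)$ (with $\psi$ essentially $\Sigma_0^B$ after Skolemizing the bounded existentials) and $V\parityL \vdash \forall \vec x \forall \vec X \exists Y\, \psi$. By Theorem \ref{theorem:overline-UCE-2L}, $\overline{V\parityL}$ proves the same. Because $\overline{V\parityL}$ is a \emph{universal} theory whose language $\mathcal{L}_{F\parityL}$ has a symbol for every function in $F\parityL$, Herbrand's theorem (in the form used to prove Theorem 9.10 of \cite{CookNguyen}) produces a term $t(\vec x, \vec X)$ in $\mathcal{L}_{F\parityL}$ (possibly via a quantifier-free case split among finitely many such terms, combined into one function by conditional definition) such that $\overline{V\parityL} \vdash \psi(\vec x, \vec X, t(\vec x, \vec X))$. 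Since every term of $\mathcal{L}_{F\parityL}$ denotes an $F\parityL$ function, the function computed by $t$ witnesses $Y$ and therefore defines $F$, placing $F \in F\parityL$.

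The main obstacle will be the backward direction: cleanly invoking Herbrand extraction and the case-combination step to produce a single $F\parityL$ witness function rather than a disjunction of terms. In the Cook--Nguyen framework this is handled uniformly for any ``nice'' universal conservative extension of a $VC$-type theory, so once we have Theorem \ref{theorem:overline-UCE-2L} together with the fact that $\mathcal{L}_{F\parityL}$ is closed under the kind of conditional definition needed to glue Herbrand disjuncts, the argument is the standard one and does not demand new machinery particular to $\parityL$. I would simply cite Theorem 9.10 of \cite{CookNguyen} as giving exactly this schema, and observe that the hypotheses are met here by Theorem \ref{theorem:overline-UCE-2L} and Corollary \ref{cor:definability-2L}.
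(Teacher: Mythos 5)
Your proposal is correct and follows essentially the same route as the paper, which proves this theorem by invoking the Cook--Nguyen template (Theorem 9.10 and Corollary 9.11 of \cite{CookNguyen}) through the universal conservative extension $\overline{V\parityL}$ of Theorem \ref{theorem:overline-UCE-2L} together with the definability results; your write-up merely spells out the two directions (conservativity for the forward direction, Herbrand witnessing over the universal theory $\overline{V\parityL}$ for the converse) that the paper leaves implicit in its citation.
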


\begin{theorem}
  A relation is in $\parityL$ iff it is
  $\Delta_1^B$-definable in $V\parityL$ iff it is $\Delta_1^1$-definable
  in $V\parityL$.
\end{theorem}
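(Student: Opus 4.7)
The plan is to establish the three-way equivalence by chasing the cycle
$$R \in \parityL \;\Longrightarrow\; R \text{ is } \Delta_1^B\text{-definable in } V\parityL \;\Longrightarrow\; R \text{ is } \Delta_1^1\text{-definable in } V\parityL \;\Longrightarrow\; R \in \parityL.$$
Most of the heavy lifting has already been absorbed into the preceding results, so the task reduces to aligning those statements carefully and moving between the languages $\mathcal{L}_A^2$ and $\mathcal{L}_{F\parityL}$.

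For the first implication, I would invoke Corollary~\ref{cor:definability-2L}(b), which characterizes $\parityL$ as exactly the relations that are $\Delta_1^B(\mathcal{L}_A^2)$-definable in $\overline{V\parityL}$. Since both defining formulas and their biconditional lie in the base language $\mathcal{L}_A^2$, the conservativity of $\overline{V\parityL}$ over $V\parityL$ established in Theorem~\ref{theorem:overline-UCE-2L} transfers the provable equivalence down to $V\parityL$, giving $\Delta_1^B$-definability in $V\parityL$ itself. The second implication is immediate: every $\Sigma_1^B$ (respectively $\Pi_1^B$) formula is already $\Sigma_1^1$ (resp.\ $\Pi_1^1$), so the pair witnessing $\Delta_1^B$-definability automatically witnesses $\Delta_1^1$-definability.

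For the closing implication, I would suppose $R$ is $\Delta_1^1$-definable in $V\parityL$ via equivalent $\Sigma_1^1$ and $\Pi_1^1$ representations $\sigma$ and $\pi$. Then $\neg\pi$ is $\Sigma_1^1$, and $V\parityL$ proves both $R \leftrightarrow \sigma$ and $\neg R \leftrightarrow \neg\pi$. The characteristic function $f_R$ (from Definition~\ref{def:RDET}) has the bounded graph $(y=1 \wedge \sigma) \vee (y=0 \wedge \neg\pi)$ with $y<2$, which after pulling string quantifiers to the front is $\Sigma_1^1$; existence and uniqueness of $y$ follow from the provable equivalence of $\sigma$ and $\pi$. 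Thus $f_R$ is provably total in $V\parityL$, so Theorem~\ref{theorem:provablytotalF2L} places $f_R$ in $F\parityL$, and Definition~\ref{def:FC} then puts $R$ into $\parityL$, closing the loop.

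The main obstacle is not in the present argument but in the machinery it invokes: Theorem~\ref{theorem:provablytotalF2L} supplies the bridge from provable totality back to the complexity class, and that theorem in turn rests on the construction of $\overline{V\parityL}$ together with Corollary~\ref{cor:definability}. Once those tools are in place, the present proof is essentially a syntactic tracking of formula classes between $V\parityL$ and its conservative extension, with no new inductive or complexity-theoretic content required.
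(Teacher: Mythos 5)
Your proposal is correct and follows essentially the route the paper intends: the paper states this theorem without detailed proof as the analog of Cook--Nguyen's Corollary 9.11, obtained from the provably-total-function characterization (Theorem~\ref{theorem:provablytotalF2L}), the definability results for the conservative extension (Corollary~\ref{cor:definability-2L} with Theorem~\ref{theorem:overline-UCE-2L}), and exactly the standard characteristic-function argument you spell out for the $\Delta_1^1$-to-$\parityL$ direction. So your write-up is a faithful filling-in of the cited derivation rather than a different approach.
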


%% We will prove this theorem (following the model of Chapter 9) by
%% introducing $\overline{V\parityL}$, a universal conservative extension 
%% of $V\parityL$.

% Defines the theory for parityL, logspace counting mod 2.

\section{A theory for $\#L$}
\label{section:DET}

In this section, we develop a finitely axiomatized theory for the
complexity class $DET$, the $AC^0$-closure of $\#L$.  Claims
\ref{claim:nSTCONcomplnL} and \ref{claim:MPcomplnL} showed that
$\#STCON$ and unary matrix powering over $\N$ are complete for $DET$.

By adopting the notation $DET$, we have invoked the fact that
computing the determinant of an integer-valued matrix is
$AC^0$-complete for $\# L$.  However, the established notation has
only two types: 
numbers $\in \N$ and binary strings.  In this section it will be
convenient to be able to capture negative numbers, so we introduce a
method for encoding them as strings.  This binary encoding will also
be convenient in 
computing the determinant, which we expect to be a large value.

$VTC^0$ provides the base theory for this section.  It is associated
with the complexity class $TC^0$, which is $AC^0$ with the addition of
threshold gates.  Section 9C of \cite{CookNguyen} develops this
theory, and its universal conservative extension $\overline{VTC^0}$.
There, it is shown that the function $\numones$ is $AC^0$-complete for 
$TC^0$ (where $numones(y,X)$ is the number of elements of $X$ that
are $< y$, i.e., the number of $1$ bits of $X$ among its first $y$
bits.)  Thus $\mathcal{L}_{VTC^0}$ is $\mathcal{L}_A^2$.
$VTC^0$ extends $V^0$ by the 
addition of an axiom for $numones$.

The theory $V\# L$ is obtained from the theory $VTC^0$ by the
addition of an axiom which states the existence of a solution to
matrix powering over $\Z$.  It is a theory over the base language
$\mathcal{L}_A^2$.  The added $\Sigma_1^B(\mathcal{L}_A^2)$ axiom is
obtained below by a method similar to the previous section. 
The universal conservative extension $\overline{V\#L}$ is obtained as
before, and its language $\mathcal{L}_{F\#L}$ contains symbols for all
string functions of $F\#L$.

\subsection{Encoding integers in bit-strings}

We encode an integer $x \in \Z$ as a bit-string $X$ in the following
way.  The first bit $X(0)$ indicates the ``negativeness'' (sign) of
$x$: $x< 0$ iff $X(0)$.
The rest of $X$ consists of a binary representation of $x$, from least 
to most significant bit.
This section defines addition and multiplication for integers encoded
in this way, and extends the similar functions given in
\cite{CookNguyen} for binary encodings of natural numbers.

\cite{CookNguyen} includes notation for encoding a number $n\in \N$ as
a binary string $X$.
$$ \bin(X) = \sum_i X(i) \cdot 2^i $$
Addditionally, for $X$ and $Y$ two strings, the string functions
``binary addition'' 
$X+Y$ and ``binary multiplication'' $X \times Y$ are defined (Sections
4C.2 and 5.2).

Our string $X$ is shifted one bit to accomodate the sign of integer
$x$. 
We define the number function $\intsize$ analogously, to map from
a binary string $X$ (representing an integer) to its size $|X|$.  Thus
the integer $x$ can be
recovered\footnote{This is a slight abuse of notation, since $X(i)$ is
  true/false, not one/zero-valued.} 
 as $(-1)^{X(0)} \cdot \intsize(X)$.
$$ \intsize (X) = \sum_i X(i+1) \cdot 2^i $$

We write $X+_\Z Y=Z$ for the string function ``integer addition'' and
$X\times_\Z Y=Z$ for the string function ``integer multiplication.''
Define the relations $R_{+_\Z}$ and $R_{\times_Z}$ by
%% \footnote{This is
%%   an abuse of notation, since our number addition function
%% $+$ is not defined for negative numbers.}
$$ R_{+_\Z}(X,Y,Z) \leftrightarrow
(-1)^{Z(0)} \cdot \intsize (Z) = (-1)^{X(0)} \cdot \intsize(X) +
(-1)^{Y(0)} \cdot \intsize(Y) $$
% and
$$ R_{\times_\Z}(X,Y,Z) \leftrightarrow
\intsize(Z) = \intsize(X) \times \intsize(Y) 
\wedge 
 ( Z(0) \leftrightarrow (X(0) \oplus Y(0))) 
$$
Here, $\oplus$ represents exclusive or.

The bit-definitions for these functions will be similar to the
definitions for binary numbers in \cite{CookNguyen} (binary addition
in Chapter 4, binary multiplication in Chapter 9), and additionally
handle the complication of having signed numbers.

We adopt the relation $\Carry$ from \cite{CookNguyen}, and modify it
to work with encoded integers.  $\Carry_\Z(i,X,Y)$ holds iff both
integers represented by $X$ and $Y$ have the same sign, and there is
a carry into bit $i$ when computing $X+Y$.
$$ \begin{array}{rl}
\Carry_\Z(i,X,Y) \leftrightarrow &
(X(0) \leftrightarrow Y(0)) \wedge \\
& \exists k<i, k>0 \wedge X(k) \wedge Y(k) \wedge
\forall j<i [k<j \supset (X(j) \vee Y(j))] 
\end{array} $$
Notice that $\Carry$ includes the check that $X$ and $Y$ have the same
sign. 

When the integers have different signs, we need to perform
subtraction.  The order of subtraction will depend on which of the two
integers $X$ and $Y$ has larger size.
\newcommand{\FirstIntDominates}{\operatorname{\it FirstIntDominates}}
The relation $\FirstIntDominates(X,Y)$ holds iff
$\intsize(X) > \intsize(Y)$. 
$$ \begin{array}{rl}
\FirstIntDominates(X,Y) \leftrightarrow
|X| \geq |Y| \wedge & 
\exists k \leq |X|, (X(k) \wedge \neg Y(k)  \wedge \\
& (\forall j \leq |X|,  (k<j \wedge Y(j)) \supset X(j))
\end{array} $$
Suppose that $\intsize(X) > \intsize(Y)$ and the integers have
different signs.  Then we can think of the subtraction 
$\intsize(X) - \intsize(Y) = \intsize(Z)$ as computing, bit by bit,
the integer $Z$ which adds to $Y$ to obtain $X$.  The relation
$\Borrow(i,X,Y)$ holds 
iff $\intsize{X} > \intsize{Y}$ and there is a carry from bit $i$
when performing the addition $Z+Y$ (that is, the $i^\textrm{th}$ bit
of $X$ is ``borrowed from'' in the subtraction).
$$ \begin{array}{rl}
\Borrow(i,X,Y) \leftrightarrow &
(X(0) \leftrightarrow \neg Y(0)) \FirstIntDominates(X,Y) \wedge \\
& \exists k<i, k>0 \wedge \neg X(k) \wedge Y(k) \wedge
\forall j<i (k<j \supset (\neg X(j) \vee Y(j)))
\end{array} $$
Notice that $\Borrow$ includes the check that $X$ and $Y$ have
different signs, and $\intsize(X) > \intsize(Y)$.

Given these relations, it is now possible to bit-define integer
addition:
\begin{eqnarray}
(X +_\Z Y)(i) & \leftrightarrow &
\big[
\big( i=0 \wedge X(0) \wedge \FirstIntDominates(X,Y) \big)
\vee 
  \label{eq:sign-right1} \\
&& \big(i=0 \wedge Y(0) \wedge \FirstIntDominates(Y,X) \big)
\big]                                 
  \label{eq:sign-right}  \\ 
&& \vee \big[ 
    i>0 \wedge X(i) \oplus Y(i) \oplus \\
&& \big(\Carry_\Z(i,X,Y)
  \label{eq:carry-right} 
  \vee \Borrow(i,X,Y)
  \vee \Borrow(i,Y,X)
  \big) \big] 
  \hspace{3mm}
\end{eqnarray}
Lines \ref{eq:sign-right1} and \ref{eq:sign-right} ensure that the
sign of the resulting integer is correct.  The clauses on line
\ref{eq:carry-right} are mutually exclusive; at most one of them can
be true.  Notice that each of these clauses applies to a particular
case: 
\begin{itemize}
  \item $\Carry_\Z(i,X,Y)$ applies when $X$ and $Y$ have the same
    sign;
  \item $\Borrow(i,X,Y)$ applies when 
    $\intsize(X) > \intsize(Y)$
    and 
    $X$ and $Y$ have different signs; and 
  \item $\Borrow(i,Y,X)$ applies when 
    $\intsize(Y)>\intsize(X)$
    and 
    $X$ and $Y$ have different signs.   
\end{itemize}
In the special case when $\intsize(X)=\intsize(Y)$ and $X$ and $Y$
have different signs, neither of the
$\Borrow$ clauses will apply.  This has the desired effect:
all bits of $X+_\Z Y$ will be zero.  (Notice that there is only one
valid encoding of zero, as the all-zero string $+0$.)

Binary multiplication is $\Sigma_1^B$-definable in $VTC^0$ by results
from Section 9C.6 of \cite{CookNguyen}.  
It is easily adaptable to integer multiplication.
Define the string function 
\newcommand{\BinaryPart}{\operatorname{\it BinaryPart}}
$\BinaryPart$ such that, for an integer encoded as string $X$,
$\bin(\BinaryPart(X)) = \intsize(X)$.
$$ \BinaryPart(X)(i) \leftrightarrow X(i+1)$$
This function simply extracts the part of the string encoding the
number, and allows for an easy definition of integer multiplication.
\begin{eqnarray}
(X \times_\Z Y)(i) & \leftrightarrow &
\big(i=0 \wedge (X(0) \leftrightarrow \neg Y(0))\big) \vee \nonumber \\
&& \big( i>0 \wedge \exists k <i,
 k+1=i \wedge (\BinaryPart(X) \times \BinaryPart(Y))(k) \big)
\hspace{10mm}
\label{eq:times_Z}
\end{eqnarray}

The formulas given by (\ref{eq:sign-right1} -- \ref{eq:carry-right})
and (\ref{eq:times_Z}) define integer addition $+_\Z$ and integer
multiplication $\times_\Z$.  In order to use them in the next section,
we will work in $\overline{VTC^0}$.

%% Once we have defined $+_\Z$ and $\times_\Z$, the functions $\PowZ$
%% and $\PowSeqZ$ follow easily by parallels to $\Pow$ and $\PowSeq$,
%% above.  In particular, we will have a new function $\Prod$ defined by
%% using $SUM$ in place of $PARITY$.

\subsection{Additional complete problems for $\#L$} 
\label{section:furthernLproblems}

Eventually, we would like to justify the fact that $AC^0(\#L)= DET$ by
formalizing the computation of integer determinants in $V\#L$.  To
that end, the theory $V\#L$ will be formed from the base theory
$VTC^0$ by the addition of an axiom stating the existence of a
solution for integer matrix powering.
Computation of integer determinants is reducible to matrix powering
\cite{Berkowitz}. 

The choice of matrix powering for our axiom is further justified by the
fact that integer matrix powering is $\in \#L$ (as shown by
Claim \ref{claim:ZMPcomplnL} below) and the easy reduction from matrix
powering over $\N$ to matrix powering over $\Z$.  Unary matrix
powering over $\N$ is $AC^0$-complete for $\#L$ by Claim
\ref{claim:MPcomplnL}. 
Binary matrix powering is reducible to unary matrix powering.

Let $\#STCON^m$ be the problem of $\#STCON$ on multigraphs.  Notice
that $\#STCON^m$ presents an additional layer of difficulty: since
each entry of the adjacency matrix is $\in \N$ rather than $\in \zo$,
it requires more than one bit to store.  This presents a choice of
encodings: the input matrix entries can be encoded in unary or in
binary. 
We will show that the binary version of $\#STCON^m$ is complete for
$\#L$.  This proof provides the necessary insight for Claim
\ref{claim:ZMPcomplnL}. 

\begin{claim} \label{claim:nSTCONmultinL}
Binary $\# STCON^m$ is complete for $\#L$ under $AC^0$-reductions. 
\end{claim}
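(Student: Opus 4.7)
The proof splits into the standard two directions: membership (binary $\#STCON^m \in \#L$) and $AC^0$-hardness for $\#L$. The hardness direction is inherited from Claim~\ref{claim:nSTCONcomplnL}. A simple-graph instance $(n,s,t,p,G)$ of $\#STCON$ is already an instance of binary $\#STCON^m$: one reinterprets the Boolean adjacency matrix $G$ as a matrix whose entries are length-one binary strings, using the $\Rowtwo$ encoding from Section~\ref{subsection:str-mat-lists}. This repackaging is $\Sigma_0^B$-definable and produces a multigraph with exactly the same number of $s$-$t$ walks of length $\leq p$ as $G$. Composing this with the $AC^0$-reduction from an arbitrary $\#L$ function to $\#STCON$ then yields $AC^0$-hardness of binary $\#STCON^m$ for $\#L$.

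The bulk of the work is the membership direction. I will exhibit a nondeterministic logspace Turing machine $M$ which, on input $(n,s,t,p,A)$ with $A$ encoding an $n\times n$ multigraph adjacency matrix with binary-encoded entries, has exactly $\#STCON^m(n,s,t,p,A)$ accepting paths. The outer loop structure follows Algorithm~\ref{nSTCONalg}: $M$ maintains a current vertex (initialised to $s$) and a step counter bounded by $p$, and at each iteration guesses a next vertex $v$ and advances to it. The essential new feature is that, after guessing $v$, $M$ must branch into exactly $A[\mathit{current},v]$ successor computations so that every walk in the multigraph is counted once. Since the entry $A[\mathit{current},v]$ may require polynomially many bits and cannot be stored explicitly in logspace, $M$ will guess an \emph{edge index} bit by bit from the most significant bit down, simultaneously scanning the corresponding bit of $A[\mathit{current},v]$ from the input tape. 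A three-valued flag records whether the prefix of the guess is currently \emph{less than}, \emph{equal to}, or \emph{greater than} the corresponding prefix of $A[\mathit{current},v]$; a branch rejects as soon as the flag becomes \emph{greater}, and after all bits are processed the branch accepts iff the flag is \emph{less}. This yields exactly $A[\mathit{current},v]$ surviving branches per pair $(\mathit{current},v)$. The work tape holds only the current and next vertices ($O(\log n)$ bits), the step counter ($O(\log p)$ bits), a bit-position counter for the edge-index guess ($O(\log |A|)$ bits), and the constant-size flag, all within the logspace budget; the total running time is polynomial.

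The main obstacle is precisely this logspace bound: a naive implementation would store the guessed edge index in full, which may be polynomially long. The three-valued-flag trick performs the comparison with $A[\mathit{current},v]$ on the fly and avoids any such storage. Given this machine, a routine induction on path length shows that its accepting paths biject with the $s$-$t$ walks of length $\leq p$ in the multigraph encoded by $A$, since at each step the number of surviving branches multiplies by the appropriate edge-multiplicity factor. Together with the hardness argument above, this establishes the claim; the construction additionally foreshadows the argument for Claim~\ref{claim:ZMPcomplnL}, where nondeterministic branching controlled by binary-encoded integer magnitudes is again the core idea.
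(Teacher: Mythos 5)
Your proposal is correct and follows essentially the same route as the paper: hardness by the trivial inclusion of $\#STCON$ instances into $\#STCON^m$, and membership via a logspace machine that guesses an edge index bit by bit (most significant first) while comparing on the fly against the binary-encoded multiplicity, rejecting once the guess exceeds it and accepting only strictly smaller guesses. Your three-valued flag is functionally identical to the paper's single ``reachable'' bit, since the ``greater'' state is never stored but triggers immediate rejection.
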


\begin{proof}
It is obvious that this problem is hard for $\# L$, since $\# STCON$
trivially reduces to $\# STCON^m$.

A few adjustments to the proof of claim \ref{claim:nSTCONcomplnL}
suffice to show that $\#STCON^m \in \# L$.

Again, let $M$ be a logspace Turing machine with specific formatting
of its input.  Let the input graph be represented by a binary string
encoding its adjacency matrix $G$, with $s$ and $t$ the first two
listed vertices. 
Since multiple edges are allowed, this matrix has entries from $\N$.
It is encoded as a binary string by means of the
$\Rowtwo$ function (see Section \ref{notation}), with each matrix
entry $G(u,v)$ written in binary notation as a string for numbers $u$ 
and $v$ referring to vertices.

Notice that there is some maximum $m$ number of edges between any two
vertices in the graph; hence every entry in the adjacency matrix has
at most $log(m)$ bits.  The entire adjacency matrix is encoded in
at least $n^2 \log m$ bits.

As before, $M$ maintains three (binary) numbers on its tape: the
``current'' vertex, the ``next'' vertex, and a count of the number of
edges it has traversed.  The ``current'' vertex is initialized to $s$
(that is, the number $0$, which indexes $s$ in the encoded adjacency
matrix), and the count is initialized to $0$.
$M$ also maintains a bit ``reachable'', which is true iff the ``next''
vertex is reachable from the ``current'' vertex.

When run, $M$ traverses the graph starting at $s$ as follows:
\begin{algorithm}{Traverse-multigraph}{n,s,t,p,G}
\label{algorithm:traverse-multigraph}
  counter \= 0 \\
  current \= s \\
  \begin{WHILE} {counter \leq p \text{and } current \neq t}
    next \= \text{nondeterministically-chosen vertex from} G \\
    reachable \leftarrow 0 \\
    %\text{check whether the ``next'' vertex is reachable from the
    %  ``current'' vertex by performing the following loop:} \\
    \begin{FOR}{i=|G(current,next)| \TO 0} 
      b \= \text{nondeterministally-chosen bit} \\
      \begin{IF} {reachable=0}
	\begin{IF}{b=0 \text{and} G(current,next)[i]=1}
	  reachable \= 1
	\end{IF} \\
	\begin{IF}{b=1 \text{and} G(current,next)[i]=0}
	  \text{halt and reject}
	\end{IF}
      \end{IF}
    \end{FOR} \\
    \begin{IF}{reachable=0}
     \text{halt and reject}
    \end{IF} \\
    current \= next \\
    counter \= counter + 1
  \end{WHILE} \\
  \begin{IF} {counter > p} \label{foo}
    \text{halt and reject} 
    \ELSE % { (current = 2 ) } \\
    \text{halt and accept}
  \end{IF}
\end{algorithm}

We can think of the $g=G(current,next)$ edges between ``current'' and
``next'' as numbered $0,1, \ldots, g-1$.  The loop on lines 6-12
% NOTE: these numbers are not automagically linked! so it's extra-fragile.
implicitly selects a $(\log m)$-bit number $n$, one bit at a time,
from most to least significant.  It checks whether the ``next'' vertex
is reachable from the ``current'' vertex along edge numbered $n$.  If
$n<g$, then the ``next'' vertex is reachable from the ``current''
vertex.  The ``reachable'' bit is true if $n<g$ based on the
already-seen bits. 

$M$ simulates a traversal of the graph from $s$ to $t$ by
nondeterministically picking the next edge it traverses and the next
vertex it visits.  Every
accepting computation of $M$ traces a path from $s$ to $t$ (of length
$\leq p$), and for every path of length $\leq p$ from $s$ to $t$ there
is an accepting computation of $M$.  Thus $\# STCON^m \in \# L$.
\end{proof}

Claim \ref{claim:nSTCONmultinL} provides the insight for 
showing that
matrix powering over $\Z$ is $AC^0$-reducible to $\#L$.  In
particular, it presents a technique whereby, using only two bits, a
Turing machine can nondeterministically ``pick'' a natural number
$<n$, where $n$ is given in binary notation.  Notice that this 
has the effect of causing the Turing machine to branch into $n$ 
computational paths.  

\begin{remark} \label{remark:TM-branching}
Branching into $n_1$ paths, then $n_2$ paths, \ldots, then $n_k$ 
paths has the effect of multiplication, resulting in 
$\prod_i n_i$ total computational paths.
\end{remark}

Remark \ref{remark:TM-branching} and the proof of Claim \ref{claim:ZMPcomplnL} are inspired
by the work of Vinay \cite{Vinay}.  Lemma 6.2 of that paper proves a similar fact,
though the conceptual framework, motivation, and notation are different.

\begin{claim} \label{claim:ZMPcomplnL}
Binary matrix powering over $\Z$ is $AC^0$-reducible to $\#L$.
\end{claim}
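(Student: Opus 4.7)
The plan is to reduce binary matrix powering over $\Z$ to binary matrix powering over $\N$ via a sign-eliminating block construction, and then reduce the latter to $\#L$ using the binary nondeterministic branching technique from Claim~\ref{claim:nSTCONmultinL} (Vinay's trick), combined with Remark~\ref{remark:TM-branching}.

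First, to eliminate signs, I would decompose the input integer matrix $A$ as $A = A^+ - A^-$, where $A^+[i,j]$ is the positive part $\max(A[i,j], 0)$ and $A^-[i,j] = \max(-A[i,j], 0)$. Since each entry of $A$ is encoded in our sign-plus-binary format, forming $A^+$ and $A^-$ is an $AC^0$ operation. Then I would define the $2n \times 2n$ nonnegative block matrix
\[
B = \begin{pmatrix} A^+ & A^- \\ A^- & A^+ \end{pmatrix}.
\]
A straightforward induction shows $B^k = \begin{pmatrix} P_k & Q_k \\ Q_k & P_k \end{pmatrix}$ with $A^k = P_k - Q_k$. Hence each entry of $A^k$ is obtained as $B^k[i,j] - B^k[i,j+n]$, computed with the integer subtraction defined in the previous subsection (available in $VTC^0$, hence certainly in $AC^0(\#L) = DET$). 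Assembling $B$, reading off the two relevant entries of $B^k$, and subtracting are all $AC^0$, so the reduction of binary matrix powering over $\Z$ reduces cleanly to binary matrix powering over $\N$.

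Next, I would show that binary matrix powering over $\N$ lies in $\#L$ directly. By Remark~\ref{remark:matrix-observation}, $B^k[i,j]$ counts paths of length exactly $k$ from $i$ to $j$ in the multigraph with adjacency matrix $B$. I would exhibit a nondeterministic logspace machine $N$ which, on input $(n, k, \langle i,j\rangle, B)$, maintains only a current vertex (initially $i$), a step counter (initially $0$), and a constant number of scratch variables. At each step, $N$ nondeterministically guesses the next vertex $v$; then, using the bit-by-bit binary selection technique from Algorithm~\ref{algorithm:traverse-multigraph}, $N$ generates exactly $B[u,v]$ accepting continuations for the choice to move from the current vertex $u$ to $v$ (rejecting along the remaining branches). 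After $k$ steps $N$ accepts iff its current vertex is $j$. By Remark~\ref{remark:TM-branching}, the total number of accepting computations is $\sum_{v_0 = i,\, v_k = j} \prod_{\ell=1}^{k} B[v_{\ell-1}, v_\ell] = B^k[i,j]$, and the space usage is logarithmic in the binary input length, so $N$ witnesses $B^k[i,j] \in \#L$.

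The main obstacle is the logspace bookkeeping inside the bit-by-bit branching gadget: we must argue that the ``reachable'' flag and the bit index can be maintained in $O(\log|\text{input}|)$ space even when each entry $B[u,v]$ has $\Theta(\log|\text{input}|)$ binary digits, and that the branching produces \emph{exactly} $B[u,v]$ survivors rather than, say, $B[u,v] + 1$ or some other off-by-one count. This is essentially the correctness argument already carried out for Algorithm~\ref{algorithm:traverse-multigraph}, adapted to the setting where the ``edges'' are steps of a length-$k$ walk rather than a single $s$-$t$ traversal. Once this is in place, chaining the $AC^0$ preprocessing (forming $B$), the $\#L$ computation (computing entries of $B^k$), and the $AC^0$ postprocessing (subtracting to recover signed entries of $A^k$) yields the desired $AC^0$-reduction of binary matrix powering over $\Z$ to $\#L$.
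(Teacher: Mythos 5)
Your proof is correct, but it packages the argument differently from the paper. The paper does not route through matrix powering over $\N$: it works entry-by-entry on the signed matrix directly, building two nondeterministic logspace machines $M^+$ and $M^-$ that walk the multigraph implicitly encoded by the integer matrix, branch into $\intsize(X^{[u][v]})$ paths at each step using the bit-by-bit gadget of Claim \ref{claim:nSTCONmultinL}, and carry the running sign of the walk in their finite control; $M^+$ accepts positive-sign walks of length exactly $k$ ending at $j$, $M^-$ the negative-sign ones, and $A^k[i,j] = f_{M^+} - f_{M^-}$ is then a difference of two $\#L$ functions. Your block construction $B = \bigl(\begin{smallmatrix} A^+ & A^- \\ A^- & A^+ \end{smallmatrix}\bigr)$ is precisely the deterministic ``unfolding'' of that sign bit into the graph itself: $P_k[i,j]$ and $Q_k[i,j]$ count exactly the accepting paths of $M^+$ and $M^-$, so the two proofs are counting the same objects. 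What your version buys is a clean intermediate statement --- binary matrix powering over $\N$ is itself in $\#L$ --- proved with a single machine, plus an $AC^0$ pre- and post-processing layer (forming $B$, subtracting the two blocks); what the paper's version buys is that it never doubles the dimension and keeps the sign bookkeeping inside the machines, so the reduction to $\#L$ is immediate from the definition of accepting-path counting. Both hinge on the same two ingredients you identify: the correctness of the binary branching gadget (exactly $\intsize(B[u,v])$ surviving branches in logspace, inherited from Claim \ref{claim:nSTCONmultinL}) and an $AC^0$ integer subtraction at the end, which is available since the paper's $+_\Z$, $\Borrow$, and related relations are $\Sigma_0^B$-bit-defined. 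Your induction $P_{k+1}-Q_{k+1} = (P_k - Q_k)(A^+ - A^-)$ is the right justification for the block identity, so no gap there; just note that the final subtraction should be stated as $\Sigma_0^B$-definable (i.e., genuinely $AC^0$) rather than merely ``available in $VTC^0$,'' although either suffices for membership in the $AC^0$-closure of $\#L$.
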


\begin{proof}
The main idea of this proof is a combination of the reduction of
$\#STCON$ to $\#L$ (Claim \ref{claim:nSTCONcomplnL}) and matrix
powering over $\N$ to $\#STCON$ (Lemma \ref{lemma:MP<=nSTCON}), 
in order to show 
that matrix powering over $\Z$ can be defined entry-by-entry.  We use
the technique from Claim \ref{claim:nSTCONmultinL} to handle the
binary encoding used for integers.

Let $A$ be a matrix of integers, and let string $X$ encode $A$ 
via the $\Rowtwo$ function.
Ignoring the signs of integers, we can interpret $A$ as the adjacency
matrix of a directed multigraph. 
By definition, to show that matrix powering over $\Z$ is
$AC^0$-reducible to $\#L$, we require a $\Sigma_0^B$ formula for the
bit graph of $\PowZ(n,k,X)$.
We will demonstrate a stronger statement: in fact, $\PowZ(n,k,X)$
can be defined by whole entries $A^{k}[i,j]$.

We construct two Turing machines $M^+$ and $M^-$ such that, given a
particular entry $(i,j)$ and integer matrix $X$ as input, both
simulate a traversal of the multigraph that $X$ implicitly represents.
For 
this purpose, the sign is ignored, so that
$\intsize(X^{[i][j]})$ is the number of edges of the multigraph
from vertex $i$ to vertex $j$.  $M^+$ and $M^-$ will use the signs of
traversed edges to decide whether to accept or reject.
The number of accepting paths of $M^+$ minus the number of accepting 
paths of $M^-$ is the value of the $(i,j)^\textrm{th}$ entry of
the matrix product $A^k$.

The machines $M^+$ and $M^-$ have identical instructions
except for their conditions for entering an accepting state.  
Each machine keeps 
track of two numbers, ``current'' and ``next'', indicating the index
($0$, \ldots, $n-1$) of its simulated traversal, the ``count'' of how
many edges it 
has traveled, and a single bit indicating the ``sign'' of the path
traversed so far (when considered as the product of the signs of the
edges traversed).
The number ``current'' is initialized to $i$, ``count'' is initialized
to $0$, and ``sign'' is initialized to ``+''.
When the machine needs to branch into $b$ branches, where $b$ is a
number given in binary, we write ``branch into $b$ paths'' in lieu of
repeating the subroutine given in the {\sc Traverse-multigraph}
algorithm on page \pageref{algorithm:traverse-multigraph}. 

\begin{algorithm}{Positive-matrix-product}{i,j,k,X}
  current \= i \\
  sign \= 0 \\
  count \= 0 \\
  \begin{WHILE} {counter < k}
    next \= \text{nondeterministically-chosen number} < n \\
    \text{branch into $\intsize(X^{[current][next]})$ paths} \\
    sign \= sign \text{ XOR } X^{[current][next]}(0) \\
    count \= count + 1
  \end{WHILE} \\
  \begin{IF} {current = j \text{and } sign = 0}
    \text{halt and accept} 
    \ELSE \text{halt and reject}
  \end{IF}
\end{algorithm}

Given an input $(i,j,k,X)$, $M^+$ traverses the implicit graph
represented by $X$ according to the algorithm {\sc Positive-matrix-product}.
The analogous algorithm {\sc Negative-matrix-product} for $M^-$ will
be identical, except that line $9$ will require that $sign =1$.

Observe that {\sc Positive-matrix-product} is simply an adapted
version of {\sc Traverse-multigraph}.  Rather than traversing a path
between fixed nodes $s=0$ and $t=1$ of any length $\leq p$, it starts
at vertex $i$ and travels across \emph{exactly} $k$ edges.  If the
final vertex of that 
path is $j$, then we have traversed an $i$--$j$ path of length exactly
$k$. 
Remark \ref{remark:matrix-observation} (which provided the insight for
Lemma \ref{lemma:MP<=nSTCON}) and Remark \ref{remark:TM-branching}
complete the proof.

Let $f_{M^+}$ and $f_{M^-}$ be functions of $\#L$, defined as the
number of accepting paths of $M^+$ and $M^-$, respectively.
The number $f_{M^+}(i,j,k,X)$ of accepting computations of $M^+$ is
exactly the number of 
``positive-sign'' paths from $i$ to $j$, i.e., the sum of all positive
terms in the computation of $A^k[i,j]$.
The number $f_{M^-}(i,j,k,X)$ of accepting computations of $M^-$ is
exactly the number of 
``negative-sign'' paths from $i$ to $j$, i.e., the sum of all negative
terms in the computation of $A^k[i,j]$.
Thus the $(i,j)^\textrm{th}$ entry of $A^k$ is given by
$$ A^k[i,j] = f_{M^+}(i,j,k,X) - f_{M^-}(i,j,k,X)$$
\end{proof}

\subsection{The theory $V\#L$} \label{VnL} % -------------------------------------

By the nature of its construction, the theory $V\#L$ corresponds to
the $AC^0$-closure of $\#L$.  As we will prove, the set of provably
total functions of $V\#L$ exactly coincides with the functions of
$F\#L$, and the $\Delta_1^B$-definable relations of $V\#L$ are exactly
the relations in $\#L$.

Let $\PowZ(n,k,X)$ and $\PowSeqZ(n,k,X)$ be defined as above
(Definitions (\ref{def:Pow}) and (\ref{def:PowSeq})), with a few
modifications.
Input string $X$ now encodes a matrix, and the outputs encode a list
of matrices.  
(Both encodings are accomplished via the $\Rowtwo$ function.)
   Clearly there are analogs of
Lemmas \ref{lemma:Pow<=PowSeq} and \ref{lemma:PowSeq<=Pow}; 
$\PowZ$ and $\PowSeqZ$ are $AC^0$-reducible to each other.

\begin{definition} \label{def:VnL}
The theory $V\#L$ has vocabulary $\mathcal{L}_A^2$ and is axiomatized
by $VTC^0$ and a $\Sigma_1^B(\mathcal{L}_A^2)$ axiom $PS_\Z$ (formula
\ref{formula:PSz}) stating
the existence of a string value for the function $\PowSeqZ(n,k,X)$.
\end{definition}

We define our new axiom via a series of ``helper'' functions, just as
(\ref{equation:implicit-PS}) in Section \ref{subsection:implicit}.
Let $\delta_{\PowSeqZ}(n,k,X,Y)$ be the relation representing the
graph of $\PowSeqZ(n,k,X) = Y$.  This relation will be defined below
in the language $\mathcal{L}_{FTC^0} \supset \mathcal{L}_A^2$.  This
method requires the introduction of new function symbols, which can be
used to express the axiom $PS_\Z$ in $\overline{VTC^0}$, a universal
conservative extension of $VTC^0$.

Let $ID_\Z(n)=Y$ be the string function whose output encodes the
$n\times n$ identity 
matrix over $\Z$, that is, $Y^{[i][j]}(c) \leftrightarrow i=j \wedge
c=1 $.  The extra layer of encoding necessary for integers makes this
definition inelegant.  (Notice that the previous equation 
defines the $\langle i, \langle j,c\rangle\rangle^\textrm{th}$ bit of
$Y$.) 
\begin{eqnarray*}
 Y(b) & \leftrightarrow & 
\operatorname{\it left}(b) < n \wedge
\operatorname{\it right}(b) < n \wedge
\Pair(b) \wedge  
\Pair(\operatorname{\it right}(b)) \wedge \\
&& \operatorname{\it left}(b) = \operatorname{\it
  left}(\operatorname{\it right}(b)) 
\wedge
\operatorname{\it right}(\operatorname{\it right}(b)) = 1 
\end{eqnarray*}

Let $X_1$ and $X_2$ be two strings encoding $n\times n$ integer
matrices.  Let $G(n,i,j,X_1,X_2)$ be the $TC^0$ string
function that witnesses the computation of the $(i,j)^\textrm{th}$
entry of the matrix product $X_1 X_2$, defined as:
$$ G(n,i,j,X_1,X_2)(b) \leftrightarrow
b< \langle |X_1|,|X_2|\rangle \wedge
\Pair(b) \wedge 
\big (X_1^{[i][\operatorname{\it left}(b)]} \times_\Z 
X_2^{[\operatorname{\it left}(b)][j]} \big)
(\operatorname{\it right}(b))
$$
Here, the bound $b< \langle |X_1|,|X_2|\rangle$ is much larger than
necessary.
The function $G$ serves as a witness; its output is a string encoding
a list of integers $Y_1$, $Y_2$, \ldots, $Y_n$, where
$Y_k = X_1^{[i][k]} \times_\Z X_2^{[k][j]}$.  The above definition
specifies this exactly, as
$$ G(n,i,j,X_1,X_2)^{[\ell]} = X_1^{[i][\ell]} \times_\Z X_2^{[\ell][j]} $$

Thus the $(i,j)^\textrm{th}$ entry of the matrix product $X_1 X_2$ is
given by the sum $Y_1 +_\Z \ldots +_\Z Y_n$.
Chapter 9 of \cite{CookNguyen} defines the string function
$\Sum(n,m,Z)$ that takes the sum of a list of $n$ binary numbers of
length $\leq m$ stored as the first $n$ rows of string $Z$.  We adapt
this definition to create string function $\Sumz$, which performs the same operaton on a list of $n$ integers.

\newcommand{\PosList}{\operatorname{\it PosList}}
\newcommand{\NegList}{\operatorname{\it NegList}}
\newcommand{\PosSum}{\operatorname{\it PosSum}}
\newcommand{\NegSum}{\operatorname{\it NegSum}}
As a first step, we partition the list $Z$ of integers into two lists: positive and negative numbers.
Within each of these partitions, integers are stored as their $\BinaryPart$, i.e., without a leading sign.
Define the string functions $\PosList$ and $\NegList$ as:
$$ \PosList(Z)(i,j) \leftrightarrow \neg Z^{[i]} \wedge \BinaryPart(Z^{[i]})(j) $$
$$ \NegList(Z)(i,j) \leftrightarrow      Z^{[i]} \wedge \BinaryPart(Z^{[i]})(j) $$
The function $\PosList(Z)$ (respectively, $\NegList(Z)$), outputs 
a list of the same length as $Z$, but only including positive (negative) elements of $Z$; 
all other entries are zero.

Thus $\Sum(n,m,\PosList(Z))$ is the natural number encoding the sum of the positive integers in list $Z$, 
and $\Sum(n,m,\NegList(Z))$ is the natural number encoding the \emph{negation} of the sum of the
negative integers in list $Z$.
The output of each of these functions is a natural number.
It is necessary to ``shift" the bits 
by one place to re-insert the sign required by our encoding scheme for integers.
$$ \PosSum(n,m,Z)(b) \leftrightarrow \exists k<b, k+1=b \wedge \Sum(n,m,\PosList(Z))(k)$$
$$ \NegSum(n,m,Z)(b) \leftrightarrow (b=0) \vee (\exists k<b, k+1=b \wedge \Sum(n,m,\NegList(Z))(b))$$
This allows for a simple definition of integer summation $\Sumz$:
$$ \Sumz(n,m,Z)(i,j) \leftrightarrow  = \PosSum(n,m,Z) +_\Z \NegSum(n,m,Z) $$

The string function $\ProdZ$ computing the product of two integer
matrices can be bit-defined as:
$$ \ProdZ(n,X_1, X_2)(i,j) \leftrightarrow
i<n \wedge j<n \wedge
\Sumz(n,|X_1|+|X_2|,G(n,i,j,X_1,X_2)) $$
Again, the bound $|X_1|+|X_2|$ is much larger than necessary.

Given these functions, let the relation $\delta_{\PowSeqZ}(n,k,X,Y)$
be the the $\Sigma_0^B(\mathcal{L}_{FTC^0})$-formula
\begin{eqnarray}
  \forall b<|Y|, 
  |Y| < \langle k, \langle |X|, |X| \rangle \wedge
  [ Y(b) \supset (\Pair(b) \wedge \Pair(\operatorname{\it right}(b)))]
  \wedge 
  \nonumber 
  \hspace{1in} \\
  \hspace{2in}
  Y^{[0]} = ID_\Z(n) \wedge
  \forall i<k (Y^{[i+1]} = \ProdZ (n,X,Y^{[i]}))
  \label{formula:implicit-delta-PSz}
\end{eqnarray}
This formula asserts that the string $Y$ is the output of
$\PowSeqZ(n,k,X)$.  By our convention for defining string functions,
the bits of $Y$ that do not encode the list  $[X^1,\ldots,X^k]$ are
required to be zero, so $\PowSeqZ(n,k,X)$ is the lexographically
first string that encodes this list.

$\overline{VTC^0}$ is a conservative extension of $VTC^0$, defined in
Section 9C of \cite{CookNguyen}.
Theorem 9.33(b) of \cite{CookNguyen} asserts that there is
a $\mathcal{L}_A^2$ term $t$ and a $\Sigma_0^B(\mathcal{L}_A^2)$
formula $\alpha_{\PowSeqZ}$ such that
$$ \exists Z < t, \alpha_{\PowSeqZ}(n,k,X,Y,Z)$$
is provably equivalent to (\ref{formula:implicit-delta-PSz}) in
$\overline{VTC^0}$. 

The axiom $PS_\Z$ used to define the theory $V\#L$ is
\begin{equation}
\label{formula:PSz}
\exists Y<m, \exists Z<t, \alpha_{\PowSeqZ}(n,k,X,Y,Z)
\end{equation}

Formulas (\ref{formula:implicit-delta-PSz}) and (\ref{formula:PSz})
each require that $Y$ witnesses the intermediate strings $X^1$, $X^2$,
\ldots, $X^k$ of the computation of the matrix power $X^k$.  String
$Y$ does \emph{not} witness any of the work performed in calculating
these intermediate powers of $X$, just as in Section
\ref{subsection:implicit}.  That work, as before, is witnessed by the
string $Z$, although this witnessing is obscured by the application of
Theorem 9.33(b). 

The next two lemmas are analogous to Lemmas
\ref{lemma:PowSeq_definable} and \ref{lemma:PowSeq*_definable}, and
proved in the same manner.

\begin{lemma}
\label{lemma:PowSeqz_definable}
The integer matrix powering function $\PowSeqZ$ is
$\Sigma_1^B(\mathcal{L}_A^2)$-definable in $V\#L$.
\end{lemma}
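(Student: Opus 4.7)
The plan is to mirror the argument given for Lemma \ref{lemma:PowSeq_definable}, with the added bookkeeping needed to handle the integer encoding scheme of Section 5.1 and the nested use of $\Rowtwo$. By the construction in Section \ref{VnL}, we already have
$$ \PowSeqZ(n,k,X) = Y \leftrightarrow \exists Z < t,\ \alpha_{\PowSeqZ}(n,k,X,Y,Z), $$
and the right-hand side is a $\Sigma_1^B(\mathcal{L}_A^2)$ formula, since $\alpha_{\PowSeqZ}$ is the $\Sigma_0^B(\mathcal{L}_A^2)$ formula obtained by applying Theorem 9.33(b) of \cite{CookNguyen} to (\ref{formula:implicit-delta-PSz}). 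So the $\Sigma_1^B$ shape is free; what remains is to prove
$$ V\#L \vdash \forall n, k, \forall X,\ \exists ! Y\, \exists Z < t,\ \alpha_{\PowSeqZ}(n,k,X,Y,Z). $$

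First I would handle existence. The axiom $PS_\Z$ (formula (\ref{formula:PSz})) directly asserts $\exists Y < m\, \exists Z < t,\ \alpha_{\PowSeqZ}(n,k,X,Y,Z)$, so existence is immediate from the axiomatization of $V\#L$.

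For uniqueness I would work inside the universal conservative extension $\overline{VTC^0}$ together with the defining axioms for the helper functions $ID_\Z$, $G$, $\Sumz$, $\ProdZ$ introduced in Section \ref{VnL}, so that the intuitive formula (\ref{formula:implicit-delta-PSz}) can be used directly rather than the opaque $\alpha_{\PowSeqZ}$. In that setting the formula forces $Y^{[0]} = ID_\Z(n)$ and $Y^{[i+1]} = \ProdZ(n, X, Y^{[i]})$ for all $i < k$, and additionally forces every bit of $Y$ outside the pairing pattern $\{\langle i, \langle j, c\rangle\rangle : i \leq k,\ \Pair(\operatorname{\it right}(\cdot))\}$ to be $0$. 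So if $Y_1$ and $Y_2$ both satisfy (\ref{formula:implicit-delta-PSz}), then a straightforward $\Sigma_0^B$-induction on $i \leq k$ (legal in $VTC^0$ and hence in $V\#L$) shows $Y_1^{[i]} = Y_2^{[i]}$ for every $i$: the base case is $Y_1^{[0]} = ID_\Z(n) = Y_2^{[0]}$, and the step uses functionality of $\ProdZ$. Combined with the zero-padding clause, this gives $Y_1 = Y_2$. Since $\overline{VTC^0}$ plus the defining axioms for the helper functions is conservative over $VTC^0$, and $V\#L$ extends $VTC^0$ by $PS_\Z$, the uniqueness statement in the base language $\mathcal{L}_A^2$ transfers back to $V\#L$ in the same manner as in Lemma \ref{lemma:PowSeq_definable}.

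The main obstacle is bookkeeping rather than a deep new idea: the integer encoding forces $Y$ to encode a two-level nested structure via $\Rowtwo$ (each matrix entry is itself a signed binary string), so the ``unimportant'' bits that must be pinned to $0$ form a slightly more intricate set than in the $\parityL$ case, and one must verify that $\ProdZ$ is indeed provably single-valued in $\overline{VTC^0}$ so that the inductive step really does propagate uniqueness through the recurrence $Y^{[i+1]} = \ProdZ(n, X, Y^{[i]})$. This in turn reduces to provable single-valuedness of $+_\Z$, $\times_\Z$, $\Sumz$, and $G$, each of which has an explicit bit-definition from Section 5.1 and Section \ref{VnL} and is therefore functional in $\overline{VTC^0}$.
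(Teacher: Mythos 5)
Your proposal is correct and follows essentially the same route as the paper, which proves this lemma simply by declaring it ``proved in the same manner'' as Lemma \ref{lemma:PowSeq_definable}: existence of $Y$ comes directly from the axiom $PS_\Z$, and uniqueness is established by a $\Sigma_0^B$-induction carried out in the conservative extension (using the readable formula (\ref{formula:implicit-delta-PSz}) rather than $\alpha_{\PowSeqZ}$) and then transferred down to $V\#L$, exactly as you argue. The only cosmetic difference is that the paper's uniqueness induction is phrased over the first $i$ bits of $Y$, whereas yours runs over the row index $i\leq k$ via provable single-valuedness of $\ProdZ$; this is an inessential variant of the same argument.
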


\begin{lemma}
\label{lemma:Powseq*z_definable}
The aggregate integer matrix powering function $\PowSeqZstar$ is
$\Sigma_1^B(\mathcal{L}_A^2)$-definable in $V\#L$.
\end{lemma}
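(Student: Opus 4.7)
The plan is to mirror the proof of Lemma \ref{lemma:PowSeq*_definable}, adapting every construction from $\Z_2$-valued matrices to $\Z$-valued matrices. First, I would introduce two $AC^0$ helper functions analogous to $\operatorname{\it max}$ and $S$: the number function $\operatorname{\it max}(n,W)$ is reused unchanged, since it operates on a list of numbers, while an integer version $S_\Z(b,W_1,X)$ takes a list $W_1$ of dimensions $n_0,\ldots,n_{b-1}$ and a list $X$ of integer matrices $X_0,\ldots,X_{b-1}$ (encoded via $\Rowtwo$, with each entry itself a signed binary string) and outputs the string encoding the block-diagonal matrix of size $(b\cdot n_{max})\times(b\cdot n_{max})$ with the appropriately zero-padded $X_i$ on the diagonal and zeros elsewhere. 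The bit-definition of $S_\Z$ requires one extra level of indexing compared to its $\parityL$-counterpart, because a matrix entry now occupies a bit-string (sign bit plus magnitude) rather than a single bit.

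Next, I would write down the relation $\delta_{\PowSeqZstar}(b,W_1,W_2,X,Y)$ by analogy with (\ref{formula:implicit-PowSeq2*}), asserting that $|Y|$ is suitably bounded and that for each $i<b$, the relation $\delta_{\PowSeqZ}((W_1)^i,(W_2)^i,X^{[i]},Y^{[i]})$ from (\ref{formula:implicit-delta-PSz}) holds. Applying Theorem 9.33(b) of \cite{CookNguyen} to absorb the $\mathcal{L}_{FTC^0}$ helpers, this is provably equivalent in $\overline{VTC^0}$ to a $\Sigma_1^B(\mathcal{L}_A^2)$ formula $\exists Z<t,\alpha_{\PowSeqZstar}(b,W_1,W_2,X,Y,Z)$, which serves as the definition of $\PowSeqZstar$ in $V\#L$.

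Existence and uniqueness then reduce to the existence and uniqueness of the single string $A=\PowSeqZ(b\cdot n_{max},k_{max},S_\Z(b,W_1,X))$, which are guaranteed by axiom $PS_\Z$ together with Lemma \ref{lemma:PowSeqz_definable}. The key algebraic observation is that integer matrix multiplication preserves block-diagonal structure, so $S_\Z(b,W_1,X)^p$ is block-diagonal with the (padded) matrices $X_m^p$ on the diagonal; hence for each $m<b$ and $p\leq (W_2)^m$, the required matrix $X_m^p$ can be ``looked up'' from the block at position $m$ of the $p$-th entry of $A$. Concretely, I would define $B=\PowSeqZstar(b,W_1,W_2,X)$ by the integer analog of (\ref{eq:PS2star-ABdef}), equating the signed-binary string encoding each entry $B^{[m][p]}[i][j]$ with the corresponding substring of $A^{[p]}$ at shifted indices $(n_{max}\cdot m + i,\, n_{max}\cdot m + j)$, and then rearranging quantifiers to put the resulting formula into $\Sigma_1^B$ shape. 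Uniqueness of $B$ follows because each bit of each entry is fixed by a bit of the unique string $A$.

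The main obstacle I expect is bookkeeping around the third level of encoding forced by integer entries: whereas the $\parityL$ construction could collapse $X_m^p(i,j)$ to a single bit of a single string, here each entry is itself a signed bit-string, so the lookup between $B$ and $A$ must preserve both the sign bit and every bit of $\intsize$, and $S_\Z$ must correctly pad blocks with zeros in the \emph{integer} encoding rather than the bit encoding. This forces the $AC^0$ definitions of $S_\Z$ and of the extraction formula to chain together more projection functions and $\Pair$/$\Rowtwo$ conditions. All auxiliary functions nonetheless remain $AC^0$, so a final application of Theorem 9.33(b) of \cite{CookNguyen} absorbs them into an equivalent $\Sigma_1^B(\mathcal{L}_A^2)$ formula without leaving $\Sigma_1^B$.
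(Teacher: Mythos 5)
Your proposal is correct and matches the paper's approach: the paper simply states that Lemma \ref{lemma:Powseq*z_definable} is ``proved in the same manner'' as Lemma \ref{lemma:PowSeq*_definable}, and your adaptation --- reusing $\operatorname{\it max}$, building an integer block-diagonal function in place of $S$, reducing existence and uniqueness to the single string $\PowSeqZ(b\cdot n_{max},k_{max},S_\Z(b,W_1,X))$, and invoking Theorem 9.33(b) of \cite{CookNguyen} rather than 9.67(b) --- is exactly that intended adaptation, with the extra encoding layer for signed integer entries correctly identified as the only new bookkeeping.
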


\subsection{The theory $\overline{V\#L}$}
This section defines the theory $\overline{V\#L}$, a universal
conservative extension of $V\#L$, in the same way as
$\overline{V\parityL} \supset V\parityL$ (Section
\ref{subsection:overline-V2L}). 

The theory $\overline{V\#L}$ is a universal conservative extension of
$V\#L$.  Its language $\mathcal{L}_{F\#L}$ contains function symbols
for all string functions in $F\#L$.  Note that $F\#L \neq \#L$, which
can be observed from the fact that $F\#L$ contains string functions
and number functions that take negative values, neither of which is in
$\#L$. 
The defining axioms for the functions in $\mathcal{L}_{F\#L}$ are
based on their $AC^0$ reductions to matrix powering.  Additionally,
$\overline{V\#L}$ has a quantifier-free defining axiom for
$\PowSeqZ'$, a string function with inputs and outputs the same as
$\PowSeqZ$. 

As before, we leave the formal definition of $\PowSeqZ$ unchanged.
$\PowSeqZ'$ has the quantifier-free defining axiom
\begin{eqnarray}
  |Y| < \langle k, \langle |X|, |X| \rangle \wedge
  ( Y(b) \supset \Pair(b) \wedge \Pair(\operatorname{\it right}(b)))
  \wedge 
  \nonumber 
  \hspace{1in} \\
  \hspace{2in}
  Y^{[0]} = ID_\Z(n) \wedge
  \big( i<k \supset (Y^{[i+1]} = \ProdZ (n,X,Y^{[i]})) \big)
  \label{formula:quant-free-defz}
\end{eqnarray}
This formula is similar to (\ref{formula:implicit-delta-PSz}).  The
function $\PowSeqZ$ satisfies this defining axiom for
$\PowSeqZ'$.  $VTC^0$, together with both axioms, proves
$\PowSeqZ(n,k,X) = \PowSeqZ'(n,k,X)$.

Recall that for a given formula $\varphi(z,\vec x,
\vec X)$ and $\mathcal{L}_{A}^2$-term $t(\vec x, \vec X)$, we let
$F_{\varphi,t}(\vec x, \vec X)$ be the string function with bit
definition (\ref{formula:FvarphiDefAxiom})
$$   Y = F_{\varphi(z),t}(\vec x, \vec X) \leftrightarrow
  z < t(\vec x, \vec X) \wedge \varphi(z,\vec x, \vec X)
$$

\begin{definition}[$\mathcal{L}_{F\#L}$]
We inductively define the language $\mathcal{L}_{F\#L}$ of all
functions with (bit) graphs in $\#L$.
Let $\mathcal{L}_{F\#L}^0 = \mathcal{L}_{FTC^0}\cup
\{\PowSeqZ'\}$.  
Let $\varphi(z,\vec x,\vec X)$ be an open formula over
$\mathcal{L}_{F\#L}^i$, and let $t=t(\vec x, \vec X)$ be a
$\mathcal{L}_A^2$-term.  
Then $\mathcal{L}_{F\#L}^{i+1}$ is $\mathcal{L}_{F\#L}^i$
together with the string function $F_{\varphi(z),t}$ that has defining 
axiom:
\begin{equation} \label{equation:F_varphiz}
F_{\varphi(z),t}(\vec x, \vec X)(z) \leftrightarrow z < t(\vec x, \vec
X) \wedge \varphi(z,\vec x, \vec X)
\end{equation}

Let $\mathcal{L}_{F\#L}$ be the union of the languages
$\mathcal{L}_{F\#L}^i$.  Thus $\mathcal{L}_{F\#L}$ is the
smallest set containing $\mathcal{L}_{FTC^0}\cup\{\PowSeqZ'\}$ and
with the defining axioms for the functions $F_{\varphi(z),t}$ for
every open $\mathcal{L}_{F\parityL}$ formula $\varphi(z)$.
\end{definition}

As before, $\mathcal{L}_{F\#L}$ has a symbol for every string
function in $F\#L$, and a term $|F|$ for every number function in
$F\#L$, where $F$ is a string function in $F\#L$.

\begin{definition}[$\overline{V\#L}$] The universal theory
$\overline{V\#L}$ over language $\mathcal{L}_{F\#L}$ has the
axioms of  
$\overline{VTC^0}$ together with the quantifier-free defining axiom 
(\ref{formula:quant-free-defz}) for $\PowSeqZ'$ and the above defining
axioms (\ref{equation:F_varphiz}) for the functions
$F_{\varphi(z),t}$. 
\end{definition}

\begin{theorem} \label{theorem:overline-UCE-nL}
$\overline{V\#L}$ is a universal conservative extension of
  $V\#L$.
\end{theorem}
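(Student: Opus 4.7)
The plan is to mirror almost verbatim the proof of Theorem \ref{theorem:overline-UCE-2L}, replacing $V\parityL$ with $V\#L$, $V^0(2)$ with $VTC^0$, and $\PowSeqtwo'$ with $\PowSeqZ'$. First I would observe that $\overline{V\#L}$ is universal by construction (its axioms are quantifier-free: the defining axiom (\ref{formula:quant-free-defz}) for $\PowSeqZ'$ and the bit-definitions (\ref{equation:F_varphiz}) for $F_{\varphi(z),t}$), and that it extends $V\#L$ because every axiom of $V\#L$ is a theorem of $\overline{V\#L}$ (the $VTC^0$ axioms come along because $\overline{VTC^0} \subseteq \overline{V\#L}$, and the axiom $PS_\Z$ follows from the existence of $\PowSeqZ'$).

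For conservativity, I would introduce a chain of intermediate theories $\mathcal{T}^0 \subseteq \mathcal{T}^1 \subseteq \cdots$ with $\mathcal{T}^0 = V\#L$, where $\mathcal{T}^{i+1}$ adds the next function $F_{i+1}$ of the form $F_{\varphi(z),t}$ with $\varphi$ a quantifier-free $\mathcal{L}^i$-formula, together with its defining axiom (\ref{equation:F_varphiz}). Then $\overline{V\#L} = \bigcup_{i \geq 0} \mathcal{T}^i$, and it suffices (by Theorem 5.27 of \cite{CookNguyen}, extension by definition, or alternatively by Corollary 5.39) to prove that each added $F_{i+1}$ is $\Sigma_0^B$-definable in $\mathcal{T}^i$, which requires showing that each $\mathcal{T}^i$ proves $\Sigma_0^B(\mathcal{L}^i)\comp$.

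The induction is the main body of the proof. For the base case, $V\#L$ proves $\Sigma_0^B(\mathcal{L}_{FTC^0})\comp$ since it extends $VTC^0$ (by Lemma 9.23 of \cite{CookNguyen}); formula (\ref{formula:implicit-delta-PSz}) gives a $\Sigma_0^B(\mathcal{L}_{FTC^0})$ defining axiom for $\PowSeqZ$ (and $\PowSeqZ'$), and by Lemma \ref{lemma:PowSeqz_definable} and Lemma \ref{lemma:Powseq*z_definable} both $\PowSeqZ$ and $\PowSeqZstar$ are $\Sigma_0^B(\mathcal{L}_{FTC^0})$-definable in $V\#L$; invoking Lemma 9.22 of \cite{CookNguyen} gives equation (\ref{equation:165}) for $\PowSeqZ$ in $V\#L(\PowSeqZ,\PowSeqZstar)$. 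Then Theorem 8.15 of \cite{CookNguyen} upgrades this to $\Sigma_0^B(\mathcal{L}_{FTC^0} \cup \{\PowSeqZ\})\comp$ in $V\#L$. For the inductive step, assuming $\mathcal{T}^i$ proves $\Sigma_0^B(\mathcal{L}^{i-1})\comp$ and $\Sigma_0^B$-defines $F_i$ and $F_i^*$ while satisfying (\ref{equation:165}) for them, Theorem 8.15 yields $\Sigma_0^B(\mathcal{L}^i)\comp$ in $\mathcal{T}^i$, whereupon Lemma 9.22 certifies that $F_{i+1}$ (whose defining formula $\varphi$ is open in $\mathcal{L}^i$) is definable in $\mathcal{T}^i$, closing the induction.

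The main obstacle, and the only place where the argument is not mechanical translation of the $\parityL$ proof, is in verifying the base case: we must be comfortable that our integer-encoding machinery ($+_\Z$, $\times_\Z$, $\Sumz$, $\ProdZ$, $ID_\Z$) sits inside $FTC^0$ and that the graph $\delta_{\PowSeqZ}$ really is $\Sigma_0^B(\mathcal{L}_{FTC^0})$, so that Lemmas \ref{lemma:PowSeqz_definable} and \ref{lemma:Powseq*z_definable} deliver exactly the hypotheses needed to apply Lemma 9.22 and Theorem 8.15 of \cite{CookNguyen}. Once this is in hand, the induction proceeds exactly as in the proof of Theorem \ref{theorem:overline-UCE-2L}, and no new conceptual ingredient is required.
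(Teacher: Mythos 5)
Your proposal is correct and follows exactly the route the paper takes: the paper's own proof of this theorem is simply the remark that the argument is the same as for Theorem \ref{theorem:overline-UCE-2L}, with the substitutions $V^0(2) \mapsto VTC^0$, $\PowSeqtwo' \mapsto \PowSeqZ'$, which is precisely the translation you carry out (including the chain of theories $\mathcal{T}^i$, Lemma 9.22, and Theorem 8.15 of \cite{CookNguyen}). Your added attention to verifying that the integer-arithmetic machinery sits in $FTC^0$ for the base case is a reasonable elaboration of the same argument, not a departure from it.
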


The proof is the same as for Theorem
\ref{theorem:overline-UCE-2L} (page
\pageref{theorem:overline-UCE-2L}).

\subsection{Provably total functions of $V\#L$}
\label{section:provably-total-VnL}

These results follow directly from \cite{CookNguyen}.  We restate them
here for convenience.

\begin{claim}
The theory $\overline{V\#L}$ proves the axiom schemes
$\Sigma_0^B(\mathcal{L}_{FTC^0}) \comp$, \\
$\Sigma_0^B(\mathcal{L}_{FTC^0}) \ind$, and
$\Sigma_0^B(\mathcal{L}_{FTC^0}) \minaxiom$.
\end{claim}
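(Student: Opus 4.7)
The plan is to mirror the proof of Lemma 9.23 in Cook--Nguyen (the $\parityL$ analog cited earlier in the excerpt), exploiting the fact that $\overline{V\#L}$ is a conservative extension of $V\#L \supseteq VTC^0$ and its universal counterpart $\overline{VTC^0}$. The key observation is that everything needed is already available at the base theory level, and adding new functions and axioms on top (for $\PowSeqZ'$ and the $F_{\varphi,t}$ symbols) cannot destroy provability of schemes that are already theorems.

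First I would invoke Lemma 9.32 (or the appropriate analog from Section 9C of \cite{CookNguyen}), which establishes that $\overline{VTC^0}$ already proves $\Sigma_0^B(\mathcal{L}_{FTC^0})\comp$. Since $\overline{V\#L}$ is defined by adding the quantifier-free defining axiom (\ref{formula:quant-free-defz}) for $\PowSeqZ'$ and the bit-definitions (\ref{equation:F_varphiz}) for the $F_{\varphi,t}$ to $\overline{VTC^0}$, every axiom of $\overline{VTC^0}$ is an axiom of $\overline{V\#L}$. Therefore $\overline{V\#L}$ proves $\Sigma_0^B(\mathcal{L}_{FTC^0})\comp$ immediately.

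Next I would pass from $\comp$ to $\ind$ and $\minaxiom$ by a general two-sorted logic result: over $V^0$ (and hence over any extension thereof), the $\Phi\text{-}\comp$ axiom scheme implies both $\Phi\text{-}\ind$ and $\Phi\text{-}\minaxiom$ whenever $\Phi$ is closed under the relevant Boolean operations. Concretely, I would cite the corresponding general lemmas from Chapter 5 of \cite{CookNguyen} (e.g., the results that $\Sigma_0^B(\mathcal{L})\comp$ implies $\Sigma_0^B(\mathcal{L})\ind$ and $\Sigma_0^B(\mathcal{L})\minaxiom$ in any theory extending $V^0$). Applying this with $\mathcal{L} = \mathcal{L}_{FTC^0}$ in the theory $\overline{V\#L}$ yields the two remaining schemes.

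There is no real obstacle here: the content of the claim is packaged entirely by the universal-conservative-extension machinery established in the previous subsection and the general results from \cite{CookNguyen}. The only thing one must verify carefully is that $\Sigma_0^B(\mathcal{L}_{FTC^0})$ formulas are preserved under the extensions used to build $\overline{V\#L}$, which follows because $\mathcal{L}_{FTC^0} \subseteq \mathcal{L}_{F\#L}$ and adding more function symbols never shrinks the class of formulas one can write down, while proofs in $\overline{VTC^0}$ remain valid verbatim in the larger theory. The proof can therefore be given in one short paragraph citing Lemma 9.32 and the $\comp \Rightarrow \ind, \minaxiom$ result, exactly as Lemma 9.23 does in the $\parityL$ setting.
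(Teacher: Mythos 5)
Your proposal is correct and matches what the paper does: the paper simply defers to the corresponding general results of \cite{CookNguyen} (the analog of Lemma 9.23 at the $VTC^0$ base), and your argument — inherit $\Sigma_0^B(\mathcal{L}_{FTC^0})\comp$ from $\overline{VTC^0}$, whose axioms are all included in $\overline{V\#L}$, and then obtain $\ind$ and $\minaxiom$ via the Chapter 5 result that $\comp$ yields both over any extension of $V^0$ — is exactly that machinery spelled out. No gap to report.
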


\begin{claim}
\begin{enumerate}
  \renewcommand{\labelenumi}{(\alph{enumi})}
  \item A string function is in $F\#L$ if and only if it is represented
    by a string function symbol in $\mathcal{L}_{F\#L}$.
   \item A relation is in $\#L$ if and only if it is represented by
    an open formula of $\mathcal{L}_{F\#L}$ if and only if it is
    represented by a $\Sigma_0^B(\mathcal{L}_{F\#L})$ formula.
 \end{enumerate}
\end{claim}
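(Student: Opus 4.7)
The approach is to mirror Cook and Nguyen's Lemma~9.24 (which was cited wholesale for the analogous $\parityL$ claim earlier in the paper), with $\#L$ replacing $\parityL$ and $\PowSeqZ'$ replacing $\PowSeqtwo'$. I would prove the two parts in order, with part~(b) dropping out formally from~(a).

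For the ``symbol $\Rightarrow$ in $F\#L$'' direction of~(a), I would induct on the levels $\mathcal{L}_{F\#L}^i$. The base is immediate: each symbol in $\mathcal{L}_{FTC^0}$ has its (bit) graph in $TC^0 \subseteq \#L$, and $\PowSeqZ'$, which by its defining axiom~(\ref{formula:quant-free-defz}) takes the same value as $\PowSeqZ$, is in $F\#L$ by Claim~\ref{claim:ZMPcomplnL}. At the inductive step, the defining axiom~(\ref{equation:F_varphiz}) exhibits the bit graph of $F_{\varphi(z),t}$ as an open formula over functions already in $F\#L$, so $F_{\varphi(z),t}$ is $AC^0$-reducible to $\#L$ — which is $F\#L$ in the paper's sense (i.e., the function class of $AC^0(\#L) = DET$). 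For the converse, given $F \in F\#L$ with bit graph $\psi$, Claim~\ref{claim:ZMPcomplnL} together with Definition~\ref{def:AC^0-reducibility} provides a $\Sigma_0^B$-formula over $\mathcal{L}_{FTC^0} \cup \{\PowSeqZ'\}$ equivalent to $\psi$; then iterated application of the $\Sigma_0^B$-Transformation Lemma (5.40 of \cite{CookNguyen}) replaces each bounded string-comprehension subformula by a fresh symbol $F_{\varphi',t'}$, producing an equivalent open formula $\varphi^\ast$ over $\mathcal{L}_{F\#L}$ with $F = F_{\varphi^\ast,t}$.

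Part~(b) then follows formally. A relation in $\#L$ has its characteristic function in $F\#L$, so by~(a) it is represented by a symbol of $\mathcal{L}_{F\#L}$ and hence by an open formula; every open formula is trivially $\Sigma_0^B(\mathcal{L}_{F\#L})$; and conversely every $\Sigma_0^B(\mathcal{L}_{F\#L})$ formula collapses to an open formula by the same elimination procedure used in the hard direction of~(a), because bounded quantification whose inner predicates are $\mathcal{L}_{F\#L}$-symbols can be absorbed into a fresh $F_{\varphi,t}$ while staying inside $AC^0(\#L)$.

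The main obstacle is precisely this $\Sigma_0^B$-to-open elimination in the forward direction of~(a) and in the $\Sigma_0^B \Rightarrow$ open step of~(b). This is the only place where the $AC^0$-closure built into $\mathcal{L}_{F\#L}$ via the $F_{\varphi,t}$-constructor is genuinely exercised, and one must verify that each such introduction does not push the represented function class past $AC^0(\#L)$ — the notational gap between $F\#L$ and $FDET$ is invisible in the proof only because the construction of $\mathcal{L}_{F\#L}$ has been matched to this closure. Given the universal-conservative-extension and transformation-lemma machinery of Chapter~9 of \cite{CookNguyen}, the verification is routine, but it is the conceptual core of the statement.
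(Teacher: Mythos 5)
Your proposal is correct and takes essentially the same route as the paper: the paper offers no argument of its own here beyond the remark that these results ``follow directly from \cite{CookNguyen}'' (citing Lemma 9.24 verbatim for the $\parityL$ analogue), and your proof is exactly a reconstruction of that lemma's argument with $\PowSeqZ'$ in place of $\PowSeqtwo'$, $\mathcal{L}_{FTC^0}$ as the base language, and the $F_{\varphi,t}$-constructor absorbing the $\Sigma_0^B$-to-open collapse. Your closing caveat about the gap between $F\#L$ and the function class of $DET$ is well taken, since the paper silently makes the same identification by defining $DET$ as the $AC^0$-closure of $\#L$.
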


\begin{corollary}
Every $\Sigma_1^B(\mathcal{L}_{F\#L})$ formula $\varphi^+$ is
equivalent in $\overline{V\#L}$ to a $\Sigma_1^B(\mathcal{L}_A^2)$
formula $\varphi$.
\end{corollary}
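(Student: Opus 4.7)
The plan is to follow exactly the pattern established in the earlier corollary for $V\parityL$ (which appeals to Corollary 9.25 of \cite{CookNguyen}), since the language $\mathcal{L}_{F\#L}$ is constructed by the same stratified procedure as $\mathcal{L}_{F\parityL}$, and $\overline{V\#L}$ is a universal conservative extension of $V\#L$ by Theorem \ref{theorem:overline-UCE-nL}. The goal is to eliminate, one layer at a time, each non-$\mathcal{L}_A^2$ function symbol occurring in a given $\Sigma_1^B(\mathcal{L}_{F\#L})$ formula $\varphi^+$, keeping the result $\Sigma_1^B$ and provably equivalent in $\overline{V\#L}$.

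First I would establish, by induction on $i$, that every function symbol in $\mathcal{L}_{F\#L}^i$ is $\Sigma_1^B(\mathcal{L}_A^2)$-definable in $\overline{V\#L}$. The base case combines the $AC^0$ (hence $\Sigma_0^B(\mathcal{L}_A^2)$) definability of the symbols of $\mathcal{L}_{FTC^0}$ with Lemma \ref{lemma:PowSeqz_definable}, which gives $\Sigma_1^B(\mathcal{L}_A^2)$-definability of $\PowSeqZ$ (and therefore of $\PowSeqZ'$ via the provable identity $\PowSeqZ=\PowSeqZ'$). The inductive step handles a new symbol $F_{\varphi(z),t}$ bit-defined by (\ref{equation:F_varphiz}): substituting, for each previously introduced symbol occurring in $\varphi(z)$, its $\Sigma_1^B(\mathcal{L}_A^2)$ defining formula (using fresh existentially quantified string variables to witness the function values), produces a $\Sigma_1^B(\mathcal{L}_A^2)$ bit-graph for $F_{\varphi(z),t}$; uniqueness of the witnessing strings, which is needed so that the substitution preserves equivalence, is guaranteed by the fact that $\overline{V\#L}$ is a universal extension and hence proves the defining axioms for all its functions.

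Second, given an arbitrary $\Sigma_1^B(\mathcal{L}_{F\#L})$ formula $\varphi^+ \equiv \exists \vec Y {<} \vec t\, \psi^+(\vec x,\vec X,\vec Y)$, I would perform the same substitution procedure on $\psi^+$: every occurrence of a function symbol $F \in \mathcal{L}_{F\#L}\setminus \mathcal{L}_A^2$ is replaced by an existentially quantified witness $W$ together with the $\Sigma_1^B(\mathcal{L}_A^2)$ definition of $F$ applied to $W$. Pulling the introduced existential string quantifiers out to the front (using the standard bounded-quantifier manipulations from Chapter 4 of \cite{CookNguyen}) yields a $\Sigma_1^B(\mathcal{L}_A^2)$ formula $\varphi$, and the provable equivalence $\overline{V\#L} \vdash \varphi^+ \leftrightarrow \varphi$ follows from the equivalences used in each replacement step.

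The main obstacle is purely bookkeeping: when the $\varphi(z)$ used to bit-define some $F_{\varphi(z),t}$ itself contains functions introduced at earlier stages, the replacement must be iterated through the finite chain of stages $\mathcal{L}_{F\#L}^0 \subseteq \mathcal{L}_{F\#L}^1 \subseteq \cdots$ that appear in $\varphi^+$. Termination is guaranteed because every $\mathcal{L}_{F\#L}$-term lies in some $\mathcal{L}_{F\#L}^i$ for a finite $i$, and the $\Sigma_1^B$ form is preserved at every step by the prenexation lemmas of \cite{CookNguyen}. No new ideas beyond those already deployed for $V\parityL$ are required.
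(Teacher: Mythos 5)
Your plan is the standard symbol-elimination argument, which is in substance the proof of Corollary 9.25 of \cite{CookNguyen}; the paper itself does not reprove this corollary for $\overline{V\#L}$ but simply imports the Cook--Nguyen result (as it does explicitly in the $\parityL$ section). However, your base case as stated is wrong: the symbols of $\mathcal{L}_{FTC^0}$ are \emph{not} ``$AC^0$, hence $\Sigma_0^B(\mathcal{L}_A^2)$-definable.'' The language $\mathcal{L}_{FTC^0}$ contains a symbol for every string function of $FTC^0$, in particular $\numones$ and everything built from it, and these are not $AC^0$ functions (counting is provably outside $AC^0$), so they admit no $\Sigma_0^B(\mathcal{L}_A^2)$ bit-definitions. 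What your induction actually needs, and what is true, is that every $\mathcal{L}_{FTC^0}$ symbol is $\Sigma_1^B(\mathcal{L}_A^2)$-definable in $\overline{VTC^0}$ (hence in $\overline{V\#L}$); this is exactly Theorem 9.33(b) of \cite{CookNguyen} (the $VTC^0$ analogue of the corollary being proved), which the paper already invokes when converting (\ref{formula:implicit-delta-PSz}) into the axiom $PS_\Z$ (\ref{formula:PSz}). With that substitution the base case goes through: $\mathcal{L}_{FTC^0}$ symbols via Theorem 9.33(b), and $\PowSeqZ'$ via Lemma \ref{lemma:PowSeqz_definable} together with the provable identity $\PowSeqZ=\PowSeqZ'$.

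A second, smaller slip: provable uniqueness of the witnessing strings is not ``guaranteed by the fact that $\overline{V\#L}$ is a universal extension'' --- universality is irrelevant to uniqueness. What gives it is that each new symbol $F_{\varphi(z),t}$ is introduced by a bit-definition (\ref{equation:F_varphiz}), so extensionality together with the comprehension available in $\overline{V\#L}$ determines its value, and for $\PowSeqZ$ uniqueness is proved separately by the bit-induction argument of Lemma \ref{lemma:PowSeqz_definable} (the analogue of Lemma \ref{lemma:PowSeq_definable}). You also need, and do not mention, that the introduced existential string quantifiers can be bounded by $\mathcal{L}_A^2$ terms so that the result stays $\Sigma_1^B$; this is fine because every symbol comes with a p-bounding term $t(\vec x,\vec X)$, but it belongs in the argument. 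With these repairs your proof is correct, and it coincides with the argument behind the Cook--Nguyen corollary that the paper cites rather than reproves.
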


\begin{corollary} \label{cor:definability-nL}
\begin{enumerate}
    \renewcommand{\labelenumi}{(\alph{enumi})}
    \item A function is in $F\#L$ iff it is
      $\Sigma_1^B(\mathcal{L}_{F\#L})$-definable in
      $\overline{V\#L}$ 
      iff it is $\Sigma_1^B(\mathcal{L}_{A}^2)$-definable in
      $\overline{V\#L}$.
    \item A relation is in $\#L$ iff it is
      $\Delta_1^B(\mathcal{L}_{F\#L})$-definable in
      $\overline{V\#L}$ iff it is
      $\Delta_1^B(\mathcal{L}_A^2)$-definable in $\overline{V\#L}$.
\end{enumerate}
\end{corollary}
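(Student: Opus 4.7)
The plan is to mirror the template of Theorem 9.10 and Corollary 9.11 of \cite{CookNguyen}, using the three inputs already assembled: the preceding claim (every $F\#L$ string function has a symbol in $\mathcal{L}_{F\#L}$, and every $\#L$ relation is represented by an open $\mathcal{L}_{F\#L}$-formula), the preceding corollary (every $\Sigma_1^B(\mathcal{L}_{F\#L})$ formula is provably equivalent in $\overline{V\#L}$ to a $\Sigma_1^B(\mathcal{L}_A^2)$ formula), and Theorem \ref{theorem:overline-UCE-nL} (conservativity of $\overline{V\#L}$ over $V\#L$). The second ``iff'' in each part of the corollary is immediate from the preceding corollary applied to both the graph formula and the uniqueness statement, so it suffices to establish the equivalence with the $\mathcal{L}_{F\#L}$ versions.

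For part (a), the forward direction would go as follows. Let $F\in F\#L$. By the preceding claim there is a symbol $F\in \mathcal{L}_{F\#L}$; its defining axiom has the form $F(\vec x,\vec X)(z)\leftrightarrow \varphi(z,\vec x,\vec X)$ with $\varphi$ open, which directly gives an open (hence $\Sigma_1^B(\mathcal{L}_{F\#L})$) definition of the graph. Existence and uniqueness of $Y=F(\vec x,\vec X)$ follow trivially in $\overline{V\#L}$ because $\overline{V\#L}$ is a universal theory containing $F$ and the usual extensionality/comprehension machinery of $\overline{VTC^0}$. For the reverse direction, suppose $F$ is $\Sigma_1^B(\mathcal{L}_{F\#L})$-definable in $\overline{V\#L}$, say by $\exists \vec Z\,\psi(\vec Z,\vec x,\vec X,Y)$ with $\psi$ open. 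Since $\overline{V\#L}$ is universal, Herbrand's theorem supplies $\mathcal{L}_{F\#L}$-terms $\vec t(\vec x,\vec X)$ with $\overline{V\#L}\vdash \psi(\vec t,\vec x,\vec X,Y)$ whenever the existential holds, and the open formula $\psi(\vec t(\vec x,\vec X),\vec x,\vec X,Y)$ still defines the graph of $F$. The terms $\vec t$ and the outermost witness for $Y$ are built from symbols of $\mathcal{L}_{F\#L}$, each of which represents an $F\#L$ function by the preceding claim; composition within $F\#L$ (available because $\#L$ is $AC^0$-closed in the sense needed) places $F$ itself in $F\#L$.

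For part (b), the plan is to reduce $\Delta_1^B$-definability to part (a) via the characteristic function. A relation $R$ is $\Delta_1^B(\mathcal{L}_{F\#L})$-definable in $\overline{V\#L}$ iff both $R$ and $\neg R$ are $\Sigma_1^B(\mathcal{L}_{F\#L})$-definable, iff the characteristic function $f_R$ is $\Sigma_1^B(\mathcal{L}_{F\#L})$-definable; by part (a) this is equivalent to $f_R\in F\#L$, which by the definition of the relation class (Definitions \ref{def:FC} and \ref{def:RDET}) is equivalent to $R$ being in the class of relations associated with $\#L$. The $\mathcal{L}_A^2$ versions follow by applying the preceding corollary to the two $\Sigma_1^B$ formulas witnessing $\Delta_1^B$-definability.

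The only genuine obstacle is the Herbrand-witnessing step in the reverse direction of (a): one must check that the witness terms supplied by Herbrand's theorem indeed lie in $\mathcal{L}_{F\#L}$ (rather than in some extension added during the proof) and that $F\#L$ is closed under the composition of $\mathcal{L}_{F\#L}$-terms. Both facts are packaged into the inductive construction of $\mathcal{L}_{F\#L}$ via the operators $F_{\varphi,t}$ and were already used implicitly in the proof of Theorem \ref{theorem:overline-UCE-nL}; the verification here is the same bookkeeping as in the analogous argument for $V\parityL$ (Corollary \ref{cor:definability-2L}) and requires no new ideas.
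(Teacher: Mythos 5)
Your proposal is correct and follows essentially the same route as the paper, which proves nothing locally but simply invokes the general machinery of Chapter 9 of \cite{CookNguyen} (the analog of Corollary 9.26, via the preceding claims, the $\Sigma_1^B$-translation corollary, and Theorem \ref{theorem:overline-UCE-nL}); your sketch is just an unpacking of that cited argument, including the standard Herbrand-witnessing step over the universal theory $\overline{V\#L}$. The only detail worth noting is that Herbrand's theorem yields a disjunction of term instances, which is collapsed to a single defining term by definition-by-cases using the $F_{\varphi,t}$ closure of $\mathcal{L}_{F\#L}$ --- exactly the bookkeeping you already point to.
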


%% The next two theorems follow from corollary \ref{cor:definability}
%% and theorem \ref{theorem:overline-UCE-2L}.

\begin{theorem} \label{theorem:provablytotalFnL}
A function is provably total ($\Sigma_1^1$-definable) in $V\#L$ iff
it is in $F\#L$.
\end{theorem}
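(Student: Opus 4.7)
The plan is to mirror the proof of Theorem \ref{theorem:provablytotalF2L} verbatim, since all the ingredients for $V\parityL$ have direct analogs for $V\#L$. Specifically, the proof combines Corollary \ref{cor:definability} (the general bridge between semantic and provable $\Sigma_0^B$-definability) with Theorem \ref{theorem:overline-UCE-nL} (the conservative extension result for $\overline{V\#L}$), together with Corollary \ref{cor:definability-nL}, which is the $\#L$-version of the characterization already proved for $\parityL$ in Corollary \ref{cor:definability-2L}.

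For the ``if'' direction, let $F \in F\#L$. By Corollary \ref{cor:definability-nL}(a), $F$ is $\Sigma_1^B(\mathcal{L}_A^2)$-definable in $\overline{V\#L}$ via some formula $\varphi(\vec x, \vec X, Y)$ with $\overline{V\#L} \vdash \forall \vec x\, \forall \vec X\, \exists! Y\, \varphi(\vec x, \vec X, Y)$. Because $\varphi$ already lives in the base vocabulary $\mathcal{L}_A^2$ and $\overline{V\#L}$ is a conservative extension of $V\#L$ (Theorem \ref{theorem:overline-UCE-nL}), this sentence is a theorem of $V\#L$ itself, witnessing that $F$ is $\Sigma_1^B$-definable in $V\#L$, hence provably total there. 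The $\Sigma_1^1$-definability required in the statement is immediate since $\Sigma_1^B \subseteq \Sigma_1^1$.

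For the ``only if'' direction, suppose $F$ is $\Sigma_1^1$-definable in $V\#L$ by $\psi$, meaning $V\#L \vdash \forall \vec x\, \forall \vec X\, \exists! Y\, \psi(\vec x, \vec X, Y)$. As in the $\parityL$ case, one reduces the $\Sigma_1^1$ formula to a $\Sigma_1^B(\mathcal{L}_A^2)$ formula defining the same function, using a polynomial bound on $|F|$ (which is available since provably total functions in $V\#L$ are $p$-bounded by general results from Chapter 9 of \cite{CookNguyen}, because $V\#L \supseteq VTC^0 \supseteq V^0$ proves the relevant comprehension scheme). Conservativity again transfers this to $\overline{V\#L}$, and then Corollary \ref{cor:definability-nL}(a) reads backwards to place $F \in F\#L$.

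The main obstacle, as always, is the ``only if'' direction, specifically justifying the reduction from $\Sigma_1^1$- to $\Sigma_1^B$-definability and confirming that Corollary \ref{cor:definability-nL} is indeed symmetric in the sense needed. Both points are handled by the general machinery of Section 9 of \cite{CookNguyen}: the first by the fact that $V\#L$ extends $V^0$ and proves $\Sigma_0^B(\mathcal{L}_{FTC^0})\comp$, and the second because the language $\mathcal{L}_{F\#L}$ was built precisely to ensure that every open formula over it (and, by collapse via comprehension, every $\Sigma_0^B(\mathcal{L}_{F\#L})$ formula) defines a $\#L$ relation. Once these are cited, the theorem follows without further computation, paralleling Theorem 9.10 of \cite{CookNguyen}.
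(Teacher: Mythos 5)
Your proposal matches the paper's treatment: the paper gives no independent proof but, exactly as you do, derives the theorem from Theorem \ref{theorem:overline-UCE-nL}, Corollary \ref{cor:definability-nL}, and the general Chapter~9 machinery of \cite{CookNguyen} (the analogue of their Theorem 9.10 and Corollary 9.11), just as was done for Theorem \ref{theorem:provablytotalF2L}. One small imprecision: the reduction from $\Sigma_1^1$- to $\Sigma_1^B$-definability in the ``only if'' direction is really supplied by the Herbrand/witnessing argument for the universal theory $\overline{V\#L}$ in that cited machinery, not merely by $p$-boundedness, but since you defer to those general results this does not change the substance.
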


\begin{theorem}
  A relation is in $\#L$ iff it is
  $\Delta_1^B$-definable in $V\#L$ iff it is $\Delta_1^1$-definable
  in $V\#L$.
\end{theorem}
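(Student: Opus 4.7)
The plan is to derive this theorem by packaging together the results already established in this section: Corollary \ref{cor:definability-nL}(b), the conservative extension Theorem \ref{theorem:overline-UCE-nL}, and Theorem \ref{theorem:provablytotalFnL}. The structure mirrors exactly the Cook--Nguyen proof of the analogous statement (Corollary 9.11 of \cite{CookNguyen}) and the parallel result for $\parityL$ immediately preceding this theorem.

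First I would address the chain of implications. To show $R \in \#L$ implies $R$ is $\Delta_1^B$-definable in $V\#L$: Corollary \ref{cor:definability-nL}(b) yields that $R$ is $\Delta_1^B(\mathcal{L}_A^2)$-definable in $\overline{V\#L}$, so there exist $\Sigma_1^B(\mathcal{L}_A^2)$ formulas $\varphi(\vec x,\vec X)$ and $\psi(\vec x,\vec X)$ such that $\overline{V\#L}$ proves $\varphi \leftrightarrow \neg\psi$ and both represent $R$. Since $\overline{V\#L}$ is a conservative extension of $V\#L$ (Theorem \ref{theorem:overline-UCE-nL}) and the equivalence $\varphi \leftrightarrow \neg\psi$ is a sentence in the base language $\mathcal{L}_A^2$, conservativity transfers this equivalence back to $V\#L$, establishing $\Delta_1^B$-definability of $R$ in $V\#L$.

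Next, $\Delta_1^B$-definability trivially implies $\Delta_1^1$-definability since every $\Sigma_1^B$ formula is a $\Sigma_1^1$ formula. For the final link ($\Delta_1^1$-definability in $V\#L$ implies $R \in \#L$), I would pass through the characteristic function: if $R$ is $\Delta_1^1$-definable in $V\#L$, then its characteristic function $f_R$ (Definition \ref{def:RDET}) is provably total in $V\#L$, so by Theorem \ref{theorem:provablytotalFnL} we have $f_R \in F\#L$, which by Definition \ref{def:FC} places $R$ in $\#L$.

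The main obstacle, such as it is, is purely bookkeeping: one must verify that the $\Delta_1^B$-definability in $\overline{V\#L}$ provided by Corollary \ref{cor:definability-nL}(b) really does descend through the conservative extension when the defining formulas are $\mathcal{L}_A^2$-formulas, and that the characteristic-function passage is legitimate in light of the distinction between the function class $F\#L$ and the relation class referred to as ``$\#L$'' in the statement. Both points are standard applications of the Cook--Nguyen framework already invoked throughout Section \ref{section:DET} and require no new ideas beyond those deployed in the analogous $\parityL$ proof of Section \ref{section:provably-total-V2L}.
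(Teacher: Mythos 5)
Your argument is correct and takes essentially the same route as the paper, which offers no independent proof but asserts the theorem follows directly from the Cook--Nguyen machinery---precisely the package you assemble: Corollary \ref{cor:definability-nL}(b) plus conservativity of $\overline{V\#L}$ (Theorem \ref{theorem:overline-UCE-nL}) for one direction, and the characteristic-function passage through Theorem \ref{theorem:provablytotalFnL} for the other, mirroring Theorem 9.10 and Corollary 9.11 of \cite{CookNguyen} as in the parallel $\parityL$ case.
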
 %% analot of 9.11

%% It suffices to show that a given function is $\Sigma_1^B$-definable
%% in the theory.  Then by the general result in \cite{CookNguyen}, it
%% is provable. 

% Defines the theory for #L, logspace counting.

\section{Future work} \label{section:future-work}

Due to the time constraints of this project, there are several results
omitted above.  These require work beyond 
what is contained in this paper.

The proof for Theorem \ref{theorem:2L-closed} (page
\pageref{theorem:2L-closed}), stating the $AC^0$-closure of 
$\parityL$, is omitted above, in lieu of which several references are
given for papers proving the same closure, or a stronger version.
For completeness, this theorem should be proven within the framework
of Chapter 9, by use of Theorem 9.7.

Similarly, this work is incomplete without a proof that the closure of
$\#L$ is the class $DET$.  On page \pageref{def:DET} the
class $DET$ is simply defined as this closure; the exposition would be
more self-contained if this characterization of $AC^0(\#L)$ were
proven.  (It is proven or implied by results in  \cite{All04},
\cite{MV}, \cite{AllOg}, and others.)

By using the notation $DET$ for the
closure $AC^0(\#L)$, we implicitly 
rely upon the fact that the problem of computing the determinant of an
integer-valued matrix is complete for $\#L$.  The notation developed
above is sufficient for a proof that the integer
determinant can be captured by reasoning in $V\#L$.  This proof would
likely formalize Berkowitz's method, which provides a reduction from
integer determinant to integer matrix powering \cite{Berkowitz}.

There are a number of algebraic problems which  \cite{Cook},
\cite{BDHM}, \cite{BvGH}, and \cite{Berkowitz} prove are
$NC^1$-reducible to each other.  These include:
 integer determinant,
 matix powering,
 iterated matrix product,
 computing the coefficients of characteristic polynomials,
 rank computation,
 choice of a linearly independent subset from a set of vectors,
 computing the basis of the kernel of a matrix, and
 solving a system of linear equations.
%% Want to show that these functions are definable, and we can prove
%% ``natural'' properties about them.
Many of these reductions seem to be simple enough to be formalized as
$AC^0$-reductions, and intuitively it seems that all of these problems
should also be complete for $\parityL$ and $\#L$ under
$AC^0$-reductions. Since $\parityL$ concerns elements over a field, 
finding matrix inverses can be added to the list (and similarly for
every $MOD_pL$ class for $p$ prime).

Formalizing these reductions seems an arduous task, but a shortcut is
possible.  In \cite{SC}, the authors construct formal theories for
linear algebra over three sorts: indices ($\N$), field elements, and
matrices.  
Several basic theorems of linear algebra, including
many of the problems listed 
above, are provable in these theories.
By interpreting these 3-sorted theories into the 2-sorted
theories constructed in this paper, we can use convenient results
without having to prove them again in a different framework.

% Future work.

%%  My bibliography.  Useful papers.  Not in Bibtex.

\end{document}